\providecommand{\U}[1]{\protect\rule{.1in}{.1in}}
\newtheorem{theorem}{Theorem}
\newtheorem{conjecture}[theorem]{Conjecture}
\newtheorem{definition}[theorem]{Definition}
\newtheorem{lemma}[theorem]{Lemma}
\newtheorem{notation}[theorem]{Notation}
\newtheorem{proposition}[theorem]{Proposition}
\newtheorem{remark}[theorem]{Remark}
\newenvironment{proof}[1][Proof]{\noindent\textbf{#1.} }{\ \rule{0.5em}{0.5em}}
\begin{document}

\title{The large-$N$ limit for two-dimensional Yang--Mills theory}
\author{Brian C. Hall\thanks{Supported in part by NSF Grant DMS-1301534.}\\University of Notre Dame\\Department of Mathematics\\Notre Dame, IN 46556 USA\\bhall@nd.edu}
\date{}
\maketitle

\begin{abstract}
The analysis of the large-$N$ limit of $U(N)$ Yang--Mills theory on a surface
proceeds in two stages: the analysis of the Wilson loop functional for a
simple closed curve and the reduction of more general loops to a simple closed
curve. In the case of the 2-sphere, the first stage has been treated
rigorously in recent work of Dahlqvist and Norris, which shows that the
large-$N$ limit of the Wilson loop functional for a simple closed curve in
$S^{2}$ exists and that the associated variance goes to zero.

We give a rigorous treatment of the second stage of analysis in the case of
the 2-sphere. Dahlqvist and Norris independently performed such an analysis,
using a similar but not identical method. Specifically, we establish the
existence of the limit and the vanishing of the variance for arbitrary loops
with (a finite number of) simple crossings. The proof is based on the
Makeenko--Migdal equation for the Yang--Mills measure on surfaces, as
established rigorously by Driver, Gabriel, Hall, and Kemp, together with an
explicit procedure for reducing a general loop in $S^{2}$ to a simple closed
curve. The methods used here also give a new proof of these results in the
plane case, as a variant of the methods used by L\'{e}vy.

We also consider loops on an arbitrary surface $\Sigma$. We put forth two
natural conjectures about the behavior of Wilson loop functionals for
topologically trivial simple closed curves in $\Sigma.$ Under the weaker of
the conjectures, we establish the existence of the limit and the vanishing of
the variance for topologically trivial loops with simple crossings that
satisfy a \textquotedblleft smallness\textquotedblright\ assumption. Under the
stronger of the conjectures, we establish the same result without the
smallness assumption.

\end{abstract}
\setcounter{tocdepth}{2}

\tableofcontents

\setcounter{tocdepth}{2}

\section{Introduction and main results}

\subsection{The Makeenko--Migdal equation in two dimensions}

Let us fix a connected compact Lie group $K$ together with an Ad-invariant
inner product on its Lie algebra, $\mathfrak{k}.$ The path integral for
Euclidean Yang--Mills theory over a manifold $M$ is supposed to describe a
probability measure on the space of connections for a principal $K$-bundle
over $M.$ One of the main objects of study in such a theory is the
\textit{Wilson loop functional}, namely the expectation value of the trace (in
some fixed representation of $K$) of the holonomy of the connection around a
loop. The Makeenko--Migdal equation is an identity for the variation of Wilson
loop functionals with respect to a variation in the loop. The original version
of this equation, in any number of dimensions, was proposed by Makeenko and
Migdal in \cite{MM}. A version specific to the two-dimensional case was then
developed by Kazakov and Kostov in \cite[Eq. (24)]{KK}. (See also \cite[Eq.
(9)]{K} and \cite[Eq. (6.4)]{GG}.)

A special feature of the two-dimensional Yang--Mills measure is its invariance
under area-preserving diffeomorphisms. Suppose we fix the topological type of
a loop $L$ in a surface $\Sigma$ and consider the faces of $L,$ that is, the
connected components of the complement of $L$ in $\Sigma.$ Then the Wilson
loop functional depends only on the areas of the faces of $L.$ Let us now take
$K=U(N)$ with the inner product on the Lie algebra $u(N)$ given by the scaled
Hilbert--Schmidt inner product,
\begin{equation}
\left\langle X,Y\right\rangle \coloneqq N\mathrm{Trace}(X^{\ast}Y).
\label{innerProduct}%
\end{equation}
It is then convenient to express the Wilson loop functionals in terms of the
normalized trace,%
\begin{equation}
\mathrm{tr}(X)\coloneqq\frac{1}{N}\mathrm{Trace}(X). \label{normalizedTrace}%
\end{equation}
%

\begin{figure}[ptb]%
\centering
\includegraphics[
height=1.9839in,
width=1.9873in
]%
{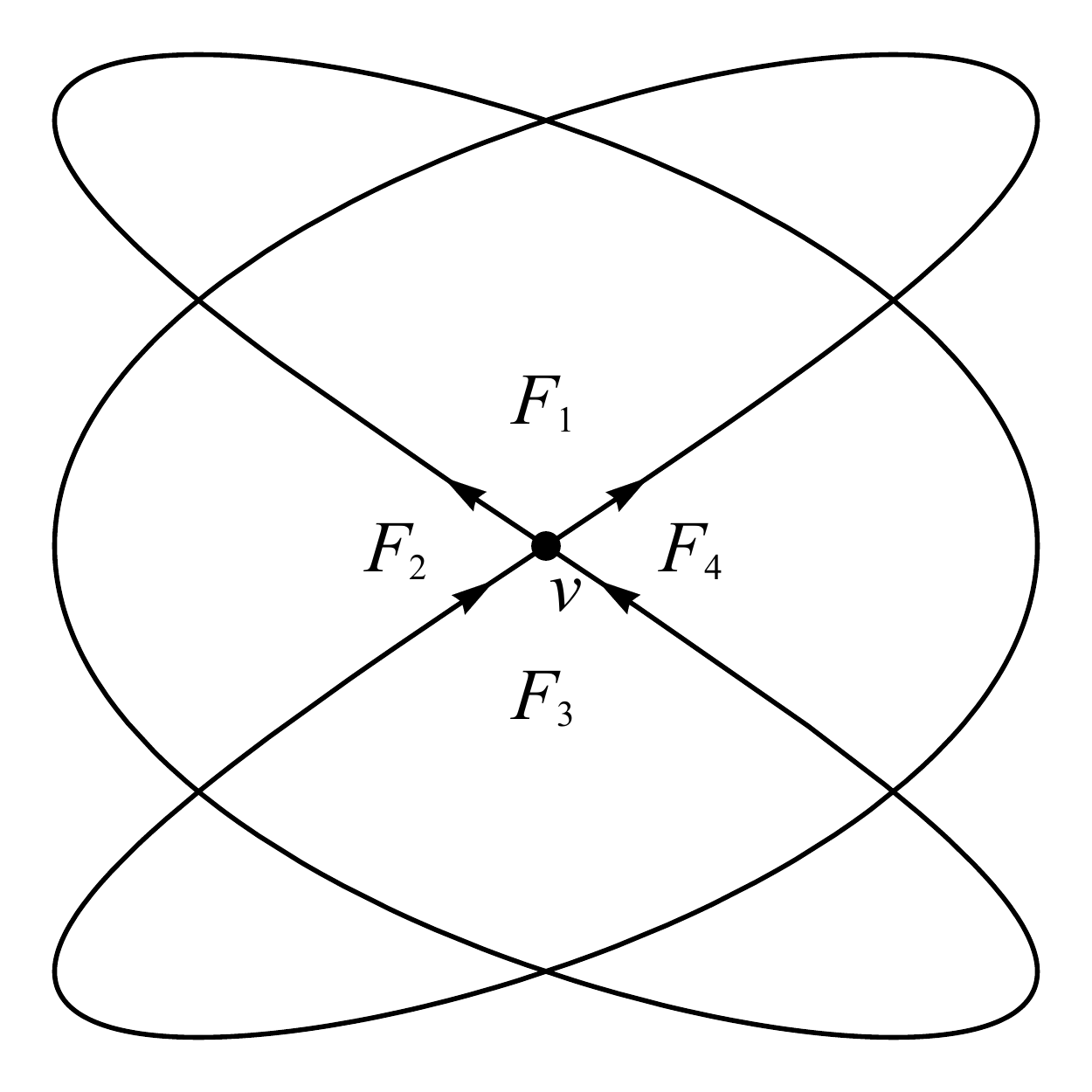}%
\caption{The labeling of the faces surrounding $v$}%
\label{mmplot.fig}%
\end{figure}

We now consider a loop $L$ with (a finite number of) simple crossings, and we
let $v$ be one such crossing. We label the four faces of $L$ adjacent to the
crossing in cyclic order as $F_{1},\ldots,F_{4},$ with $F_{1}$ denoting the
face whose boundary contains the two outgoing edges of $L.$ We then let
$t_{1},\ldots,t_{4}$ denote the areas of these faces. (See Figure
\ref{mmplot.fig}.) We also let $L_{1}$ denote the loop from the beginning to
the first return to $v$ and let $L_{2}$ denote the loop from the first return
to the end. (See Figure \ref{l1l2.fig}.)

The two-dimensional version of the Makeenko--Migdal equation, in the $U(N)$
case, is then as follows:%
\begin{equation}
\left(  \frac{\partial}{\partial t_{1}}-\frac{\partial}{\partial t_{2}}%
+\frac{\partial}{\partial t_{3}}-\frac{\partial}{\partial t_{4}}\right)
\mathbb{E}\{\mathrm{tr}(\mathrm{hol}(L))\}=\mathbb{E}\{\mathrm{tr}%
(\mathrm{hol}(L_{1}))\mathrm{tr}(\mathrm{hol}(L_{2}))\}, \label{MMUN}%
\end{equation}
where $\mathrm{hol}(\cdot)$ denotes the holonomy. Although the curves $L_{1}$
and $L_{2}$ occurring on the right-hand side of (\ref{MMUN}) are simpler than
the loop $L$, the right-hand side of (\ref{MMUN}) involves the
\textit{expectation of the product} of the traces, rather than the product of
the expectations. Thus, even if one has already computed the Wilson loop
functionals $\mathbb{E}\{\mathrm{tr}(\mathrm{hol}(L_{1}))\}$ and
$\mathbb{E}\{\mathrm{tr}(\mathrm{hol}(L_{2}))\},$ the right-hand side of
(\ref{MMUN}) cannot be regarded as a known quantity. In the large-$N$ limit,
however, we will see that the Makeenko--Migdal equation becomes an effective
tool for inductive computation of Wilson loop functionals.%

\begin{figure}[ptb]%
\centering
\includegraphics[
height=2.1949in,
width=2.1949in
]%
{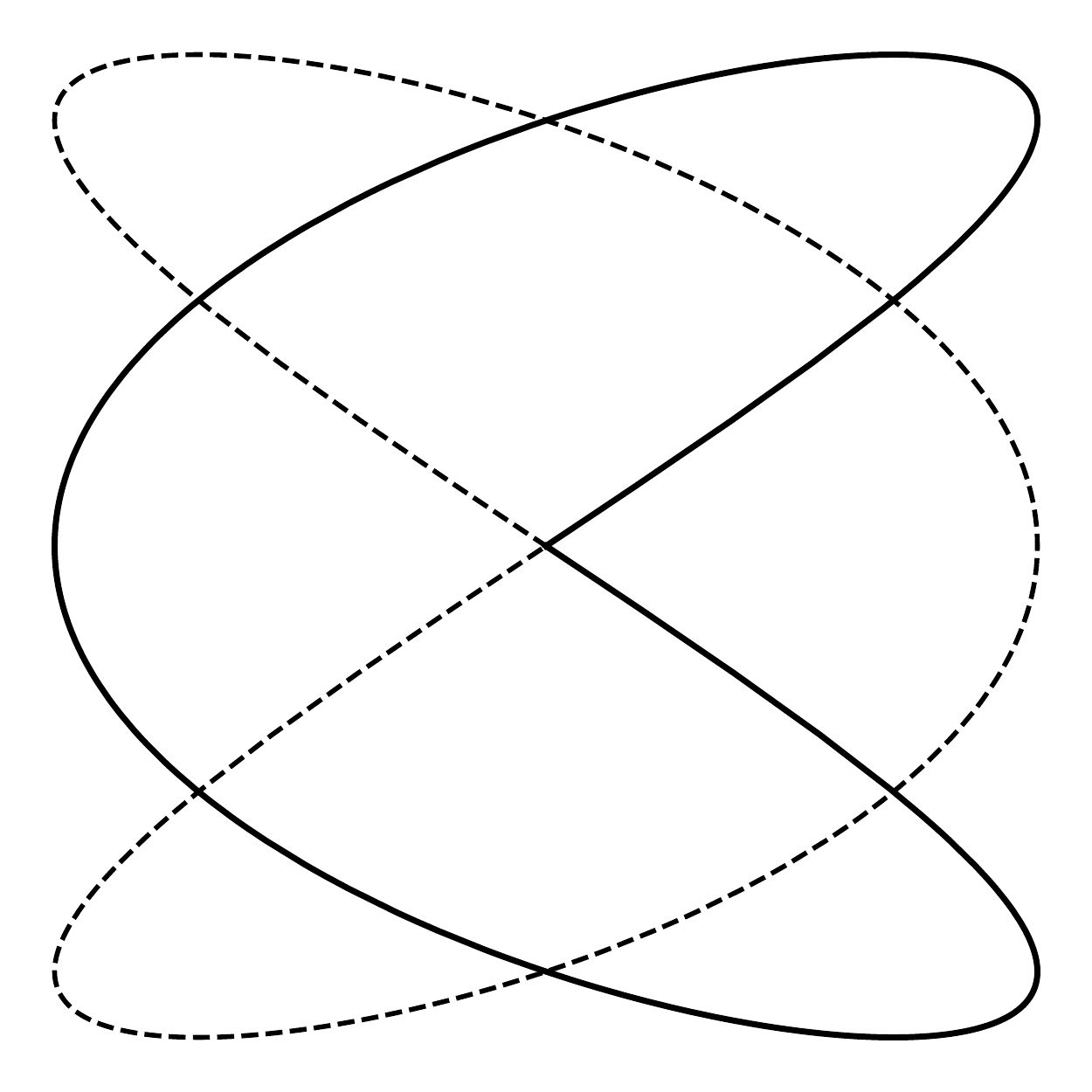}%
\caption{The loops $L_{1}$ (black) and $L_{2}$ (dashed) for the loop in Figure
\ref{mmplot.fig}}%
\label{l1l2.fig}%
\end{figure}

The original argument of Makeenko and Migdal for the equation that bears their
names was based on heuristic manipulations of the path integral. In the plane
case, L\'{e}vy then gave a rigorous proof of the Makeenko--Migdal equation in
\cite{LevyMaster}. (See Eq. (159) in Proposition 9.2.2 of \cite{LevyMaster}.)
Subsequent proofs of the planar Makeenko--Migdal equation were then provided
by Dahlqvist \cite{Dahl} and Driver--Hall--Kemp \cite{DHK2}.

Meanwhile, in \cite{DGHK}, Driver, Gabriel, Hall, and Kemp gave a rigorous
derivation of the Makeenko--Migdal equation for $U(N)$ Yang--Mills theory over
an arbitrary surface. Actually, the proof given in \cite{DHK2} in the plane
case extends with minor modifications to the case of a general surface.

\subsection{The master field in two dimensions}

In the paper \cite{tHooft}, 't Hooft proposed that Yang--Mills theory for
$U(N)$ in any dimension should simplify in the limit as $N\rightarrow\infty.$
In particular, it is expected that in this limit, the path integral should
concentrate onto a \textit{single} connection (modulo gauge transformations),
known as the master field. The concentration phenomenon for the Yang--Mills
measure has an important implication for the form of the two-dimensional
Makeenko--Migdal equation. Specifically, in the limit, there should be no
difference between the expectation of a product of traces and the product of
the associated expectations: both $\mathbb{E}\{fg\}$ and $\mathbb{E}%
\{f\}\mathbb{E}\{g\}$ should become $f(M_{0})g(M_{0}),$ where $M_{0}$ is the
master field.

If, therefore, the large-$N$ limit of $U(N)$ Yang--Mills theory exists on a
surface $\Sigma,$ we expect it to satisfy a Makeenko--Migdal equation of the
form%
\begin{equation}
\left(  \frac{\partial}{\partial t_{1}}-\frac{\partial}{\partial t_{2}}%
+\frac{\partial}{\partial t_{3}}-\frac{\partial}{\partial t_{4}}\right)
W(L)=W(L_{1})W(L_{2}), \label{mmLargeN}%
\end{equation}
where $W(L)$ is the limiting value of $\mathbb{E}\{\mathrm{tr}(\mathrm{hol}%
(L))\}.$ Note that the loops $L_{1}$ and $L_{2}$ on the right-hand side of
(\ref{mmLargeN}) have fewer crossings than $L,$ since neither $L_{1}$ nor
$L_{2}$ has a crossing at $v.$ Thus, one may hope that the large-$N$
Makeenko--Migdal equation may allow one to reduce computations of Wilson loop
functionals for general curves to simpler ones, until one eventually reaches a
simple closed curve. Of course, since a simple closed curve has no crossings,
the Makeenko--Migdal equation gives no information about the Wilson loop for
such a curve.

In the plane case, the structure of the master field was worked out by Singer
\cite{Si}, Gopakumar and Gross \cite{GG,Gop}, Xu \cite{Xu}, Sengupta
\cite{SenTraces}, Anshelevich and Sengupta \cite{AS}, and then in greater
detail by L\'{e}vy \cite{LevyMaster}. In particular, the expected
concentration phenomenon was verified in detail in the plane case in
\cite{LevyMaster}. (See the explicit variance estimate in Theorem 6.3.1 of
\cite{LevyMaster}.) A generalization of the master field on the plane was then
constructed by C\'{e}bron, Dahlqvist, and Gabriel in \cite{CDG}.

In \cite{LevyMaster}, L\'{e}vy shows that the large-$N$ limit of the Wilson
loop functional for a loop in the plane with simple crossings is completely
determined by (\ref{mmLargeN}), together with another, simpler condition. This
simpler condition---given as Axiom $\Phi_{4}$ on p. 11 of \cite{LevyMaster}
and called the \textquotedblleft unbounded face condition\textquotedblright%
\ in \cite[Theorem 2.3]{DHK2}---gives a simple formula for the derivative of
the Wilson loop functional with respect to the area of any face of $L$ that
adjoins the unbounded face.

\subsection{The master field on the sphere}

The existence of a large-$N$ limit of Yang--Mills theory on a general surface
$\Sigma$ is currently unknown. There has, however, been much interest in the
problem because of connections with string theory, as developed by Gross and
Taylor \cite{Gr,GT1,GT2}.

The $S^{2}$ case, meanwhile, has been extensively studied at varying levels of
rigor. The analysis proceeds in two stages. First, one studies the large-$N$
limit of the Wilson loop functional for a simple closed curve. Second, one
attempts to use the large-$N$ Makeenko--Migdal equation to reduce Wilson loop
functionals for all other loops with simple crossings to the simple closed curve.

In the first stage of analysis, a formula was proposed in the physics
literature for the Wilson loop functional for a simple closed curve. (See
Section \ref{wilsonSimple.sec} for more information.) A notable feature of
this formula is the presence of a phase transition. If the total area of the
sphere is less than $\pi^{2},$ the Wilson loop for a simple closed curve is
expressible in terms of the semicircular distribution from random matrix
theory. If, however, the total area is greater than $\pi^{2},$ the Wilson loop
is much more complicated. In addition to the proposed formula for the limiting
Wilson loop functional, it is expected that the limit should be deterministic,
in keeping with the idea of the master field. This brings us to the following
recent rigorous result of Dahlqvist and Norris \cite{DN}.

\begin{theorem}
[Dahlqvist--Norris]\label{dn.thm}If $C$ is a simple closed curve on $S^{2}$
then the limit%
\begin{equation}
\lim_{N\rightarrow\infty}\mathbb{E}\left\{  \mathrm{tr}(\mathrm{hol}%
(C))\right\}  \label{conj1}%
\end{equation}
exists and depends continuously on the areas of the two faces of $C.$
Furthermore, the associated variance tends to zero:%
\begin{equation}
\lim_{N\rightarrow\infty}\mathrm{Var}\left\{  \mathrm{tr}(\mathrm{hol}%
(C))\right\}  =0. \label{conj2}%
\end{equation}

\end{theorem}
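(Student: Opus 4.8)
The plan is to reduce the statement to a hydrodynamic-limit question for a $\beta = 2$ Coulomb gas on the circle and then to establish the vanishing of the variance by a separate concentration argument. Fix the simple closed curve $C$, let $a$ and $b$ be the areas of its two faces, and set $T = a + b$. Since $\mathrm{tr}(\mathrm{hol}(C))$ depends only on the conjugacy class of the holonomy $U \in U(N)$, it equals the first moment $\int_{S^{1}} z \, d\mu_{N}(z)$ of the empirical spectral measure $\mu_{N} = \frac{1}{N}\sum_{j} \delta_{\lambda_{j}}$ of $U$. Under the Yang--Mills measure on $S^{2}$ associated with $C$, the matrix $U$ has a conjugation-invariant law with density proportional to $p_{a}(U)\,p_{b}(U)/p_{T}(\mathrm{I})$ against Haar measure, where $p_{t}$ is the heat kernel on $U(N)$ for the inner product (\ref{innerProduct}); equivalently, $U$ is distributed as $B_{a}$, where $(B_{r})_{0 \le r \le T}$ is the Brownian bridge on $U(N)$ from $\mathrm{I}$ to $\mathrm{I}$. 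It therefore suffices to exhibit a deterministic probability measure $\mu_{a,b}$ on $S^{1}$, depending continuously on $(a,b)$, with $\mu_{N} \to \mu_{a,b}$ weakly in probability and with $\mathrm{Var}\{\int z \, d\mu_{N}(z)\} \to 0$.

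\emph{The Coulomb-gas description.} Applying the Weyl integration formula together with the classical expression for the $U(N)$ heat kernel in terms of eigenangles, one finds that the eigenangles $\theta_{1}, \dots, \theta_{N}$ of $U$, lifted to $\mathbb{R}$, are distributed as $N$ real Brownian paths started at the origin, conditioned to be non-colliding, observed at time $a$ and pinned at time $T$ (with a sum over relative winding numbers). Concretely, the joint density is proportional to $\prod_{i<j}\lvert e^{i\theta_{i}} - e^{i\theta_{j}}\rvert^{2}$ times a heat-kernel factor assembled from the two propagators of areas $a$ and $b$; this is a $\beta = 2$ logarithmic gas on the circle in an external field determined by those propagators.

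\emph{Hydrodynamic limit and the variational problem.} I would then establish a large deviation principle at speed $N^{2}$ for $\mu_{N}$, whose good rate function is the sum of the logarithmic energy $-\iint \log\lvert z - w\rvert \, d\mu(z)\, d\mu(w)$ and an action term coming from the Brownian bridge, and show that this rate function has a unique minimizer $\mu_{a,b}$, characterized by an obstacle-type Euler--Lagrange condition. It is here that the Douglas--Kazakov transition at $T = \pi^{2}$ appears: the minimizer is of semicircular type on a proper arc of the circle when $T \le \pi^{2}$ and fills the whole circle when $T > \pi^{2}$. The LDP upper bound, together with uniqueness of the minimizer, yields $\mu_{N} \to \mu_{a,b}$ in probability, and continuity of $\mu_{a,b}$ in $(a,b)$ follows from a compactness-plus-uniqueness argument for the variational problem; taking first moments gives (\ref{conj1}).

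\emph{The variance, and the main obstacle.} The LDP only produces convergence in probability, so (\ref{conj2}) needs a genuine concentration estimate. The route I would try first uses that $U \mapsto \mathrm{tr}(U)$ is $O(N^{-1})$-Lipschitz for the metric $d(U,V) = N^{-1/2}\lVert U - V\rVert_{\mathrm{HS}}$ and that Brownian motion on $U(N)$ for the inner product (\ref{innerProduct}) satisfies a logarithmic Sobolev inequality with an $N$-independent constant, so that the unconditioned heat-kernel measure concentrates at scale $N^{-1}$; one then transfers this bound to the Brownian bridge by controlling the Radon--Nikodym density $p_{b}(B_{a})/\mathbb{E}[p_{b}(B_{a})]$. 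I expect this transfer---uniformly in $N$, and above all uniformly through the phase transition at $T = \pi^{2}$, where the bridge density ceases to be uniformly bounded---to be the main obstacle. An alternative for the variance, once the limiting mean is in hand, is to derive a Poincar\'{e} (spectral-gap) inequality for the bridge from the Schwinger--Dyson (loop) equations of the Coulomb gas and to bound the variance of the linear statistic $\frac{1}{N}\mathrm{Tr}(U)$ directly; this sidesteps the log-Sobolev transfer but still has to be carried through the supercritical regime $T > \pi^{2}$, which I anticipate as the genuinely hard part of the whole argument.
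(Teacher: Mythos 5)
A preliminary structural point: the paper does not prove Theorem \ref{dn.thm} for $S^{2}$ at all --- it is imported as an external input from Dahlqvist and Norris \cite{DN}, and the only proof supplied in the paper is for the plane analog (Section \ref{planeGeneral.sec}), where $\mathrm{tr}(U)$ is an eigenfunction of the $U(N)$ Laplacian and everything reduces to a short semigroup computation (which does not survive the bridge conditioning on the sphere). According to the paper's own survey in Section \ref{wilsonSimple.sec}, the proof in \cite{DN} follows the first of the three approaches listed there: expand the heat kernels in characters of irreducible representations, identify the most probable representation, and control the limit shape of the associated diagram, as a rigorous version of \cite{DoK} and \cite{Bou}. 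Your proposal instead follows the third approach --- non-intersecting Brownian bridges and an eigenvalue Coulomb gas on the circle --- which the paper discusses and explicitly flags as incomplete in the literature: the Liechty--Wang results \cite{LW1} do not control the eigenvalue distribution in the supercritical phase $a+b>\pi^{2}$ at times near half the lifetime of the bridge, which is exactly the regime your argument must handle.

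As for the content, there are two genuine gaps. First, the Coulomb-gas description is not quite what you assert: by Karlin--McGregor, the non-colliding conditioning produces a \emph{determinant} of single-particle circular heat kernels (summed over winding sectors), not a product of one-body potentials multiplying $\prod_{i<j}\lvert e^{i\theta_{i}}-e^{i\theta_{j}}\rvert^{2}$. The ``external field'' is therefore an $N$-dependent determinantal object, and the claimed rate function ``logarithmic energy plus a bridge action'' does not follow from the standard external-field LDP template; one needs either a dynamical LDP for the whole non-colliding bridge or the discrete-to-continuous passage through representation theory, and in either case the Douglas--Kazakov transition is precisely where the analysis is delicate rather than a byproduct of it. Second, for the variance you candidly leave the decisive step open: the $N$-independent log-Sobolev constant for heat-kernel measure does not transfer to the bridge, since $\log\left(p_{b}(B_{a})/\mathbb{E}\{p_{b}(B_{a})\}\right)$ is of order $N^{2}$ and is not uniformly controlled through the phase transition, and the Poincar\'{e}-from-loop-equations alternative again presupposes control of the equilibrium measure in the supercritical phase. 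Since (\ref{conj2}) is exactly the input the rest of the paper's induction consumes (via (\ref{covIneq}) and Proposition \ref{varianceProd.prop}), this is not peripheral: as written, the proposal is a plausible program that correctly locates the difficulty but establishes neither (\ref{conj1}) nor (\ref{conj2}).
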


The method of proof used in \cite{DN} is discussed briefly in Section
\ref{wilsonSimple.sec}.

\begin{notation}
\label{w1.notation}We denote the large-$N$ limit of the Wilson loop functional
for a simple closed curve by $W_{1}$:%
\[
W_{1}(a,b)=\lim_{N\rightarrow\infty}\mathbb{E}\left\{  \mathrm{tr}%
(\mathrm{hol}(C))\right\}  ,
\]
where $C$ is a simple closed curve and where $a$ and $b$ are the areas of the
faces of $C.$
\end{notation}

In the second stage of analysis, it has been claimed by Daul and Kazakov that,
\textquotedblleft All averages for self-intersecting loops can be reproduced
from the average for a simple (non-self-intersecting) loop by means of loop
equations.\textquotedblright\ (See the abstract of \cite{DaK}. The loop
equations referred to are the large-$N$ Makeenko--Migdal equation
(\ref{mmLargeN}).) It should be noted, however, that Daul and Kazakov analyze
only two examples, and it is not obvious how to extend their analysis to
general loops; see Section \ref{computing.sec}. Furthermore, they assume that
the large-$N$ limit exists and satisfies the large-$N$ Makeenko--Migdal equation.

\subsection{The reduction procedure}

In this paper, we give a rigorous treatment of the second stage of the
analysis of the large-$N$ limit for Yang--Mills theory on $S^{2},$ as well as
results for the plane and general surfaces. (Dahlqvist and Norris also give
treat the $S^{2}$ case by a similar but not identical method, as discussed
further in Section \ref{reductionSphere.sec}.)

\subsubsection{On the sphere\label{reductionSphere.sec}}

Specifically, we establish the following results in the sphere case: (1) the existence of the
large-$N$ limit of Wilson loop functionals for arbitrary loops with simple
crossings; (2) the vanishing of the associated variance; and (3) the large-$N$
Makeenko--Migdal equation for the limiting theory. In particular, we give a
concrete procedure for reducing the Wilson loop functional for general loops
in $S^{2}$ to the Wilson loop functional for a simple closed curve.

Here are some notable features of our approach.

\begin{itemize}
\item We do not assume the existence of the large-$N$ limit ahead of time,
except for a simple closed curve (Theorem \ref{dn.thm}).

\item We do not assume ahead of time that the limiting theory satisfies the
large-$N$ Makeenko--Migdal equation. Rather, we assume only the finite-$N$
Makeenko--Migdal equation in (\ref{MMUN}), as established rigorously in
\cite{DGHK}. We then prove that the limiting theory satisfies a large-$N$
version of the equation.

\item We give a constructive procedure for reducing the Wilson loop functional
an arbitrary loop in $S^{2}$ with simple crossings inductively to that for a
simple closed curve. Specifically, we show that any loop can first be reduced
to one that winds $n$ times around a simple closed curve, which can then be
reduced to a simple closed curve.
\end{itemize}

The just-referred-to procedure relies on a result (Proposition \ref{shrinkAllButTwo.prop}) that says that it is possible to perform a combination of Makeenko--Migdal variations at all of the vertices, with the effect that the areas of \textit{all but two} of the faces shrink to zero, with the areas of the remaining two faces remaining non-negative. Furthermore, it is possible to choose one of the ``unshrunk'' faces arbitrarily. 

After the first version of this paper was posted to the arXiv, I became aware
of a preprint of Dahlqvist and Norris \cite{DN}, which had been posted
approximately two months earlier. The paper of Dahlqvist and Norris proves
Theorem \ref{dn.thm}, which I stated as a conjecture in the first version of
this paper. In addition, \cite{DN} gives a reduction procedure that is similar
to, but not identical to, the one I use here. The main difference between the
two approaches is the just-mentioned freedom in my approach to arbitrarily choose one of the
faces whose area does not shrink to zero. This flexibility is exploited
crucially to give results on arbitrary surfaces, as discussed in Sections
\ref{surfaceIntro.sec} and \ref{surfaces.sec}.

Our main result on the sphere may be stated as follows.%

\begin{figure}[ptb]%
\centering
\includegraphics[
height=1.9804in,
width=1.9804in
]%
{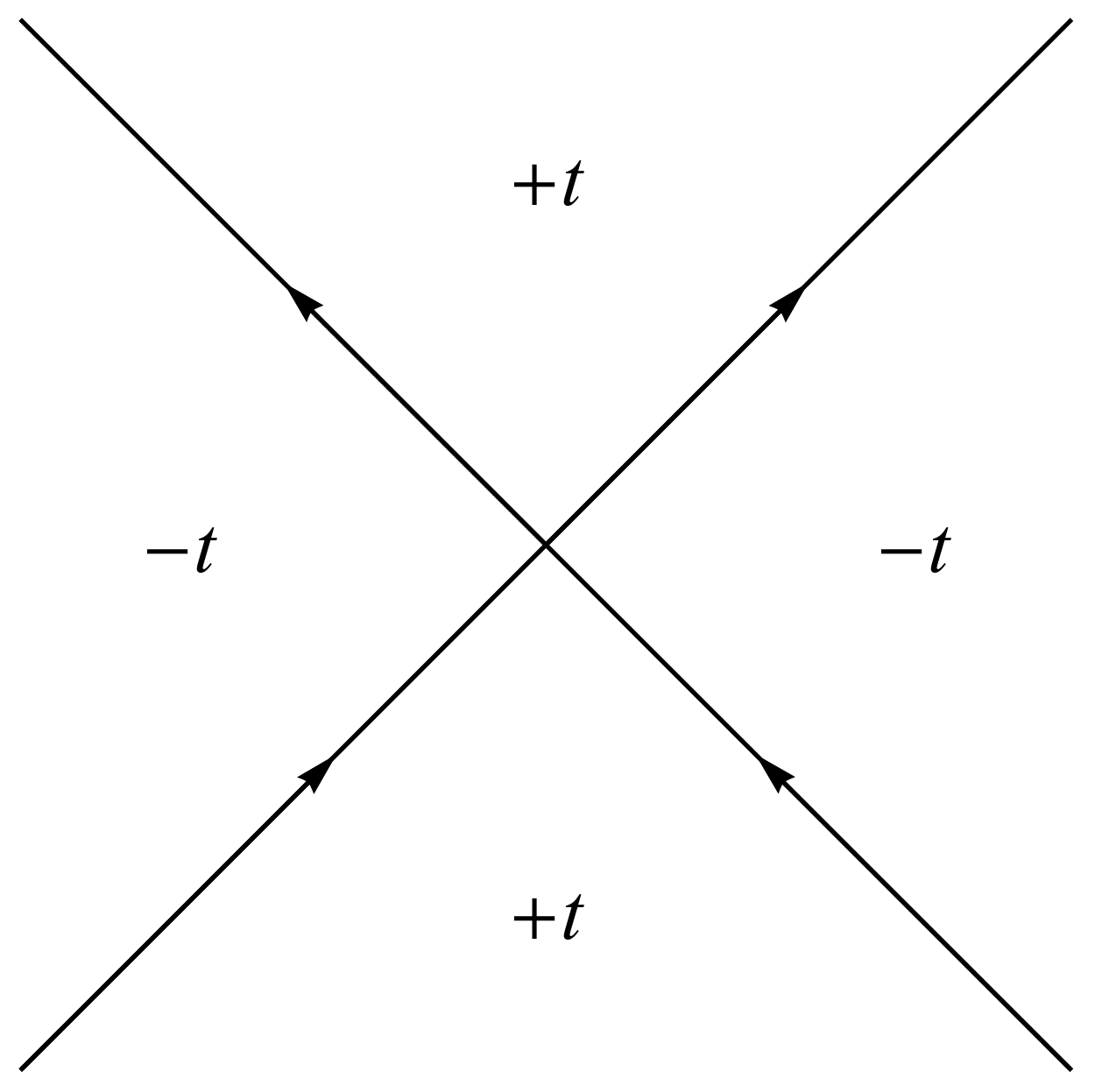}%
\caption{A checkerboard variation of the areas}%
\label{lt.fig}%
\end{figure}

\begin{theorem}
\label{main.thm}If $L$ is a closed curve traced out on a graph in $S^{2}$ and
having only simple crossings, the following results hold. First, the limit%
\begin{equation}
W(L):=\lim_{N\rightarrow\infty}\mathbb{E}\left\{  \mathrm{tr}(\mathrm{hol}%
(L))\right\}  \label{main1}%
\end{equation}
exists and depends continuously on the areas of the faces of $L.$ Second, the
associated variance goes to zero:
\begin{equation}
\lim_{N\rightarrow\infty}\mathrm{Var}\left\{  \mathrm{tr}(\mathrm{hol}%
(L))\right\}  =0. \label{main2}%
\end{equation}
Third, the limiting expectation values satisfy the following large-$N$
Makeenko--Migdal equation. Let us vary the areas of the faces surrounding a
crossing $v$ in a checkerboard pattern as in Figure \ref{lt.fig}, resulting in
a family of curves $L(t).$ Then%
\begin{equation}
\frac{d}{dt}W(L(t))=W(L_{1}(t))W(L_{2}(t)), \label{main3}%
\end{equation}
where $L_{1}(t)$ and $L_{2}(t)$ are derived from $L(t)$ in the usual way.
\end{theorem}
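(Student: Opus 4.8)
The plan is to prove Theorem \ref{main.thm} by induction on the number of crossings of $L$, using the reduction procedure advertised in the introduction. The base case is a simple closed curve, for which the existence of the limit $W_1$ and the vanishing of the variance are exactly Theorem \ref{dn.thm} of Dahlqvist and Norris, and the Makeenko--Migdal assertion is vacuous. For the inductive step, suppose the three conclusions hold for all loops with fewer crossings than $L$. The strategy is to use Proposition \ref{shrinkAllButTwo.prop} to build a one-parameter family of area configurations along which all but two of the faces of $L$ have their areas driven to zero, while staying in the (closed) set of admissible configurations on $S^2$; when all but two faces have zero area, the loop has degenerated to one that winds some number of times around a simple closed curve, a configuration for which a separate (simpler, one-variable) analysis applies. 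One then integrates the finite-$N$ Makeenko--Migdal equation (\ref{MMUN}) backward along this path to recover the Wilson loop functional at the original area configuration.

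Concretely, I would proceed as follows. First, I would set up the combinatorial/geometric apparatus: choose the two faces to keep (one of them freely, per Proposition \ref{shrinkAllButTwo.prop}), and describe explicitly the vector field in area-space — a combination of checkerboard Makeenko--Migdal variations at each vertex — whose flow shrinks the other faces. Second, I would record the key $N$-uniform analytic inputs: (a) the Wilson loop functional $\mathbb{E}\{\mathrm{tr}(\mathrm{hol}(L))\}$ is bounded in absolute value by $1$ (since $\mathrm{tr}$ of a unitary has modulus $\le 1$), uniformly in $N$ and in the areas; (b) the relevant derivatives, i.e. the right-hand sides of (\ref{MMUN}), are likewise bounded by $1$ in modulus; and (c) a Lipschitz/equicontinuity estimate in the area variables, which again follows from (\ref{MMUN}) together with the heat-kernel structure of the Yang--Mills measure. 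Third, along the shrinking path I would write $\mathbb{E}\{\mathrm{tr}(\mathrm{hol}(L))\}$ at the starting point as its value at the fully-degenerate endpoint plus the integral of the Makeenko--Migdal right-hand sides along the path. The degenerate endpoint is handled by the base analysis (loop winding $n$ times around a simple closed curve), and the integrand involves only loops $L_1, L_2$ with strictly fewer crossings, to which the inductive hypothesis applies — so both the convergence of the limit and, after a second-moment version of the same argument, the vanishing of the variance propagate. Passing to the limit in the integrated identity then also yields the large-$N$ Makeenko--Migdal equation (\ref{main3}) at $L$, establishing the third conclusion.

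For the variance statement specifically, the mechanism is the standard one: write $\mathrm{Var}\{\mathrm{tr}(\mathrm{hol}(L))\} = \mathbb{E}\{|\mathrm{tr}(\mathrm{hol}(L))|^2\} - |\mathbb{E}\{\mathrm{tr}(\mathrm{hol}(L))\}|^2$, and observe that $\mathbb{E}\{|\mathrm{tr}(\mathrm{hol}(L))|^2\}$ can itself be analyzed by a Makeenko--Migdal-type identity (applied to the loop $L$ together with its reversal, or via the standard second-moment extension of the loop equations), whose right-hand side again features products of traces of strictly simpler loops. Feeding in the inductive hypothesis — which controls both the limits and the variances of those simpler loops — one shows $\mathbb{E}\{|\mathrm{tr}(\mathrm{hol}(L))|^2\} \to |W(L)|^2$, i.e. the variance vanishes. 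The vanishing of the variance is also what makes "$\mathbb{E}\{fg\} \to f(M_0)g(M_0)$" legitimate in the limit, closing the loop between the finite-$N$ equation (\ref{MMUN}) and its large-$N$ form (\ref{mmLargeN}).

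The main obstacle I expect is twofold. The serious geometric point is Proposition \ref{shrinkAllButTwo.prop} itself — verifying that one can simultaneously shrink all but two faces by an admissible combination of Makeenko--Migdal variations \emph{while respecting the spherical constraint} that the total area is fixed and all face-areas stay non-negative (on the plane the unbounded face absorbs everything, but on $S^2$ there is no such luxury, which is precisely why a two-face, rather than one-face, target is needed). The secondary, more technical obstacle is the bookkeeping at the degenerate endpoint: as faces shrink to zero, the loop $L$ collapses onto a loop winding $n$ times around a simple closed curve, and one must check that the Wilson loop functional behaves continuously under this collapse (uniformly in $N$) and that the separate $n$-fold-winding analysis indeed provides the needed base value — ensuring in particular that the continuity-in-the-areas claim survives all the way to the boundary of the area simplex. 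Everything else — the uniform bounds, the equicontinuity, the Arzelà--Ascoli extraction of a limit, and the passage to the limit in the integral identity — is routine given these two ingredients.
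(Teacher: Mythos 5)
Your proposal matches the paper's argument in all essentials: induction on the number of crossings, the use of Proposition \ref{shrinkAllButTwo.prop} to shrink all but two faces (with the degenerate endpoint identified as an $n$-fold winding of a simple closed curve, handled by its own induction), integration of the finite-$N$ Makeenko--Migdal equation along the shrinking path with the covariance terms controlled by the inductive variance hypothesis, and a second-moment Makeenko--Migdal identity for the variance. The only cosmetic difference is that the paper passes to the limit directly by dominated convergence in the integrated identity rather than via equicontinuity and Arzel\`a--Ascoli.
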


In Figure \ref{lt.fig}, we do not assume the four faces are distinct. If, say,
the two faces labeled as $+t$ are the same, we are then increasing the area of
that face by $2t.$

The reason for stating the Makeenko--Migdal equation in the form in
(\ref{main3}) is that we have not established the differentiability of the
large-$N$ Wilson loop functional $W(L)$ with respect to the area of an
individual face. If this differentiability property turns out to hold, we can
then apply the chain rule to express the derivative on the left-hand side of
(\ref{main3}) in the usual form as an alternating sum of such derivatives.
This issue is of little consequence, since the result in (\ref{main3}) is the
way one applies the Makeenko--Migdal equation in all applications.

\subsubsection{On the plane and on arbitrary surfaces\label{surfaceIntro.sec}}

We also provide a new proof of Theorem \ref{main.thm} in the plane case, as a
variant of the methods used by L\'{e}vy in \cite{LevyMaster}. In the plane
case, the result is not dependent on results of \cite{DN}, since the analog of
Theorem \ref{dn.thm} for $\mathbb{R}^{2}$ is a simple computation; see Section
\ref{planeGeneral.sec}. In L\'{e}vy's analysis in \cite{LevyMaster}, the
structure of the master field on $\mathbb{R}^{2}$ is based on two main axioms, the
large-$N$ version of the Makeenko--Migdal equation and a second condition,
labeled as Axiom $\Phi_{4}$ in \cite[Section 0]{LevyMaster} and called the
\textquotedblleft unbounded face condition\textquotedblright\ in \cite[Theorem
2.3]{DHK2}, which gives a formula for the derivative of a Wilson loop
functional with respect to the area of any face that adjoins the unbounded
face. (There are also some continuity and invariance properties.) We show that
the master field on $\mathbb{R}^{2}$ can alternatively be characterized by the
large-$N$ Makeenko--Migdal equation together with the (simple) formula for the
Wilson loop for a simple closed curve. See Section \ref{plane.sec}.

Finally, we consider Yang--Mills theory on an arbitrary compact surface
$\Sigma.$ Let us call a loop in $\Sigma$ \textquotedblleft topologically
trivial\textquotedblright\ if it is contained in a topological disk
$U\subset\Sigma.$ We put forth two natural conjectures regarding topologically
trivial simple closed curves in $\Sigma$. The first, Conjecture
\ref{surfaces.conjecture}, is simply the obvious analog of Theorem
\ref{dn.thm} for topologically trivial simple closed curves. (No such result
is known for surfaces other than the plane and the sphere.) The second,
Conjecture \ref{surfaces2.conjecture}, asserts also similar results for a loop
that winds $n$ times around a simple closed curve, $n\in\mathbb{Z}.$ Assuming
the first conjecture, we establish the analog of Theorem \ref{main.thm} for
topologically trivial loops with simple crossings that satisfy a
\textquotedblleft smallness\textquotedblright\ assumption. Assuming the second
conjecture, we establish the analog of Theorem \ref{main.thm} for \textit{all}
topologically trivial loops with simple crossings. See Section
\ref{surfaces.sec}.

Our results for the plane and for arbitrary surfaces depend crucially on an
extra level of flexibility in the reduction process that is not present in
\cite{DN}. This paper and \cite{DN} both use an approach in which the areas of
\textit{all but two} of the faces of the curve are shrunk to zero. In the
procedure in \cite[Section 4.5]{DN}, one generically has no choice regarding
which two faces remain unshrunk; they are specified by conditions on the
winding numbers. In our approach, by contrast, \textit{one} of the unshrunk
faces can be chosen arbitrarily. In the plane case, we choose one of the
unshrunk faces to be the unbounded face, while in the case of an arbitrary
surface, we choose one of the unshrunk faces to be the one containing the
complement of the topological disk $U.$

\subsection{The Wilson loop for simple closed curve in $S^{2}$%
\label{wilsonSimple.sec}}

In this section, we describe three approaches (at varying levels of rigor) to
analyzing the Wilson loop functional for a simple closed curve in the sphere.
If $C$ is a simple closed curve on $S^{2}$ and the areas of the two faces of
$C$ are $a$ and $b,$ Sengupta's formula \cite{Sen93} reads%
\begin{equation}
\mathbb{E}\left\{  \mathrm{tr}(\mathrm{hol}(C))\right\}  =\frac{1}{Z}%
\int_{U(N)}\mathrm{tr}(U)\rho_{a}(U)\rho_{b}(U)~dU, \label{wilsonSimple}%
\end{equation}
where $Z=\rho_{a+b}(\mathrm{id})$ is a normalization factor. Here $\rho_{a}$
is the heat kernel on $U(N),$ based at the identity and evaluated at
\textquotedblleft time\textquotedblright\ $a.$ The probability measure
\begin{equation}
\frac{1}{Z}\rho_{a}(U)\rho_{b}(U)~dU \label{simpleMeasure}%
\end{equation}
is precisely the distribution at time $a$ of a Brownian bridge on $U(N),$
starting at the origin and returning to the origin at time $a+b.$

In the first approach, one writes the heat kernels in (\ref{wilsonSimple}) as
sums over the characters of the irreducible representations of $U(N).$ In the
large-$N$ limit, one attempts to find the \textquotedblleft most probable
representation,\textquotedblright\ that is, the one whose character
contributes the most to the sum. The representations, meanwhile, are labeled
by certain diagrams; the objective is then to determine the limiting shape of
the diagram for the most probable representation. Using this method,
physicists have found different shapes in the small-area phase (namely
$a+b<\pi^{2}$) and the large-area phase (namely $a+b>\pi^{2}$). (See works by
Douglas and Kazakov \cite{DoK} and Boulatov \cite{Bou}.)

At a rigorous level, Boutet de Monvel and Shcherbina \cite{BS} and L\'{e}vy
and Ma\"{\i}da \cite{LM} have analyzed the partition function (i.e., the
normalization factor $Z=\rho_{a+b}(\mathrm{id})$) by this method and confirmed
the existence of a phase transition at $a+b=\pi^{2}.$ Then, recently,
Dahlqvist and Norris \cite[Section 3]{DN} have given a rigorous analysis of
the Wilson loop functional using a rigorous version of the arguments in
\cite{DoK} and \cite{Bou}, leading to Theorem \ref{dn.thm}.

In the second approach, one writes the heat kernels in (\ref{wilsonSimple}) as
a sum over all geodesics connecting the identity to $U,$ using a formula
developed by \`{E}skin \cite{Es} and rediscovered by Urakawa \cite{Ur}. (This
formula is a Poisson-summed version of the formula as a sum of characters.)
When the quantities $a$ and $b$ in (\ref{wilsonSimple}) are small, the
contribution of the shortest geodesic dominates. Recall that we are using the
scaled Hilbert--Schmidt inner product (\ref{innerProduct}) on the Lie algebra
$u(N).$ Since the Laplacian scales oppositely to the inner product, the
Laplacian on $U(N)$ is scaled by a factor of $1/N$ compared to the Laplacian
for the unscaled Hilbert--Schmidt inner product. Thus, at a heuristic level,
the large-$N$ limit ought to be pushing us toward the small-time regime for
the heat kernels $\rho_{a}$ and $\rho_{b}.$ It is therefore possible that in
the large-$N$ limit, one can simply \textquotedblleft neglect the winding
terms,\textquotedblright\ that is, include only the contribution from the
shortest geodesic.

The contribution of the shortest geodesic, meanwhile, is a Gaussian integral
of the sort that arises in the Gaussian unitary ensemble (GUE) in random
matrix theory. Thus, \textit{if} it is valid to keep only the contribution
from the shortest geodesic, the Wilson loop functional may be computed using
results from GUE theory. (See the work of Daul and Kazakov in \cite{DaK}.) On
the other hand, a consistency argument indicates that neglecting the winding
terms can only be valid in the small area phase. Little work has been done,
however, in estimating the size of the winding terms.

In the third approach, one may, as we have noted, recognize the probability
measure in (\ref{simpleMeasure}) as the distribution of a Brownian bridge on
$U(N).$ Forrester, Majumdar, and Schehr have then developed a method
\cite{FMS} to represent the partition function for Yang--Mills theory in terms
of a collection of $N$ nonintersecting Brownian bridges on the unit circle.
(That is, we consider $N$ Brownian motions in the unit circle, starting at 1.
We then constrain them to return to 1 at time $T=a+b$ and to be
nonintersecting for all times $0<t<T.$) In fact, the distribution of the
eigenvalues of the Brownian bridge in $U(N)$ is precisely the distribution of
these nonintersecting Brownian bridges. This claim is presumably well known to
experts---I learned it from Thierry L\'{e}vy---and is explained in the notes
\cite{Nonintersect}. (The claim is analogous to the well-known result that the
eigenvalues of a Brownian motion in the space of $N\times N$ Hermitian
matrices are described by the \textquotedblleft Dyson Brownian
motion\textquotedblright\ \cite{Dys} in $\mathbb{R}^{N}.$ See Section 3.1 of
\cite{Tao}.) Thus, not just the partition function, but also the Wilson loop
functional for a simple closed curve can be expressed in terms of
nonintersecting Brownian bridges.

Meanwhile, Liechty and Wang \cite{LW1,LW} have obtained various rigorous
results about the large-$N$ behavior of the nonintersecting Brownian bridges
in $S^{1}$. In particular, they confirm the existence of a phase transition:
When the lifetime $a+b$ of the bridge is less than $\pi^{2},$ the
nonintersecting Brownian motions do not wind around the circle, whereas for
lifetime greater than $\pi^{2}$ they do. It is possible that one could
establish Theorem \ref{dn.thm} rigorously in the small-area phase using
results from \cite{LW1}. (Theorem 1.2 of \cite{LW1} would be relevant.) In the
large-area phase, however, \cite{LW1} does not provide information about the
distribution of eigenvalues when $t$ is close to half the lifetime of the
bridge. (See the restrictions on $\theta$ in Theorem 1.5(a) of \cite{LW1}.)

\section{Tools for the proof}

In this section, we review some prior results that will allow us to prove our
main theorem. Our main tool, besides the crucial result of Dahlqvist and
Norris in Theorem \ref{dn.thm}, is the Makeenko--Migdal equation for $U(N)$
Yang--Mills theory on compact surfaces, which was established at a rigorous
level in \cite{DGHK}. More precisely, we require not only the standard
Makeenko--Migdal equation in (\ref{MMUN}), but also an \textquotedblleft
abstract\textquotedblright\ Makeenko--Migdal equation, which allows us to
compute the alternating sum of derivatives of expectation values of more
general functions. We also require an estimate on the variance of the product
of two bounded random variables, as described in Section \ref{varianceEst.sec}.

\subsection{Variation of the Wilson loop and of the
variance\label{variationVar.sec}}

Rigorous constructions of the two-dimensional Yang--Mills measure with
structure group $K$ from a continuum perspective were given in the plane case
by Gross, King, and Sengupta \cite{GKS} and by Driver \cite{Dr}, and in the
case of a compact surface, possibly with boundary, by Sengupta
\cite{Sen93,Sen97,Sen97b}. (See also \cite{LevSurfaces}, which, among other
things, extends the analysis to rectifiable loops with infinitely many
self-intersections.) In particular, suppose $\mathbb{G}$ is an
\textquotedblleft admissible\textquotedblright\ graph in a surface $\Sigma,$
meaning that $\mathbb{G}$ contains the boundary of $\Sigma$ and that each face
of $\mathbb{G}$ is a topological disk. Let $e$ denote the number of unoriented
edges of $\mathbb{G}$ and let $g$ be a gauge-invariant function of the
connection that depends only on the parallel transports $x_{1},\ldots,x_{e}$
along the edges of $\mathbb{G}.$ Then Driver (in the plane case) and Sengupta
(in the general case) give a formula for the expectation value of $g$ with
respect to the Yang--Mills measure. The formulas of Driver and Sengupta
correspond to what is known as the \textit{heat kernel action} in the physics
literature, as developed by Menotti and Onofri \cite{MO} and others.

Let $\rho_{t}:K\rightarrow\mathbb{R}$ denote the heat kernel on $K,$ based at
the identity. Then we have, explicitly,%
\begin{equation}
\mathbb{E}\left\{  g\right\}  =\frac{1}{Z}\int_{K^{e}}g(x_{1},\ldots
,x_{e})\prod_{i}^{{}}\rho_{\left\vert F_{i}\right\vert }(\mathrm{hol}%
(F_{i}))~dx_{1}\cdots dx_{e}, \label{senguptasFormula}%
\end{equation}
where $dx_{i}$ denotes the normalized Haar measure on $K$, $\left\vert
F_{i}\right\vert $ is the area of the $i$th face, and $\mathrm{hol}(F_{i})$ is
the product of edge variables going around the boundary of $F_{i}.$ Here $Z$
is a normalization constant. Since $\rho$ is invariant under conjugation and
inversion, the formula does not depend on the starting point or orientation of
the boundary of $F_{i}.$ If the boundary of $\Sigma$ is nonempty, it is
possible to incorporate into (\ref{senguptasFormula}) constraints on the
holonomies around the boundary components; the proof of the Makeenko--Migdal
equation in \cite{DGHK} holds in this more general context.

\begin{remark}
\label{varyAreas.remark}In the rest of the paper, when we speak about
\textquotedblleft varying the areas\textquotedblright\ of the faces of graph,
we mean more precisely that we replace the numbers $\{\left\vert F_{i}\right\vert
\}$ by some other collection of positive real numbers $\{t_{i}\}$ in
Sengupta's formula (\ref{senguptasFormula}). If the sum of the $t_{i}$'s
equals the sum of the $\left\vert F_{i}\right\vert $'s, it may be possible to
implement this variation \textquotedblleft geometrically,\textquotedblright%
\ by continuously deforming the graph, but this is not necessary. In
particular, the Makeenko--Migdal equation (\ref{MMUN}) was proved under such
an \textquotedblleft analytic\textquotedblright\ (i.e., not necessarily
geometric) variation of the area.

If we have a fixed loop $L$ and we let the areas of the faces of $L$ depend on
a parameter $t,$ we will (in a small abuse of notation) denote that pair
consisting of $L$ and the collection of areas by $L(t).$
\end{remark}

Suppose now that $L$ is a loop that can be traced out on an oriented graph in
$\Sigma$ and let $\mathbb{G}$ be a minimal graph on which $L$ can be traced.
We now explain what it means for $L$ to have a simple crossing at a vertex
$v$. First, we assume that $\mathbb{G}$ has exactly four edges incident to
$v$, where we count an edge $e$ twice if both the initial and final vertices
of $e$ are equal to $v$. Second, we assume that $L$, when viewed as a map of
the circle into the plane, passes through $v$ exactly twice. Third, we assume
that each time $L$ passes through $v$, it comes in along one edge and passes
\textquotedblleft straight across\textquotedblright\ to the cyclically
opposite edge. Last, we assume that $L$ traverses two of the edges on one pass
through $v$ and the remaining two edges on the other pass through $v$.

Under these assumptions, Theorem 1 of \cite{DGHK} gives a rigorous derivation
of the Makeenko--Migdal equation in (\ref{MMUN}). We now restate the
Makeenko--Migdal equation for $U(N),$ in the $S^{2}$ case, in a way that
facilitates the large-$N$ limit. In addition, we derive a similar result for
the variation of the variance of $\mathrm{tr}(\mathrm{hol}(L)),$ where for a
complex-valued random variable $X$, we define%
\[
\mathrm{Var}(X)=\mathbb{E}\{\left\vert X-\mathbb{E}\left\{  X\right\}
\right\vert ^{2}\}=\mathbb{E}\{\left\vert X\right\vert ^{2}\}-\left\vert
\mathbb{E}\left\{  X\right\}  \right\vert ^{2}.
\]

\begin{proposition}
\label{variation.prop}Let $L$ be a loop traced out on a graph in $S^{2}$ and
having only simple crossings. Let $v$ be one such crossing and let $L_{1}$ and
$L_{2}$ be obtained from $L$ as usual in the Makeenko--Migdal equation. Then
we have%
\begin{align}
&  \left(  \frac{\partial}{\partial t_{1}}-\frac{\partial}{\partial t_{2}%
}+\frac{\partial}{\partial t_{3}}-\frac{\partial}{\partial t_{4}}\right)
\mathbb{E}\{\mathrm{tr}(\mathrm{hol}(L))\}\nonumber\\
&  =\mathbb{E}\left\{  \mathrm{tr}(\mathrm{hol}(L_{1}))\right\}
\mathbb{E}\left\{  \mathrm{tr}(\mathrm{hol}(L_{2}))\right\} \nonumber\\
&  +\mathrm{Cov}\{\mathrm{tr}(\mathrm{hol}(L_{1})),\mathrm{tr}(\mathrm{hol}%
(L_{2}))\} \label{variationWilson}%
\end{align}
and
\begin{align}
&  \left(  \frac{\partial}{\partial t_{1}}-\frac{\partial}{\partial t_{2}%
}+\frac{\partial}{\partial t_{3}}-\frac{\partial}{\partial t_{4}}\right)
\mathrm{Var}\left\{  \mathrm{tr}(\mathrm{hol}(L))\right\} \nonumber\\
&  =2\operatorname{Re}\left[  \mathrm{Cov}\left\{  \mathrm{tr}(\mathrm{hol}%
(L_{1}))\mathrm{tr}(\mathrm{hol}(L_{2})),~\overline{\mathrm{tr}(\mathrm{hol}%
(L))}\right\}  \right] \nonumber\\
&  -\frac{1}{N^{2}}\mathbb{E}\left\{  \mathrm{tr}(\mathrm{hol}(L_{3}%
))\right\}  -\frac{1}{N^{2}}\mathbb{E}\left\{  \mathrm{tr}(\mathrm{hol}%
(L_{4}))\right\}  , \label{variationVariance}%
\end{align}
where $L_{3}$ and $L_{4}$ are the composite curves $L_{3}=L_{1}L_{2}L_{1}%
^{-1}L_{2}^{-1}$ and $L_{4}=L_{2}L_{1}L_{2}^{-1}L_{1}^{-1}.$ Here
$\mathrm{Cov}$ denotes the covariance, defined as $\mathrm{Cov}%
\{f,g\}=\mathbb{E}\left\{  fg\right\}  -\mathbb{E}\left\{  f\right\}
\mathbb{E}\left\{  g\right\}  .$
\end{proposition}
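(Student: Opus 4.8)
The plan is to derive both identities from the finite-$N$ Makeenko--Migdal machinery of \cite{DGHK}, with nothing special to $S^{2}$ being used. Identity (\ref{variationWilson}) is essentially immediate: the standard Makeenko--Migdal equation (\ref{MMUN}) gives $(\partial_{t_{1}}-\partial_{t_{2}}+\partial_{t_{3}}-\partial_{t_{4}})\mathbb{E}\{\mathrm{tr}(\mathrm{hol}(L))\}=\mathbb{E}\{\mathrm{tr}(\mathrm{hol}(L_{1}))\mathrm{tr}(\mathrm{hol}(L_{2}))\}$, and one rewrites $\mathbb{E}\{fg\}=\mathbb{E}\{f\}\mathbb{E}\{g\}+\mathrm{Cov}\{f,g\}$ with $f=\mathrm{tr}(\mathrm{hol}(L_{1}))$, $g=\mathrm{tr}(\mathrm{hol}(L_{2}))$.

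For (\ref{variationVariance}), write $\mathrm{Var}\{\mathrm{tr}(\mathrm{hol}(L))\}=\mathbb{E}\{|\mathrm{tr}(\mathrm{hol}(L))|^{2}\}-|\mathbb{E}\{\mathrm{tr}(\mathrm{hol}(L))\}|^{2}$ and put $D=\partial_{t_{1}}-\partial_{t_{2}}+\partial_{t_{3}}-\partial_{t_{4}}$. The ``square of the mean'' term is handled by the product rule and (\ref{MMUN}): since $D$ has real coefficients, $D\overline{\mathbb{E}\{\mathrm{tr}(\mathrm{hol}(L))\}}=\overline{D\mathbb{E}\{\mathrm{tr}(\mathrm{hol}(L))\}}$, so $D|\mathbb{E}\{\mathrm{tr}(\mathrm{hol}(L))\}|^{2}=2\operatorname{Re}\big[\mathbb{E}\{\mathrm{tr}(\mathrm{hol}(L_{1}))\mathrm{tr}(\mathrm{hol}(L_{2}))\}\,\overline{\mathbb{E}\{\mathrm{tr}(\mathrm{hol}(L))\}}\big]$. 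Consequently (\ref{variationVariance}) will follow once the remaining term is shown to satisfy $D\mathbb{E}\{|\mathrm{tr}(\mathrm{hol}(L))|^{2}\}=2\operatorname{Re}\big[\mathbb{E}\{\mathrm{tr}(\mathrm{hol}(L_{1}))\mathrm{tr}(\mathrm{hol}(L_{2}))\,\overline{\mathrm{tr}(\mathrm{hol}(L))}\}\big]-\frac{1}{N^{2}}\mathbb{E}\{\mathrm{tr}(\mathrm{hol}(L_{3}))\}-\frac{1}{N^{2}}\mathbb{E}\{\mathrm{tr}(\mathrm{hol}(L_{4}))\}$, since then $D\,\mathrm{Var}$ is the difference of the two formulas and $\mathbb{E}\{PQ\}-\mathbb{E}\{P\}\mathbb{E}\{Q\}=\mathrm{Cov}\{P,Q\}$ with $P=\mathrm{tr}(\mathrm{hol}(L_{1}))\mathrm{tr}(\mathrm{hol}(L_{2}))$, $Q=\overline{\mathrm{tr}(\mathrm{hol}(L))}$.

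So the real task is the formula for $D\mathbb{E}\{|\mathrm{tr}(\mathrm{hol}(L))|^{2}\}$. Here $|\mathrm{tr}(\mathrm{hol}(L))|^{2}=\mathrm{tr}(\mathrm{hol}(L))\,\mathrm{tr}(\mathrm{hol}(L^{-1}))$ is a gauge-invariant function of the edge variables, so the ``abstract'' Makeenko--Migdal equation of \cite{DGHK} gives $D\mathbb{E}\{|\mathrm{tr}(\mathrm{hol}(L))|^{2}\}=\mathbb{E}\{\mathrm{MM}_{v}(\mathrm{tr}(\mathrm{hol}(L))\,\mathrm{tr}(\mathrm{hol}(L^{-1})))\}$, where $\mathrm{MM}_{v}$ is the localized Makeenko--Migdal operator at $v$, of the shape $\sum_{j}A_{j}B_{j}$ with $A_{j}$ a derivative along one of the two lines of edges through $v$ and $B_{j}$ a derivative along the transverse line. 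Since both $L$ and $L^{-1}$ pass through $v$ with the simple-crossing structure, the second-order Leibniz expansion $\mathrm{MM}_{v}(fg)=(\mathrm{MM}_{v}f)g+f(\mathrm{MM}_{v}g)+\sum_{j}\big[(A_{j}f)(B_{j}g)+(B_{j}f)(A_{j}g)\big]$ applied to $f=\mathrm{tr}(\mathrm{hol}(L))$, $g=\mathrm{tr}(\mathrm{hol}(L^{-1}))$ produces four groups of terms. The two ``diagonal'' groups are the standard Makeenko--Migdal identity on a single factor: they split $L$ into $L_{1},L_{2}$ (leaving $\mathrm{tr}(\mathrm{hol}(L^{-1}))$) and split $L^{-1}$ into $L_{2}^{-1},L_{1}^{-1}$ (leaving $\mathrm{tr}(\mathrm{hol}(L))$), and are complex conjugates in expectation, so together they contribute $2\operatorname{Re}\big[\mathbb{E}\{\mathrm{tr}(\mathrm{hol}(L_{1}))\mathrm{tr}(\mathrm{hol}(L_{2}))\,\overline{\mathrm{tr}(\mathrm{hol}(L))}\}\big]$. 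The two ``cross'' groups insert one Lie-algebra generator into the $L$-factor and the transverse one into the $L^{-1}$-factor; applying the $U(N)$ magic formula $\sum_{j}\mathrm{tr}(PX_{j})\mathrm{tr}(X_{j}Q)=-\frac{1}{N^{2}}\mathrm{tr}(PQ)$, and keeping the extra minus sign produced when differentiating $\mathrm{hol}(L^{-1})=\mathrm{hol}(L)^{-1}$, each collapses to $-\frac{1}{N^{2}}\mathrm{tr}$ of the loop obtained by reconnecting the four sub-arcs $L_{1},L_{2},L_{1}^{-1},L_{2}^{-1}$ transversally at $v$; the two transverse reconnections are exactly the commutators, so the cross groups contribute $-\frac{1}{N^{2}}\mathbb{E}\{\mathrm{tr}(\mathrm{hol}(L_{3}))\}-\frac{1}{N^{2}}\mathbb{E}\{\mathrm{tr}(\mathrm{hol}(L_{4}))\}$ with $L_{3}=L_{1}L_{2}L_{1}^{-1}L_{2}^{-1}$, $L_{4}=L_{2}L_{1}L_{2}^{-1}L_{1}^{-1}$, which is the desired formula.

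The main obstacle is the bookkeeping in the previous paragraph: verifying that $\mathrm{MM}_{v}$ on the product $\mathrm{tr}(\mathrm{hol}(L))\,\mathrm{tr}(\mathrm{hol}(L^{-1}))$ produces exactly these four resolutions and no others; that the two ``cross'' resolutions reconnect the four sub-arcs into the specific cyclic words $L_{1}L_{2}L_{1}^{-1}L_{2}^{-1}$ and $L_{2}L_{1}L_{2}^{-1}L_{1}^{-1}$ (a transverse reconnection — which is why one lands on a nontrivial commutator loop and not on the identity); and that the minus sign from inverting $\mathrm{hol}(L)$ combines with the signs in the magic formula and in $\mathrm{MM}_{v}$ to leave coefficient exactly $-1/N^{2}$ on each. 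This is a finite, if somewhat fiddly, computation with the two $U(N)$ magic formulas $\sum_{j}\mathrm{tr}(PX_{j})\mathrm{tr}(X_{j}Q)=-\frac{1}{N^{2}}\mathrm{tr}(PQ)$ and $\sum_{j}\mathrm{tr}(PX_{j}QX_{j})=-\mathrm{tr}(P)\mathrm{tr}(Q)$. The remaining technical points — existence of the face-area derivatives and the legitimacy of differentiating under Sengupta's formula (\ref{senguptasFormula}) — are part of what is established in \cite{DGHK}, and since that input holds on any compact surface, neither the statement nor this argument is really specific to $S^{2}$.
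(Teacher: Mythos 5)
Your proposal is correct and follows essentially the same route as the paper: the first identity is just (\ref{MMUN}) rewritten via the definition of covariance, and the second is obtained by applying the abstract Makeenko--Migdal equation of \cite{DGHK} to $|\mathrm{tr}(\mathrm{hol}(L))|^{2}$, expanding $\nabla^{a_{1}}\cdot\nabla^{a_{2}}$ on the product by the Leibniz rule so that the diagonal terms yield $2\operatorname{Re}\,\mathbb{E}\{\mathrm{tr}(\mathrm{hol}(L_{1}))\mathrm{tr}(\mathrm{hol}(L_{2}))\overline{\mathrm{tr}(\mathrm{hol}(L))}\}$ and the cross terms, via the $U(N)$ magic formula, yield the $-N^{-2}$ commutator-loop traces, after which one subtracts the corresponding identity for $|\mathbb{E}\{\mathrm{tr}(\mathrm{hol}(L))\}|^{2}$. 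The ``fiddly bookkeeping'' you flag is exactly what the paper also delegates to the explicit computations in \cite[Eqs.\ (2.13)--(2.15) and Section 2.5]{DHK2}.
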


\begin{proof}
Equation (\ref{variationWilson}) is simply the Makeenko--Migdal equation
(\ref{MMUN}) rewritten using the definition of the covariance. To establish
(\ref{variationVariance}) we need to use the abstract Makeenko--Migdal
equation established in Theorem 2 of \cite{DGHK}. (This result generalizes the
abstract Makeenko--Migdal equation formulated and proved for the plane case by
L\'{e}vy in \cite[Proposition 9.1.3]{LevyMaster}.) Following the argument in
Section 2.3 of \cite{DHK2}, we express the loop $L$ as%
\[
L=e_{1}Ae_{4}^{-1}e_{2}Be_{3}^{-1},
\]
where $A$ and $B$ are words in edges other than $e_{1},\ldots e_{4}.$ Then
$L_{1}=e_{1}Ae_{4}^{-1}$ and $L_{2}=e_{2}Be_{3}^{-1}.$ If $a_{1},\ldots,a_{4}$
are the edge variables corresponding to $e_{1},\ldots,e_{4},$ we then have
(following the convention that parallel transport is order reversing)%
\[
\mathrm{hol}(L)=a_{3}^{-1}\beta a_{2}a_{4}^{-1}\alpha a_{1},
\]
where $\alpha$ and $\beta$ are words in the edge variables other than
$a_{1},\ldots,a_{4}.$

Now, the abstract Makeenko--Migdal equation reads%
\begin{equation}
\left(  \frac{\partial}{\partial t_{1}}-\frac{\partial}{\partial t_{2}}%
+\frac{\partial}{\partial t_{3}}-\frac{\partial}{\partial t_{4}}\right)
\mathbb{E}\left\{  f\right\}  =-\mathbb{E}\left\{  \nabla^{a_{1}}\cdot
\nabla^{a_{2}}f\right\}  , \label{abstractMM}%
\end{equation}
whenever $f$ has \textquotedblleft extended gauge invariance\textquotedblright%
\ at the vertex $v.$ When the edges $e_{1},\ldots,e_{4}$ are distinct,
extended gauge invariance at $v$ means that
\[
f(a_{1}x,a_{2},a_{3}x,a_{4},\mathbf{b})=f(a_{1},a_{2}x,a_{3},a_{4}%
x,\mathbf{b})=f(a_{1},a_{2},a_{3},a_{4},\mathbf{b})
\]
for all $x\in K,$ where $\mathbf{b}$ is the collection of edge variables other
than $a_{1},\ldots,a_{4}.$ (See Section 4 of \cite{DHK2} for a discussion of
extended gauge invariance when the edges are not distinct.)

Let us apply (\ref{abstractMM}) to the function $f=\left\vert g\right\vert
^{2},$ where%
\[
g(a_{1},a_{2},a_{3},a_{4},\mathbf{b})=\mathrm{tr}(\mathrm{hol}(L))=\mathrm{tr}%
(a_{3}^{-1}\beta a_{2}a_{4}^{-1}\alpha a_{1}).
\]
We also use the notation
\begin{align*}
g_{1}(a_{1},a_{2},a_{3},a_{4},\mathbf{b})  &  =\mathrm{tr}(\mathrm{hol}%
(L_{1}))=\mathrm{tr}(a_{4}^{-1}\alpha a_{1})\\
g_{2}(a_{1},a_{2},a_{3},a_{4},\mathbf{b})  &  =\mathrm{tr}(\mathrm{hol}%
(L_{2}))=\mathrm{tr}(a_{3}^{-1}\beta a_{2}).
\end{align*}
We then note that%
\begin{align*}
\nabla^{a_{1}}\cdot\nabla^{a_{2}}f  &  =\left(  \nabla^{a_{1}}\cdot
\nabla^{a_{2}}g\right)  \bar{g}+g(\nabla^{a_{1}}\cdot\nabla^{a_{2}}\bar{g})\\
&  +(\nabla^{a_{1}}g)\cdot(\nabla^{a_{2}}\bar{g})+(\nabla^{a_{2}}%
g)\cdot(\nabla^{a_{1}}\bar{g}).
\end{align*}

Now, as verified in \cite[Eqs. (2.13)--(2.15)]{DHK2}, we have $\nabla^{a_{1}%
}\cdot\nabla^{a_{2}}g=-g_{1}g_{2}$. Meanwhile, using $\overline{\mathrm{tr}%
(U)}=\mathrm{tr}(U^{-1})$ for $U\in U(N)$ and computing as in the second
example in \cite[Section 2.5]{DHK2}, we have that%
\begin{align*}
(\nabla^{a_{1}}g)\cdot(\nabla^{a_{2}}\bar{g})  &  =\frac{1}{N^{2}}%
\mathrm{tr}(\mathrm{hol}(L_{1}L_{2}L_{1}^{-1}L_{2}^{-1}))\\
(\nabla^{a_{2}}g)\cdot(\nabla^{a_{1}}\bar{g})  &  =\frac{1}{N^{2}}%
\mathrm{tr}(\mathrm{hol}(L_{2}L_{1}L_{2}^{-1}L_{1}^{-1})).
\end{align*}
Thus,%
\begin{align}
&  \left(  \frac{\partial}{\partial t_{1}}-\frac{\partial}{\partial t_{2}%
}+\frac{\partial}{\partial t_{3}}-\frac{\partial}{\partial t_{4}}\right)
\mathbb{E\{}\left\vert g\right\vert ^{2}\}\nonumber\\
&  =\mathbb{E}\left\{  g_{1}g_{2}\bar{g}\right\}  +\mathbb{E}\left\{
g\overline{g_{1}}\overline{g_{2}}\right\}  -\mathbb{E}\left\{  \mathrm{tr}%
(\mathrm{hol}(L_{3}))\right\}  -\mathbb{E}\left\{  \mathrm{tr}(\mathrm{hol}%
(L_{4}))\right\}  . \label{VarVar1}%
\end{align}
Meanwhile, by the ordinary Makeenko--Migdal equation (\ref{MMUN}), we have%
\begin{align}
&  \left(  \frac{\partial}{\partial t_{1}}-\frac{\partial}{\partial t_{2}%
}+\frac{\partial}{\partial t_{3}}-\frac{\partial}{\partial t_{4}}\right)
\left\vert \mathbb{E}\left\{  g\right\}  \right\vert ^{2}\nonumber\\
&  =\mathbb{E}\left\{  g_{1}g_{2}\right\}  \mathbb{E}\left\{  \bar{g}\right\}
+\mathbb{E}\left\{  g\right\}  \mathbb{E}\left\{  \overline{g_{1}}%
\overline{g_{2}}\right\}  . \label{VarVar2}%
\end{align}
Subtracting (\ref{VarVar2}) from (\ref{VarVar1}) gives the claimed result.
\end{proof}

\subsection{Variance estimates\label{varianceEst.sec}}

For a complex-valued random variable $X,$ we define the \textbf{variance} of
$X$ by%
\[
\mathrm{Var}(X):=\mathbb{E}\{\left\vert X-\mathbb{E}\left\{  X\right\}
\right\vert ^{2}\}=\mathbb{E}\{\left\vert X\right\vert ^{2}\}-\left\vert
\mathbb{E}\left\{  X\right\}  \right\vert ^{2}.
\]
In particular,
\begin{equation}
\mathrm{Var}(X)\leq\mathbb{E}\{\left\vert X\right\vert ^{2}\}.
\label{VarXLess}%
\end{equation}
We then define the \textbf{standard deviation} of $X,$ denoted $\sigma_{X}$,
by $\sigma_{X}=\sqrt{\mathrm{Var}(X)}.$

We observe that for any two random variables $X$ and $Y,$ we have%
\begin{equation}
\sigma_{X+Y}\leq\sigma_{X}+\sigma_{Y}, \label{sigmaIneq}%
\end{equation}
and similarly for any number of random variables. (It is harmless to assume
that the expectation values of $X$ and $Y$---and therefore $X+Y$---are zero,
in which case (\ref{sigmaIneq}) is the triangle inequality for the $L^{2}$
norm.) We also consider the covariance of two random variables, defined as%
\begin{align}
\mathrm{Cov}\{X,Y\}  &  =\mathbb{E}\left\{  (X-\mathbb{E}\left\{  X\right\}
)(Y-\mathbb{E}\left\{  Y\right\}  )\right\} \label{covariance0}\\
&  =\mathbb{E}\left\{  XY\right\}  -\mathbb{E}\left\{  X\right\}
\mathbb{E}\left\{  Y\right\}  . \label{covariance}%
\end{align}
(For our purposes, it is convenient not to put a complex conjugate into this
definition; the reader should note that $\mathrm{Cov}\{X,X\}\neq\mathrm{Var}(X)$.) We record the elementary inequality%
\begin{equation}
\left\vert \mathrm{Cov}\{X,Y\}\right\vert \leq\sigma_{X}\sigma_{Y},
\label{covIneq}%
\end{equation}
which follows from (\ref{covariance0}) and the Cauchy--Schwarz inequality.

We now establish a simple estimate on the standard deviation of the product of
two bounded random variables.

\begin{proposition}
\label{varianceProd.prop}Suppose $X$ and $Y$ are two complex-valued random
variables satisfying $\left\vert X\right\vert \leq1$ and $\left\vert
Y\right\vert \leq1$. Then%
\[
\sqrt{\sigma_{XY}}\leq\sqrt{\sigma_{X}}+\sqrt{\sigma_{Y}}.
\]

\end{proposition}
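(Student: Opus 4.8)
The plan is to prove a slightly stronger inequality, namely the linear bound $\sigma_{XY}\leq\sigma_{X}+\sigma_{Y}$, from which the stated conclusion follows at once by subadditivity of the square root:
\[
\sqrt{\sigma_{XY}}\leq\sqrt{\sigma_{X}+\sigma_{Y}}\leq\sqrt{\sigma_{X}}+\sqrt{\sigma_{Y}}.
\]
(The square-root form is the one that is convenient when the estimate is iterated over the successive loop reductions, but it is cleanest to route through the linear bound.)

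To prove $\sigma_{XY}\leq\sigma_{X}+\sigma_{Y}$, write $\left\Vert Z\right\Vert _{2}:=\sqrt{\mathbb{E}\{\left\vert Z\right\vert ^{2}\}}$, so that $\sigma_{X}=\left\Vert X-\mathbb{E}\{X\}\right\Vert _{2}$, and similarly for $Y$ and $XY$. First I would use the fact that among all constants $c$ the choice $c=\mathbb{E}\{XY\}$ minimizes $\left\Vert XY-c\right\Vert _{2}$ (by the Pythagorean identity $\left\Vert XY-c\right\Vert _{2}^{2}=\sigma_{XY}^{2}+\left\vert \mathbb{E}\{XY\}-c\right\vert ^{2}$), so that
\[
\sigma_{XY}=\left\Vert XY-\mathbb{E}\{XY\}\right\Vert _{2}\leq\left\Vert XY-\mathbb{E}\{X\}\mathbb{E}\{Y\}\right\Vert _{2}.
\]
Next I would split the right-hand side using the algebraic identity
\[
XY-\mathbb{E}\{X\}\mathbb{E}\{Y\}=X\bigl(Y-\mathbb{E}\{Y\}\bigr)+\bigl(X-\mathbb{E}\{X\}\bigr)\mathbb{E}\{Y\},
\]
and apply the triangle inequality for $\left\Vert \cdot\right\Vert _{2}$ (the $L^{2}$ triangle inequality underlying (\ref{sigmaIneq})). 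For the first term, the hypothesis $\left\vert X\right\vert \leq1$ gives $\left\Vert X(Y-\mathbb{E}\{Y\})\right\Vert _{2}\leq\left\Vert Y-\mathbb{E}\{Y\}\right\Vert _{2}=\sigma_{Y}$; for the second term, $\left\vert \mathbb{E}\{Y\}\right\vert \leq\mathbb{E}\{\left\vert Y\right\vert \}\leq1$ gives $\left\Vert (X-\mathbb{E}\{X\})\mathbb{E}\{Y\}\right\Vert _{2}\leq\sigma_{X}$. Adding the two bounds yields $\sigma_{XY}\leq\sigma_{X}+\sigma_{Y}$, as desired.

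There is no real obstacle here: this is a routine $L^{2}$ estimate. The only point worth flagging is that one must not expect $\mathbb{E}\{XY\}=\mathbb{E}\{X\}\mathbb{E}\{Y\}$, since $X$ and $Y$ need not be independent (indeed in the applications they are the traces of holonomies of $L_{1}$ and $L_{2}$, which share the vertex $v$); this is exactly why the first step replaces the true mean $\mathbb{E}\{XY\}$ by the product $\mathbb{E}\{X\}\mathbb{E}\{Y\}$ at the cost of passing to an inequality, which is harmless because $\mathbb{E}\{XY\}$ is the $L^{2}$-closest constant to $XY$.
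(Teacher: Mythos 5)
Your proof is correct, and it takes a genuinely different route from the paper's. The paper centers both variables, writes $XY=\dot{X}\dot{Y}+\bar{X}\dot{Y}+\dot{X}\bar{Y}+\bar{X}\bar{Y}$ with $\dot{X}=X-\mathbb{E}\{X\}$, and must then control the cross term $\sigma_{\dot{X}\dot{Y}}$; it does so via $\mathrm{Var}(\dot{X}\dot{Y})\leq\mathbb{E}\{|\dot{X}|^{2}|\dot{Y}|^{2}\}\leq\sqrt{\mathbb{E}\{|\dot{X}|^{4}\}\mathbb{E}\{|\dot{Y}|^{4}\}}\leq4\sigma_{X}\sigma_{Y}$, which is where the geometric-mean term $2\sqrt{\sigma_{X}\sigma_{Y}}$ and hence the square-root form of the conclusion come from; the paper's argument lands exactly on $\sigma_{XY}\leq(\sqrt{\sigma_{X}}+\sqrt{\sigma_{Y}})^{2}$. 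You avoid the cross term entirely by using the asymmetric splitting $XY-\mathbb{E}\{X\}\mathbb{E}\{Y\}=X(Y-\mathbb{E}\{Y\})+(X-\mathbb{E}\{X\})\mathbb{E}\{Y\}$ together with the variational characterization of the variance (the mean is the $L^{2}$-closest constant), and the hypotheses $|X|\leq1$, $|\mathbb{E}\{Y\}|\leq1$ then give the linear bound $\sigma_{XY}\leq\sigma_{X}+\sigma_{Y}$ directly. This is strictly sharper than the paper's estimate (which exceeds it by $2\sqrt{\sigma_{X}\sigma_{Y}}$), and the stated inequality follows from it by subadditivity of $\sqrt{\cdot}$ as you say. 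Each step checks out for complex-valued variables: the Pythagorean identity holds because $\mathbb{E}\{Z-\mathbb{E}\{Z\}\}=0$ kills the cross term, and the two $L^{2}$ bounds are immediate. For the application in the paper, all that matters is that $\sigma_{XY}\rightarrow0$ whenever $\sigma_{X},\sigma_{Y}\rightarrow0$ for uniformly bounded variables, so either bound suffices; yours simply delivers a cleaner constant.
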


\begin{proof}
For simplicity of notation, let $\bar{X}=\mathbb{E}\left\{  X\right\}  $ and
let $\dot{X}=X-\bar{X}.$ Thus, $\mathrm{Var}(X)=\mathbb{E}\{|\dot{X}|^{2}\}.$
Then since $\left\vert X\right\vert \leq1,$ we have $\left\vert \bar
{X}\right\vert \leq1$ and $|\dot{X}|\leq2.$ Then by (\ref{VarXLess}) and the
Cauchy--Schwarz inequality, we have%
\begin{align*}
\mathrm{Var}(\dot{X}\dot{Y})  &  \leq\mathbb{E}\{|\dot{X}|^{2}|\dot{Y}%
|^{2}\}\\
&  \leq\sqrt{\mathbb{E}\{|\dot{X}|^{4}\}\mathbb{E}\{|\dot{Y}|^{4}\}}\\
&  \leq4\sigma_{X}\sigma_{Y},
\end{align*}
since $|\dot{X}|^{4}=|\dot{X}|^{2}|\dot{X}|^{2}\leq4|\dot{X}|^{2}$ and
similarly for $|\dot{Y}|^{4}.$ Now,%
\begin{align*}
\mathrm{Var}(XY)  &  =\mathrm{Var}(\dot{X}\dot{Y}+\bar{X}\dot{Y}+\dot{X}%
\bar{Y}+\bar{X}\bar{Y})\\
&  =\mathrm{Var}(\dot{X}\dot{Y}+\bar{X}\dot{Y}+\dot{X}\bar{Y})
\end{align*}
because adding a constant does not change the variance. Thus, by
(\ref{sigmaIneq}),
\begin{align*}
\sigma_{XY}  &  \leq\sigma_{\dot{X}\dot{Y}}+\left\vert \bar{X}\right\vert
\sigma_{\dot{Y}}+\left\vert \bar{Y}\right\vert \sigma_{\dot{X}}\\
&  \leq2\sqrt{\sigma_{X}\sigma_{Y}}+\sigma_{Y}+\sigma_{X}\\
&  =(\sqrt{\sigma_{X}}+\sqrt{\sigma_{Y}})^{2},
\end{align*}
which reduces to the claimed formula.
\end{proof}

\section{Examples\label{computing.sec}}

Before developing a general method for analyzing a general loop in $S^{2}$
with simple crossings, we consider two illustrative examples, the same two
that are considered in \cite{DaK}.

\subsection{The figure eight\label{figure8.sec}}

Although we consider loops on $S^{2},$ we can draw them as a loops on the
plane by picking a face and placing a puncture in that face, so that what is
left of $S^{2}$ is identifiable with $\mathbb{R}^{2}.$ We need to keep in
mind, however, that the \textquotedblleft unbounded\textquotedblright\ face in
such a drawing is actually a bounded face (with finite area) on $S^{2}.$
Furthermore, by placing the puncture in different faces, the same loop on
$S^{2}$ can give inequivalent loops on $\mathbb{R}^{2}.$ As a simple example,
consider the loop in Figure \ref{double.fig}, which we refer to as the figure
eight. Figure \ref{double.fig} gives two different views of this loop coming
from puncturing two different faces.

We write the holonomy around the figure eight as
\[
\mathrm{hol}_{L}(a,b,c)
\]
to indicate the dependence of the Wilson loop functional on the areas. In this
case, the loops $L_{1}$ and $L_{2}$ occurring on the right-hand side of the
Makeenko--Migdal equation are both simple closed curves. The loop $L_{1}$ (the outer loop on the left-hand side of Figure \ref{double.fig}, which is the lower loop on the right-hand side of the figure) encloses
areas $a+b$ and $c$. The loop $L_{2}$ (the inner loop on the left-hand side of the figure, which is the  upper loop on the right-hand side of the figure) encloses areas $b+c$ and $a$.%

\begin{figure}[ptb]%
\centering
\includegraphics[
height=1.3673in,
width=2.5278in
]%
{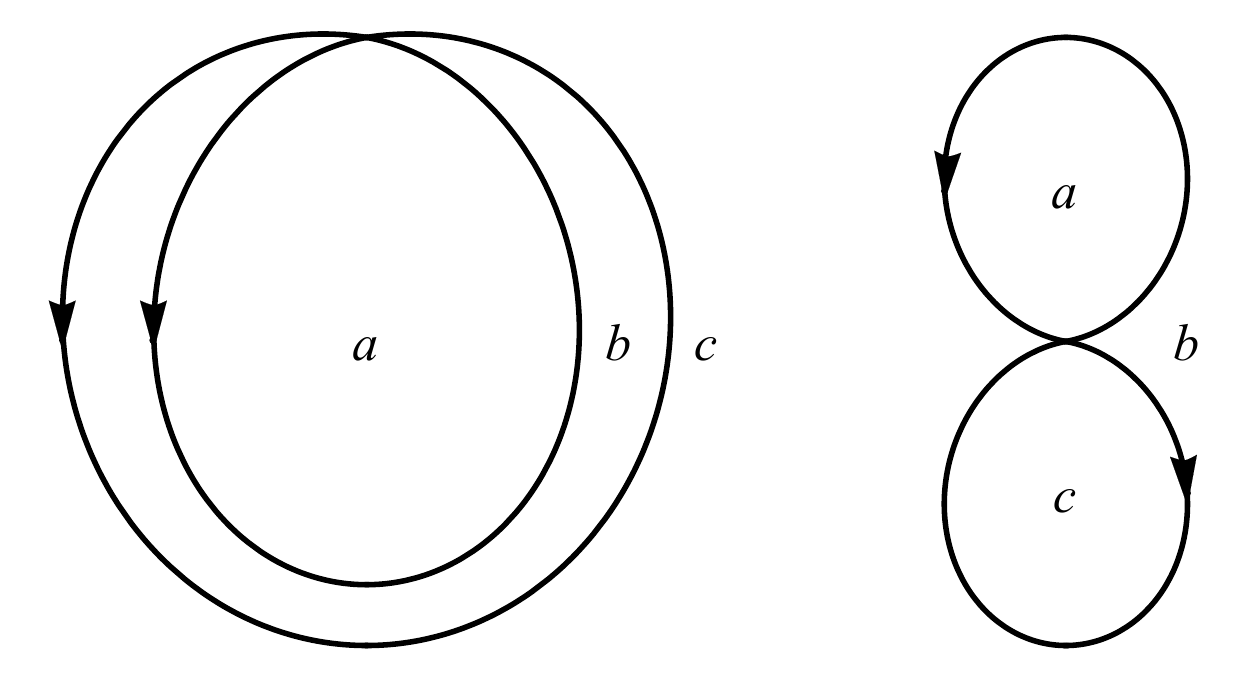}%
\caption{Two views of a loop on $S^{2}$}%
\label{double.fig}%
\end{figure}

\begin{theorem}
\label{figure8lim.thm}The limit%
\[
W_{L}(a,b,c):=\lim_{N\rightarrow\infty}\mathbb{E}\left\{  \mathrm{tr}%
(\mathrm{hol}_{L}(a,b,c))\right\}
\]
exists and satisfies the large-$N$ Makeenko--Migdal equation in the form%
\[
\left.  \frac{d}{dt}W_{L}(a-t,b+2t,c-t)\right\vert _{t=0}=W_{1}(a+b,c)W_{1}%
(a,b+c),
\]
where $W_{1}$ is as in Notation \ref{w1.notation}. Furthermore, we have
\[
\lim_{N\rightarrow\infty}\mathrm{Var}\left\{  \mathrm{tr}(\mathrm{hol}%
_{L}(a,b,c))\right\}  =0.
\]

\end{theorem}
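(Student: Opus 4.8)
The plan is to apply the finite-$N$ Makeenko--Migdal equation of Proposition~\ref{variation.prop} at the unique crossing of the figure eight, using the checkerboard variation $(a,b,c)\mapsto(a-t,b+2t,c-t)$ appearing in the statement, and then to integrate in $t$ until one of the areas reaches zero. Set $m=\min(a,c)$. For $0<t<m$ all three areas are positive, so Proposition~\ref{variation.prop} gives
\[
\frac{d}{dt}\,\mathbb{E}\{\mathrm{tr}(\mathrm{hol}_{L}(a-t,b+2t,c-t))\}
=\mathbb{E}\{\mathrm{tr}(\mathrm{hol}(L_{1}(t)))\}\,\mathbb{E}\{\mathrm{tr}(\mathrm{hol}(L_{2}(t)))\}
+\mathrm{Cov}\big\{\mathrm{tr}(\mathrm{hol}(L_{1}(t))),\,\mathrm{tr}(\mathrm{hol}(L_{2}(t)))\big\}.
\]
Here $L_{1}$ and $L_{2}$ are simple closed curves, and, tracking how their faces are assembled from those of $L$ (as in the discussion of Figure~\ref{double.fig}), the two faces of $L_{1}(t)$ have areas $(a+b+t,\,c-t)$ and those of $L_{2}(t)$ have areas $(b+c+t,\,a-t)$. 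At $t=m$ one of the two ``lobes'' of the figure eight has shrunk to zero area, so $\mathrm{hol}_{L}(a-m,b+2m,c-m)$ reduces to the holonomy of a simple closed curve: Sengupta's formula (\ref{senguptasFormula}) depends continuously on the face areas down to zero, since $\rho_{s}\to\delta_{\mathrm{id}}$ as $s\to0^{+}$, and a zero-area face forces the corresponding edge holonomy to be the identity. Since $|\mathrm{tr}(U)|\le1$ for $U\in U(N)$, the right-hand side above is bounded by $2$ uniformly in $N$ and $t$, so the fundamental theorem of calculus (the integrand extending continuously to $[0,m]$) yields
\[
\mathbb{E}\{\mathrm{tr}(\mathrm{hol}_{L}(a,b,c))\}
=\mathbb{E}\{\mathrm{tr}(\mathrm{hol}_{L}(a-m,b+2m,c-m))\}
-\int_{0}^{m}\!\Big(\mathbb{E}\{\mathrm{tr}(\mathrm{hol}(L_{1}(t)))\}\,\mathbb{E}\{\mathrm{tr}(\mathrm{hol}(L_{2}(t)))\}+\mathrm{Cov}\{\cdots\}\Big)\,dt .
\]

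Now let $N\to\infty$. The first term on the right converges by Theorem~\ref{dn.thm}, being the Wilson loop functional of a simple closed curve. For each fixed $t$, Theorem~\ref{dn.thm} gives $\mathbb{E}\{\mathrm{tr}(\mathrm{hol}(L_{i}(t)))\}\to W_{1}$ of the relevant areas, while
\[
\big|\mathrm{Cov}\{\mathrm{tr}(\mathrm{hol}(L_{1}(t))),\mathrm{tr}(\mathrm{hol}(L_{2}(t)))\}\big|
\le\sigma_{\mathrm{tr}(\mathrm{hol}(L_{1}(t)))}\,\sigma_{\mathrm{tr}(\mathrm{hol}(L_{2}(t)))}\longrightarrow 0
\]
by (\ref{covIneq}) and the vanishing of the variances in Theorem~\ref{dn.thm}. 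Since the integrand is uniformly bounded, dominated convergence applies on $[0,m]$, so $\lim_{N\to\infty}\mathbb{E}\{\mathrm{tr}(\mathrm{hol}_{L}(a,b,c))\}$ exists and equals
\[
W_{L}(a,b,c)=W_{L}(a-m,b+2m,c-m)-\int_{0}^{m}W_{1}(a+b+t,c-t)\,W_{1}(b+c+t,a-t)\,dt ,
\]
where $W_{L}$ at the endpoint is $W_{1}$ of the appropriate areas. Continuity of $W_{L}$ in $(a,b,c)$ follows from continuity of $W_{1}$ and of the right-hand side in the parameters (the two choices of which face to shrink agreeing on $\{a=c\}$ by the finite-$N$ identity). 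Finally, replacing $m$ by $t$ in the last displayed identity and passing to the limit gives $W_{L}(a-t,b+2t,c-t)=W_{L}(a,b,c)+\int_{0}^{t}W_{1}(a+b+s,c-s)\,W_{1}(b+c+s,a-s)\,ds$; differentiating (the limiting integrand being continuous) and setting $t=0$ yields the asserted Makeenko--Migdal equation, using $W_{1}(b+c,a)=W_{1}(a,b+c)$.

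For the variance, put $V_{N}(t)=\mathrm{Var}\{\mathrm{tr}(\mathrm{hol}_{L}(a-t,b+2t,c-t))\}$ and apply (\ref{variationVariance}) of Proposition~\ref{variation.prop}. The two $1/N^{2}$ terms are $O(1/N^{2})$ uniformly. For the remaining term, (\ref{covIneq}) together with $\sigma_{\overline{X}}=\sigma_{X}$ bounds it by $2\,\sigma_{\mathrm{tr}(\mathrm{hol}(L_{1}(t)))\,\mathrm{tr}(\mathrm{hol}(L_{2}(t)))}\sqrt{V_{N}(t)}$, and Proposition~\ref{varianceProd.prop} (applicable since $|\mathrm{tr}(\mathrm{hol}(L_{i}))|\le1$) bounds the first factor by
\[
\eta_{N}(t):=\Big(\sqrt{\sigma_{\mathrm{tr}(\mathrm{hol}(L_{1}(t)))}}+\sqrt{\sigma_{\mathrm{tr}(\mathrm{hol}(L_{2}(t)))}}\Big)^{2},
\]
which tends to $0$ for each $t$ by Theorem~\ref{dn.thm} and is bounded by $4$. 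Using the crude bound $\sqrt{V_{N}(t)}\le1$ (valid since $V_{N}\le\mathbb{E}\{|\mathrm{tr}(\mathrm{hol}_{L})|^{2}\}\le1$) and integrating from $0$ to $m$,
\[
V_{N}(0)\le V_{N}(m)+2\int_{0}^{m}\eta_{N}(t)\,dt+\frac{2m}{N^{2}} .
\]
By the same zero-area reduction, $V_{N}(m)$ is the variance for a simple closed curve, so $V_{N}(m)\to0$ by Theorem~\ref{dn.thm}; the integral tends to $0$ by dominated convergence; and the last term tends to $0$. Hence $\mathrm{Var}\{\mathrm{tr}(\mathrm{hol}_{L}(a,b,c))\}=V_{N}(0)\to0$.

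The step I expect to require the most care is the variance bound: equation (\ref{variationVariance}) expresses $V_{N}'(t)$ partly in terms of $V_{N}(t)$ itself (through the covariance against $\overline{\mathrm{tr}(\mathrm{hol}(L))}$), so one might fear needing a Gr\"onwall-type argument to close the estimate. The point is that the trivial bound $\sqrt{V_{N}}\le1$ already suffices, so no such argument is needed, and the whole proof rests only on Theorem~\ref{dn.thm} (existence of the limit and vanishing of the variance for simple closed curves) together with the elementary product estimate of Proposition~\ref{varianceProd.prop} and a zero-area degeneration of Sengupta's formula.
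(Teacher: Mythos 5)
Your proposal is correct and follows essentially the same route as the paper: integrate the finite-$N$ Makeenko--Migdal identity of Proposition \ref{variation.prop} along the checkerboard variation until one lobe degenerates to a simple closed curve, pass to $N\to\infty$ using Theorem \ref{dn.thm}, the covariance bound (\ref{covIneq}), Proposition \ref{varianceProd.prop}, and dominated convergence, and recover the large-$N$ equation by differentiating the integrated identity. The only cosmetic differences are that the paper integrates to $a-\varepsilon$ and lets $\varepsilon\to0$ rather than invoking continuity of the integrand up to $t=m$, and fixes $c\ge a$ rather than writing $m=\min(a,c)$.
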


See (\ref{figure8finite2}) for a formula for $W_L(a,b,c )$ in terms of the function $W_1(\cdot,\cdot)$.
If the partial derivatives of $W_{L}(a,b,c)$ with respect $a,$ $b,$ and $c$
exist and are continuous, it follows from the chain rule that
\[
\left.  \frac{d}{dt}W_{L}(a-t,b+2t,c-t)\right\vert _{t=0}=\left(
2\frac{\partial}{\partial b}-\frac{\partial}{\partial a}-\frac{\partial
}{\partial c}\right)  W_{L}(a,b,c).
\]
The following proof, however, does not establish the existence or continuity
of the partial derivatives of $W_{L}(a,b,c).$

\begin{proof}
We denote the holonomies for the two loops $L_{1}$ and $L_{2}$ as
$\mathrm{hol}_{L_{1}}(a+b,c)$ and $\mathrm{hol}_{L_{2}}(a,b+c)$. The face
labeled as $F_{1}$ should be the one bounded by the two outgoing edges of $L$
at $v,$ which is the face with area $b.$ Then $F_{3}$ coincides with $F_{1},$
while $F_{2}$ and $F_{4}$ are the faces with areas $a$ and $c$ (in either
order). We assume $c\geq a,$ with the case $c<a$ being entirely similar.

Proposition \ref{variation.prop} takes the form
\begin{align}
&  \frac{d}{dt}\mathbb{E}\left\{  \mathrm{tr}\left(  \mathrm{hol}%
_{L}(a-t,b+2t,c-t)\right)  \right\} \nonumber\\
&  =\left(  2\frac{\partial}{\partial b}-\frac{\partial}{\partial a}%
-\frac{\partial}{\partial c}\right)  \mathbb{E}\left\{  \mathrm{tr}\left(
\mathrm{hol}_{L}(a-t,b+2t,c-t)\right)  \right\} \nonumber\\
&  =\mathbb{E}\left\{  \mathrm{tr}\left(  \mathrm{hol}_{L_{1}}%
(a+b+t,c-t)\right)  \right\}  \mathbb{E}\left\{  \mathrm{tr}\left(
\mathrm{hol}_{L_{2}}(a-t,b+c+t)\right)  \right\} \nonumber\\
&  +\mathrm{Cov}\left\{  \mathrm{tr}\left(  \mathrm{hol}_{L_{1}}%
(a+b+t,c-t)\right)  ,\mathrm{tr}\left(  \mathrm{hol}_{L_{2}}%
(a-t,b+c+t)\right)  \right\}  . \label{figure8deriv2}%
\end{align}
Let us denote $\mathbb{E}\left\{  \mathrm{tr}\left(  \mathrm{hol}%
_{L}(a-t,b+2t,c-t)\right)  \right\}  $ by $F(t)$ (with $a,$ $b$, and $c$
fixed). We then write $F(0)=F(a-\varepsilon)-\int_{0}^{a-\varepsilon}%
F^{\prime}(t)~dt$; that is,%
\begin{align}
&  \mathbb{E}\left\{  \mathrm{tr}\left(  \mathrm{hol}_{L}(a,b,c)\right)
\right\} \nonumber\\
&  =\mathbb{E}\left\{  \mathrm{tr}\left(  \mathrm{hol}_{L}(\varepsilon
,2a+b-2\varepsilon,c-a+\varepsilon)\right)  \right\} \nonumber\\
&  -\int_{0}^{a-\varepsilon}\mathbb{E}\left\{  \mathrm{tr}\left(
\mathrm{hol}_{L_{1}}(a+b+t,c-t)\right)  \right\}  \mathbb{E}\left\{
\mathrm{tr}\left(  \mathrm{hol}_{L_{2}}(a-t,b+c+t)\right)  \right\}
~dt\nonumber\\
&  -\int_{0}^{a-\varepsilon}\mathrm{Cov}\left\{  \mathrm{tr}\left(
\mathrm{hol}_{L_{1}}(a+b+t,c-t)\right)  ,\mathrm{tr}\left(  \mathrm{hol}%
_{L_{2}}(a-t,b+c+t)\right)  \right\}  ~dt. \label{figure8finite}%
\end{align}

Now, it should be clear geometrically that if we let the area $a$ in the
figure eight tend to zero, the result is a simple closed curve. That is to
say, we expect that%
\begin{align}
&  \lim_{\varepsilon\rightarrow0}\mathbb{E}\left\{  \mathrm{tr}\left(
\mathrm{hol}_{L}(\varepsilon,2a+b-2\varepsilon,c-a+\varepsilon)\right)
\right\} \nonumber\\
&  =\mathbb{E}\left\{  \mathrm{tr}\left(  \mathrm{hol}_{L_{0}}%
(2a+b,c-a)\right)  \right\}  , \label{aLim}%
\end{align}
where $L_{0}(\alpha,\beta)$ is a simple closed curve on $S^{2}$ enclosing
areas $\alpha$ and $\beta.$ Analytically, (\ref{aLim}) follows easily from
Sengupta's formula, using that $\rho_{a}(\cdot)$ converges to a $\delta
$-measure on $K$ as $a$ tends to zero. Thus, letting $\varepsilon$ tend to
zero, we obtain%
\begin{align}
&  \mathbb{E}\left\{  \mathrm{tr}\left(  \mathrm{hol}_{L}(a,b,c)\right)
\right\} \nonumber\\
&  =\mathbb{E}\left\{  \mathrm{tr}\left(  \mathrm{hol}_{L_{0}}%
(2a+b,c-a)\right)  \right\} \nonumber\\
&  -\int_{0}^{a}\mathbb{E}\left\{  \mathrm{tr}\left(  \mathrm{hol}_{L_{1}%
}(a+b+t,c-t)\right)  \right\}  \mathbb{E}\left\{  \mathrm{tr}\left(
\mathrm{hol}_{L_{2}}(a-t,b+c+t)\right)  \right\}  ~dt\nonumber\\
&  -\int_{0}^{a}\mathrm{Cov}\left\{  \mathrm{tr}\left(  \mathrm{hol}_{L_{1}%
}(a+b+t,c-t)\right)  ,\mathrm{tr}\left(  \mathrm{hol}_{L_{2}}%
(a-t,b+c+t)\right)  \right\}  ~dt. \label{figure8finite2}%
\end{align}

Note that all holonomies on the right-hand side of (\ref{figure8finite2}) are
of simple closed curves. If we use (\ref{conj2}) in Theorem \ref{dn.thm}
together with the inequality (\ref{covIneq}) and dominated convergence, we
find that the last term in (\ref{figure8finite2}) tends to zero as $N$ tends
to infinity. Then using (\ref{conj1}) in Theorem \ref{dn.thm} along with
dominated convergence, we may let $N\rightarrow\infty$ to obtain%
\begin{align}
\lim_{N\rightarrow\infty}\mathbb{E}\left\{  \mathrm{tr}\left(  \mathrm{hol}%
_{L}(a,b,c)\right)  \right\}   &  =W_{1}(2a+b,c-a)\nonumber\\
&  -\int_{0}^{a}W_{1}(a+b+t,c-t)W_{1}(a-t,b+c+t)~dt, \label{figure8large}%
\end{align}
for $c\geq a,$ where $W_{1}$ is as in Notation \ref{w1.notation}. (Note that
the normalized trace defined in (\ref{normalizedTrace}) satisfies $\left\vert
\mathrm{tr}(U)\right\vert \leq1$ for all $U\in U(N),$ so that dominated
convergence applies in both integrals in (\ref{figure8finite2}).) This result
establishes the first claim in the theorem.

If we now subtract the value of (\ref{figure8large}) at $(a-s,b+2s,c-s)$ and
the value at $(a,b,c),$ we obtain%
\begin{align*}
&  W_{L}(a-s,b+2s,c-s)-W_{L}(a,b,c)\\
&  =\int_{0}^{s}W_{1}(a+b+t,c-t)W_{1}(a-t,b+c+t)~dt.
\end{align*}
Dividing this relation by $s$ and letting $s$ tend to zero gives%
\[
\left.  \frac{\partial}{\partial s}W_{L}(a-s,b+2s,c-s)\right\vert _{s=0}%
=W_{1}(a+b+t,c-t)W_{1}(a-t,b+c+t)
\]
by the continuity of $W_{1}.$ This relation is the desired large-$N$
Makeenko--Migdal equation.%

\begin{figure}[ptb]%
\centering
\includegraphics[
height=1.6276in,
width=3.5328in
]%
{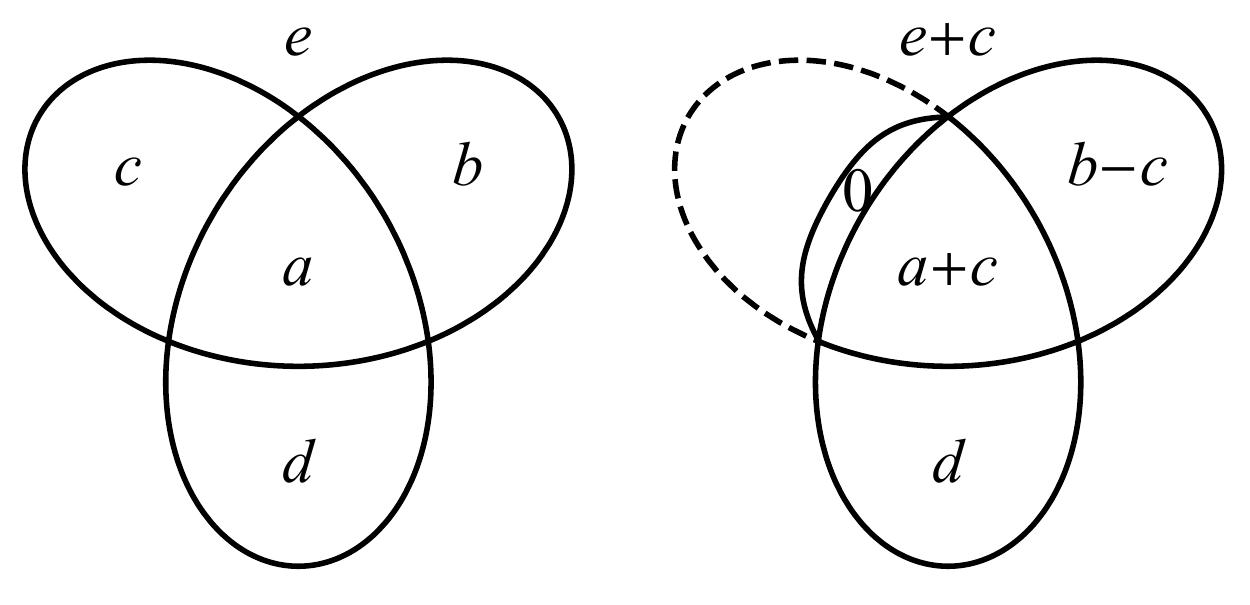}%
\caption{Trefoil with one lobe shrunk to area zero}%
\label{trefoiloneshrunk1.fig}%
\end{figure}

Meanwhile, by the second relation in Proposition \ref{variation.prop}, we have%
\begin{align}
&  \mathrm{Var}\left\{  \mathrm{tr}(\mathrm{hol}_{L}(a,b,c))\right\}
=\mathrm{Var}\left\{  \mathrm{tr}(\mathrm{hol}_{L_{0}}(b+2a,c-a))\right\}
\nonumber\\
&  -2\int_{0}^{a}\mathrm{Cov}\left\{  \mathrm{tr}(\mathrm{hol}(L_{1}%
))\mathrm{tr}(\mathrm{hol}(L_{2})),\mathrm{tr}(\mathrm{hol}(L))\right\}
~dt\nonumber\\
&  -\frac{2}{N^{2}}\int_{0}^{a}\mathbb{E}\left\{  \mathrm{tr}(\mathrm{hol}%
(L_{3}))\right\}  ~dt. \label{doubleVariance}%
\end{align}
The first term on the right-hand side tends to zero as $N$ tends to infinity,
by Theorem \ref{dn.thm}. In the second term on the right-hand side, $L_{1}$
and $L_{2}$ are simple closed curves. Furthermore, the normalized trace
satisfies $\left\vert \mathrm{tr}(U)\right\vert \leq1$ for all $U\in U(N).$
Thus, using (\ref{covIneq}) and Proposition \ref{varianceProd.prop} together
with Theorem \ref{dn.thm}, we see that the second term on the right-hand side
of (\ref{doubleVariance}) tends to zero. Finally, since $\left\vert
\mathrm{tr}(U)\right\vert \leq1,$ the last term on the right-hand side
manifestly goes to zero.
\end{proof}

%

\begin{figure}[ptb]%
\centering
\includegraphics[
height=1.5471in,
width=2.981in
]%
{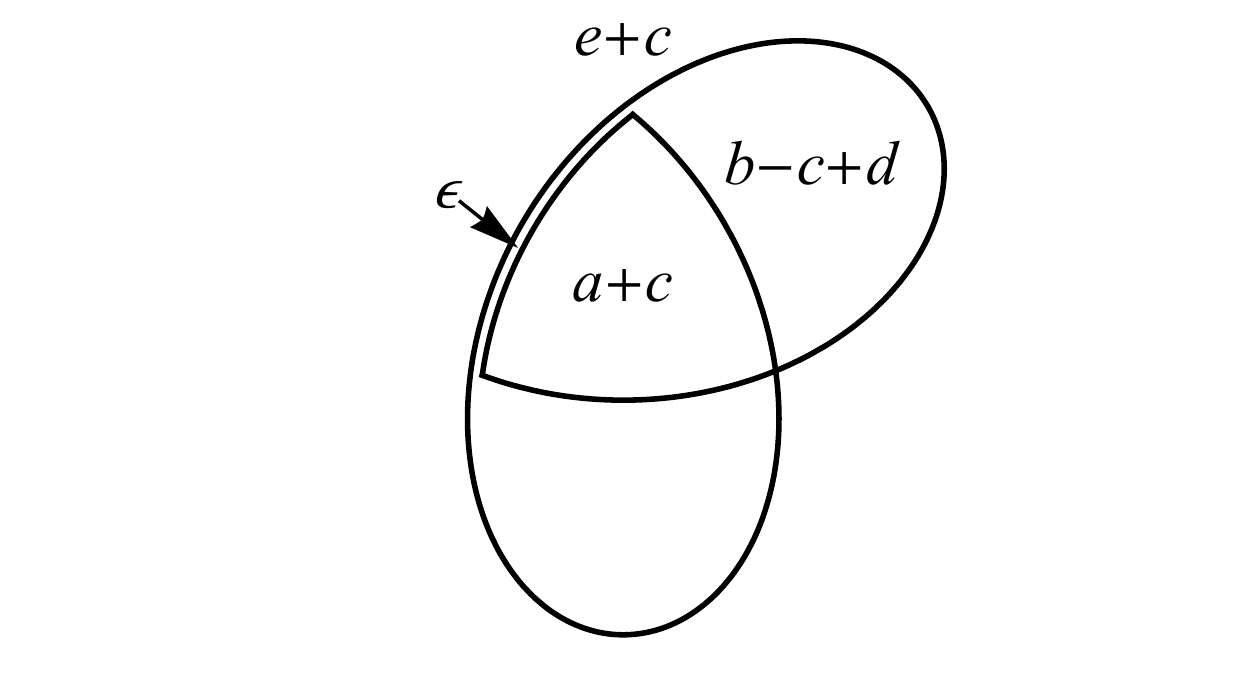}%
\caption{This loop is a perturbation of the loop on the right-hand side of
Figure \ref{trefoiloneshrunk1.fig}}%
\label{trefoiloneshrunk2.fig}%
\end{figure}

\subsection{The trefoil\label{trefoil.sec}}

In this section, we briefly outline an analysis of the trefoil loop by a
method similar to the one in the previous subsection. Later we will develop a
systematic method for analyzing any loop; this will provide an alternative
analysis of the trefoil. We label the areas of the faces as in Figure
\ref{trefoiloneshrunk1.fig}, where we may assume without loss of generality
that $c\leq b.$ We now perform a Makeenko--Migdal variation at the vertex in
the top middle of the figure. In this case, the loops $L_{1}$ and $L_{2}$ turn
out to be simple closed curves.

Let us denote by $L^{\prime}$ the loop on the right-hand side of Figure
\ref{trefoiloneshrunk1.fig}. Then $L^{\prime}$ is the limit as $\varepsilon$
tends to zero of the loops $L_{\varepsilon}^{\prime\prime}$ in Figure
\ref{trefoiloneshrunk2.fig}, where $\varepsilon$ denotes the distance between
the two nearby arcs of the loop. Specifically, in Figure
\ref{trefoiloneshrunk2.fig}, we let $\varepsilon$ tend to zero, while keeping
all of the areas of the faces fixed to the values indicated, using the
continuity properties of the Wilson loop functional developed in \cite[Theorem
2.58]{LevSurfaces}. But, $\mathbb{E}\left\{  \mathrm{tr}(\mathrm{hol}%
(L_{\varepsilon}^{\prime\prime}))\right\}  $ is independent of $\varepsilon$
and we conclude that $\mathbb{E}\left\{  \mathrm{tr}(\mathrm{hol}(L^{\prime
}))\right\}  =\mathbb{E}\left\{  \mathrm{tr}(\mathrm{hol}(L_{\varepsilon
}^{\prime\prime}))\right\}  .$ But since the loop $L_{\varepsilon}%
^{\prime\prime}$ is of the type analyzed in Section \ref{computing.sec}, we
may already know the large-$N$ behavior of $\mathbb{E}\left\{  \mathrm{tr}%
(\mathrm{hol}(L^{\prime}))\right\}  .$ The argument then proceeds much as in
Section \ref{figure8.sec}; since we will develop later a systematic method for
analyzing arbitrary loops, we omit the details of this analysis.

Other examples are not quite so easy to simplify by using the Makeenko--Migdal
equation at a single vertex. In the loop in Figure \ref{quad.fig}, for
example, it is not evident how shrinking any one of the faces to zero
simplifies the problem.

\section{Analysis of a general loop\label{general.sec}}

\subsection{The strategy\label{strategy.sec}}%

\begin{figure}[ptb]%
\centering
\includegraphics[
height=2.3091in,
width=2.3134in
]%
{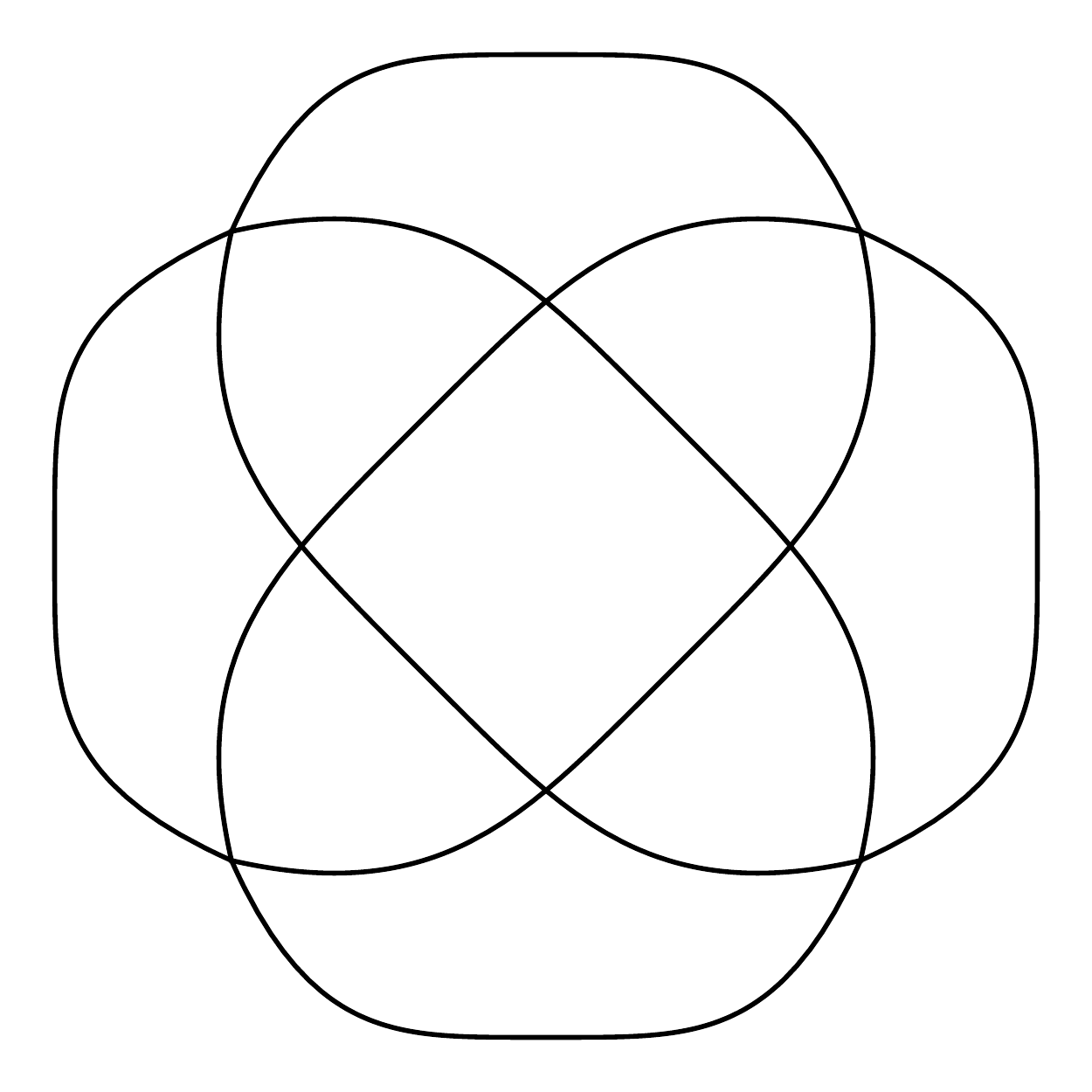}%
\caption{A more complicated loop}%
\label{quad.fig}%
\end{figure}

Given an arbitrary loop $L$ traced out on a graph in $S^{2}$ with simple
crossings, we will consider a linear combination of Makeenko--Migdal
variations of the areas over all the vertices of $L.$ We will show that it is
possible to make such a variation, depending on a parameter $t,$ in such a way
that as $t$ tends to 1, \textit{all but two} of the areas of the faces tend to
zero. Thus, in the $t\rightarrow1$ limit, we effectively have a loop with only
two faces. Indeed, we will show that the $t\rightarrow1$ limit of the Wilson
loop functional is the Wilson loop functional for a loop $L^{n}$ that winds
$n$ times around a simple closed curve. Here $n$ is an integer determined by
the topology of the original loop $L.$ (A similar but not identical procedure
is used in \cite{DN}. See Section \ref{selecting.sec} for a comparison.)%

\begin{figure}[ptb]%
\centering
\includegraphics[
height=2.4102in,
width=2.4777in
]%
{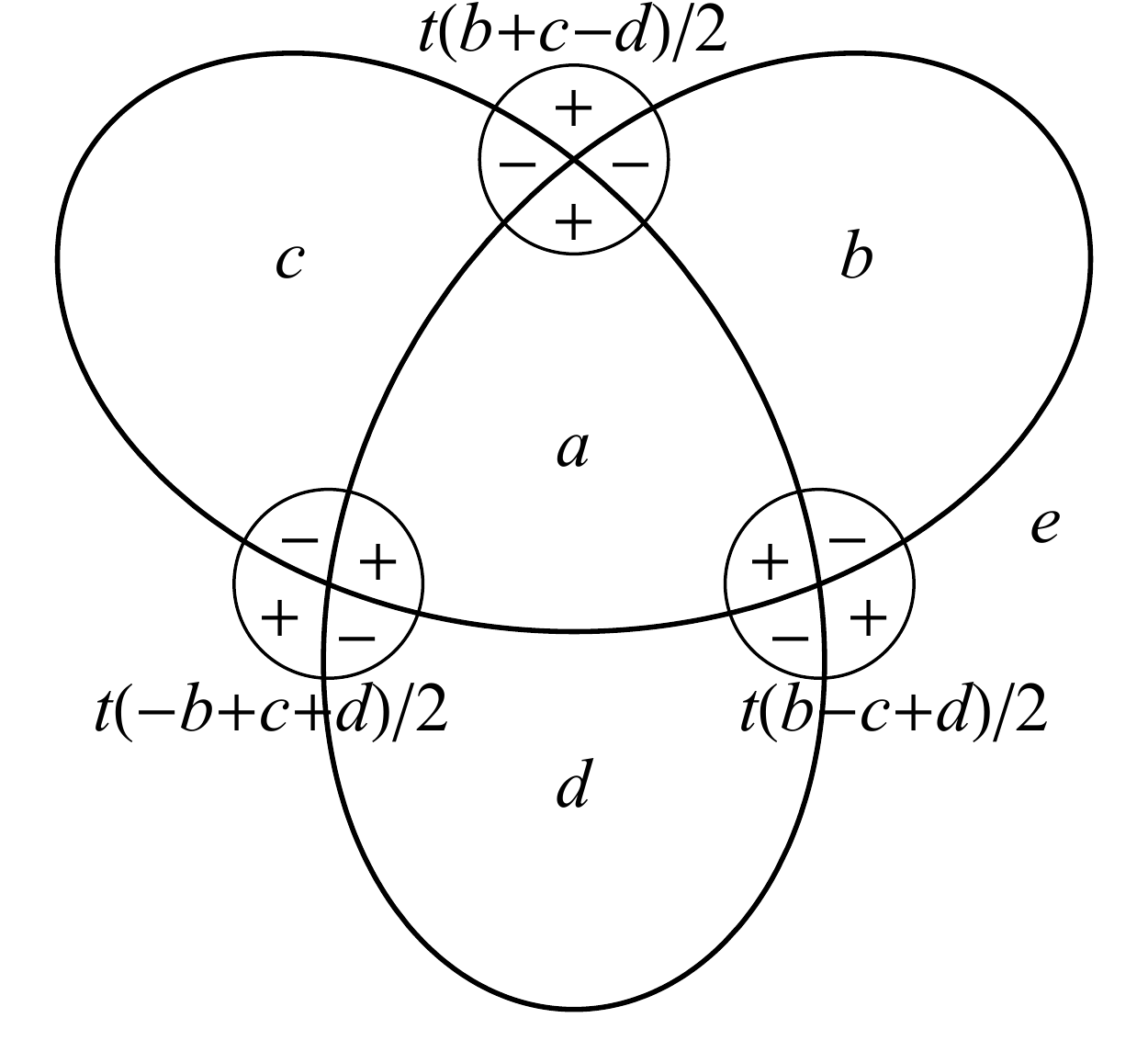}%
\caption{We can shrink the areas of the three lobes of the trefoil to zero,
while increasing the areas of the other two faces}%
\label{trefoilshrink.fig}%
\end{figure}

Consider, for example, the trefoil loop of Section \ref{trefoil.sec}. If we
vary the areas by the amounts shown in Figure \ref{trefoilshrink.fig}, the net
effect on the areas is:%
\[%
\begin{array}
[c]{cc}%
b\mapsto b-tb & \quad a\mapsto a+t(b+c+d)/2\\
c\mapsto c-tc & \quad e\mapsto e+t(b+c+d)/2\\
d\mapsto d-td &
\end{array}
.
\]
Thus, as $t$ varies from $0$ to $1,$ the areas $b,$ $c,$ and $d$ shrink
simultaneously to zero, while the areas of the two remaining faces increase.
The limiting curve is shown in Figure \ref{trefoilreduced.fig}. If the areas
of the three lobes are zero, the curve becomes a circle traversed twice.

Let $L(t)$ denote the trefoil with areas varying as above. If we differentiate
$\mathbb{E}\left\{  \mathrm{tr}(\mathrm{hol}(L(t)))\right\}  $ with respect to
$t,$ then by the chain rule, we will get a linear combination of terms of the
form
\[
\mathbb{E}\left\{  \mathrm{tr}(\mathrm{hol}(L_{1,j}(t)))\right\}
\mathbb{E}\left\{  \mathrm{tr}(\mathrm{hol}(L_{2,j}(t)))\right\}  ,
\]
where $L_{1,j}(t)$ and $L_{2,j}(t)$ represent the loop $L(t)$ cut at the $j$th
vertex of the trefoil ($j=1,2,3$), along with some covariance terms. Each
$L_{1,j}(t)$ and $L_{2,j}(t)$ is actually a simple closed curve, but in any
case, these curves have fewer crossings than the original trefoil. It is then
a straightforward matter to let $N$ tend to infinity to get the limiting
Wilson loop functional, and similarly for the variance.

For an arbitrary loop $L$ with $k$ crossings, we will show that we can deform
$L$ into a loop $L_{n}$ that winds $n$ times around a simple closed curve. The
variation of the Wilson loop functional along this path will be a linear
combination of products of Wilson loop functionals for curves with at most
$k-1$ crossings. In an inductive argument then, it remains only to analyze
$L_{n},$ which we do by another induction, this time reducing $L_{n}$ to
$L_{n-1},$ and so on.%

\begin{figure}[ptb]%
\centering
\includegraphics[
height=2.0384in,
width=2.4976in
]%
{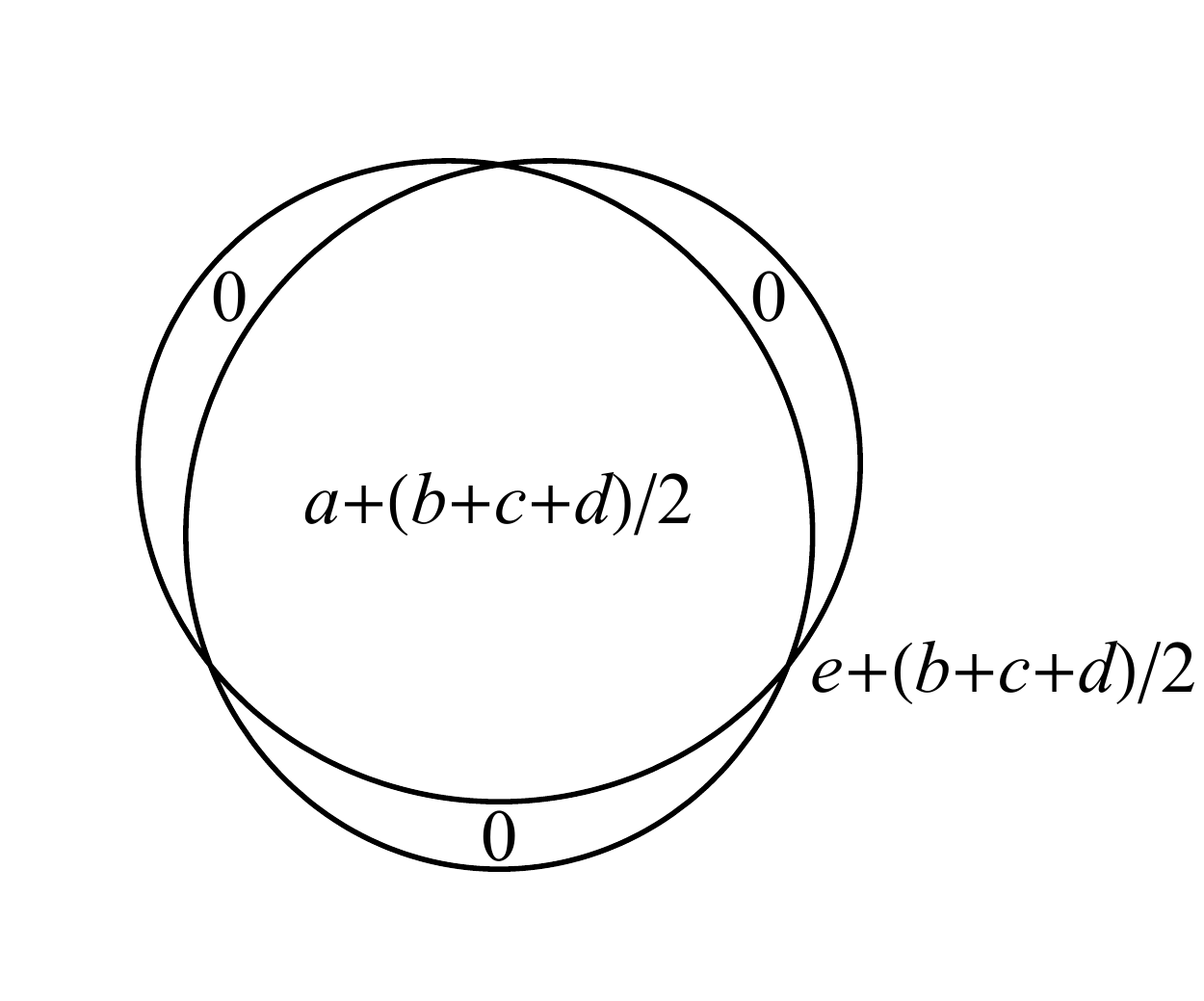}%
\caption{The trefoil with the lobes shrunk to area zero}%
\label{trefoilreduced.fig}%
\end{figure}

\subsection{Winding numbers}

We consider $S^{2}$ with a fixed orientation. We then consider a loop $L$
traced out on a graph in $S^{2}$ and having only simple crossings. We consider
the faces of $L,$ that is, the connected components of the complement of $L$
in $S^{2}.$ If we pick a face $F_{0},$ we can puncture $F_{0}$, thus turning
$S^{2}$ topologically into $\mathbb{R}^{2}.$ The orientation on $S^{2}$ gives
an orientation on $\mathbb{R}^{2}.$ Thus, for each face $F,$ we may speak
about the winding number of $L$ around $F.$ Since this winding number depends
on which face $F_{0}$ we puncture, we denote it thus:%
\[
w_{F_{0}}(F),
\]
so that $w_{F_{0}}(F_{0})=0.$ It is important to understand how the winding
number changes if the location of the puncture changes.

\begin{proposition}
\label{windingIndep.prop}If $F_{0},$ $F_{0}^{\prime},$ and $F$ are faces, then%
\[
w_{F_{0}}(F)-w_{F_{0}^{\prime}}(F)=w_{F_{0}}(F_{0}^{\prime}).
\]
In particular, the difference between $w_{F_{0}}(F)$ and $w_{F_{0}^{\prime}%
}(F)$ does not depend on $F.$
\end{proposition}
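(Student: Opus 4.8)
The plan is to show that, for a fixed punctured face $F_{0}$, the integer-valued function $F\mapsto w_{F_{0}}(F)$ on the set of faces of $L$ is determined, up to a single additive constant, by data belonging to $L$ alone, the constant being pinned down by the normalization $w_{F_{0}}(F_{0})=0$. Granting this, the difference $w_{F_{0}}(F)-w_{F_{0}^{\prime}}(F)$ is a constant function of $F$; evaluating it at $F=F_{0}^{\prime}$ and using $w_{F_{0}^{\prime}}(F_{0}^{\prime})=0$ identifies the constant as $w_{F_{0}}(F_{0}^{\prime})$, which is exactly the asserted identity. The ``in particular'' statement is then immediate, since the right-hand side $w_{F_{0}}(F_{0}^{\prime})$ does not depend on $F$.

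To carry this out, I would first record the standard local behavior of the winding number: if faces $F$ and $F^{\prime}$ are adjacent across an edge $e$ of $L$ (here $L$ denotes the image of the loop), then $w_{F_{0}}(F)-w_{F_{0}}(F^{\prime})$ equals the net signed number of times $L$ traverses $e$, the sign being determined by the orientation of $S^{2}$ together with the direction of crossing from $F^{\prime}$ to $F$. The crucial point is that this ``jump across $e$'' is an invariant of $L$, $e$, and the ambient orientation only: a short arc crossing $e$ transversally, away from both punctures, sees nothing but $L$, so the jump is the same integer whether we puncture $F_{0}$ or $F_{0}^{\prime}$. Hence $w_{F_{0}}$ and $w_{F_{0}^{\prime}}$ satisfy the identical jump relation across every edge of $L$, and their difference has zero jump across every edge. (Equivalently and more symmetrically, one may identify $w_{F_{0}}(F)$ with the signed intersection number of $L$ with a path from $F_{0}$ to $F$; this is well defined because $H_{1}(S^{2})=0$ forces a loop to have zero algebraic intersection with the cycle $L$, and additivity of intersection number under concatenation of paths then yields the identity at once.)

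It remains to see that a $\mathbb{Z}$-valued function on the faces with vanishing jump across every edge is constant. This follows from connectivity of the face-adjacency structure: given faces $F$ and $F^{\prime}$, join a point of $F$ to a point of $F^{\prime}$ by a path in $S^{2}$, perturb it to be transverse to $L$ and to avoid the finitely many vertices of $L$, and note that it then passes through a finite chain of faces whose consecutive members are adjacent across an edge of $L$; summing the vanishing jumps along the chain gives $(w_{F_{0}}-w_{F_{0}^{\prime}})(F)=(w_{F_{0}}-w_{F_{0}^{\prime}})(F^{\prime})$. So $w_{F_{0}}-w_{F_{0}^{\prime}}\equiv c$, and as noted $c=(w_{F_{0}}-w_{F_{0}^{\prime}})(F_{0}^{\prime})=w_{F_{0}}(F_{0}^{\prime})$; evaluating instead at $F=F_{0}$ gives as a byproduct the antisymmetry $w_{F_{0}}(F_{0}^{\prime})=-w_{F_{0}^{\prime}}(F_{0})$. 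The only step with real content is the claim that the jump of the winding number across an edge is independent of the puncture; this is a classical fact about winding numbers (e.g.\ via their description as signed counts of crossings with a generic ray), which I would simply invoke. Everything after that is the formal statement that a function on a connected graph is determined up to an additive constant by its increments along edges, plus the bookkeeping of substituting $F=F_{0}^{\prime}$.
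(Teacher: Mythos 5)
Your proof is correct, but it takes a genuinely different route from the one in the paper. The paper's argument is global and homotopy-theoretic: it places punctures at points $x\in F_{0}$ and $y\in F_{0}^{\prime}$, writes $L$ as a word in the two free generators $e_{1},e_{2}$ of $\pi_{1}(\mathbb{R}^{2}\setminus\{y,z\})$ with $z\in F$, identifies $w_{F_{0}}(F)$ with the exponent sum of $e_{2}$, and then tracks how the inversion map $z\mapsto 1/z$ (which implements moving the puncture from $x$ to $y$) transforms the generators, so that $w_{F_{0}^{\prime}}(F)$ becomes the exponent sum of $e_{2}$ minus that of $e_{1}$. Your argument is instead local and combinatorial: you characterize $w_{F_{0}}$ up to an additive constant by its puncture-independent jumps across the edges of $L$ (equivalently, by additivity of the algebraic intersection number of $L$ with paths between faces, well defined since $H_{1}(S^{2})=0$), conclude that $w_{F_{0}}-w_{F_{0}^{\prime}}$ is constant by connectivity of the face-adjacency graph, and evaluate the constant at $F=F_{0}^{\prime}$. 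The two proofs rest on the same underlying fact --- that changing the puncture shifts all winding numbers uniformly --- but yours avoids choosing generators of the fundamental group and the inversion computation entirely, is arguably more elementary, generalizes immediately to any number of punctures, and yields the antisymmetry $w_{F_{0}}(F_{0}^{\prime})=-w_{F_{0}^{\prime}}(F_{0})$ as a free byproduct; the paper's version, on the other hand, makes the mechanism of the shift (the reversal of the generator encircling the new puncture) completely explicit, which is what Figure \ref{winding.fig} illustrates. The only step you invoke without proof --- that the jump of the winding number across an edge is a local quantity independent of the puncture --- is indeed classical and is adequately justified by your generic-ray/transversal-arc description.
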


In particular, if we change the location of the puncture, all the winding
numbers change \textit{by the same amount}.

\begin{proof}
Let us fix points $x,$ $y,$ and $z$ in $F_{0},$ $F_{0}^{\prime},$ and $F,$
respectively. Let us put our puncture initially in $x,$ regarding
$S^{2}\setminus\{x\}$ as the plane. Let us then assume that $y$ is at the
origin and $z$ is at the point $(2,0)$. We may then regard $L$ as an element
of $\pi_{1}(\mathbb{R}^{2}\setminus\{y,z\})$ with base point at $(1,0),$ which
is a free group on two generators $e_{1}$ and $e_{2}$. These generators may be
identified with circles of radius one centered at $y$ and $z,$ respectively,
traversed in the counter-clockwise direction. Then $w_{F_{0}}(F)$ is the
number of times $L$ winds around $z,$ which is the number of occurrences of
the generator $e_{2}$ in the representation of $L$ as a word in $e_{1}$ and
$e_{2}.$

Suppose we now shift our puncture from $x$ to $y.$ This shift amounts to
composing $L$ with the inversion map in the complex plane, $\mathbb{C}%
=\mathbb{R}^{2},$ that is, the map $z\mapsto1/z.$ After this process, the
generator $e_{1}$ traverses the unit circle in the opposite direction, while
the generator $e_{2}$ is now inside the unit circle (Figure \ref{winding.fig}%
). Thus, $w_{F_{0}^{\prime}}(F)$ is the number of occurrences of $e_{2}$ minus
the number of occurrences of $e_{1},$ giving the claimed result.
\end{proof}

%

\begin{figure}[ptb]%
\centering
\includegraphics[
height=1.1658in,
width=3.8623in
]%
{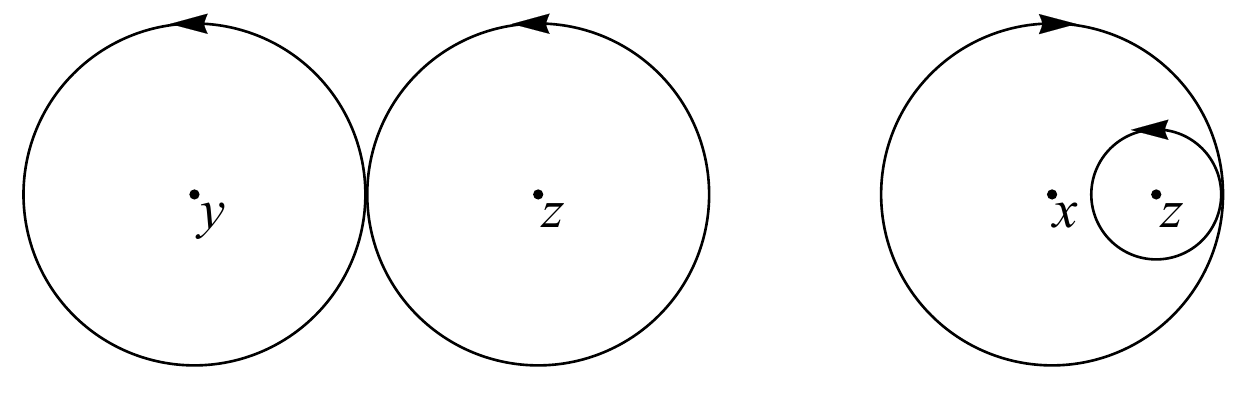}%
\caption{Transformation of the winding numbers under a change of puncture}%
\label{winding.fig}%
\end{figure}

\subsection{The span of the Makeenko--Migdal vectors\label{MMspan.sec}}

Let $L$ be a loop traced out on a graph in $S^{2}$ and having only simple
crossings. We consider vectors assigning real numbers to the faces of $L.$ For
each vertex $v$ of $L,$ we define the \textbf{Makeenko--Migdal vector}
associated to $v,$ denoted $\mathrm{MM}_{v},$ as
\[
\mathrm{MM}_{v}=\sum_{i=1}^{4}(-1)^{i+1}\delta_{F_{i}},
\]
where $F_{1},\ldots,F_{4}$ are the four faces surrounding $v.$ If, for
example, $F_{1},\ldots,F_{4}$ are distinct, then $\mathrm{MM}_{v}$ is the
vector that assigns the numbers $1,-1,1,-1$ to $F_{1},\ldots,F_{4},$
respectively, and zero to all other faces.

\begin{theorem}
[T. L\'{e}vy]\label{MMspan.thm}Fix a face $F_{0}$ of $L$ and let $u$ be a
vector assigning a real number to each face of $L.$ Then $u$ belongs to the
span of the Makeenko--Migdal vectors if and only if (1) $u$ is orthogonal to
the constant vector $\mathbf{1}:=(1,1,\ldots,1),$ and (2) $u$ is orthogonal to
the winding-number vector $w_{F_{0}}(\cdot).$
\end{theorem}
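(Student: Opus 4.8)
The plan is to prove the cleaner dual statement: the orthogonal complement of the span of the Makeenko--Migdal vectors is exactly $\mathrm{span}\{\mathbf{1},\,w_{F_0}(\cdot)\}$. Since conditions (1) and (2) together say precisely that $u$ is orthogonal to $\mathrm{span}\{\mathbf{1},\,w_{F_0}\}$, and since the span of the $\mathrm{MM}_v$ is a subspace of the finite-dimensional space of vectors indexed by the faces of $L$ (for which $(W^{\perp})^{\perp}=W$), the theorem is equivalent to $\bigl(\mathrm{span}\{\mathrm{MM}_v\}\bigr)^{\perp}=\mathrm{span}\{\mathbf{1},\,w_{F_0}\}$. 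Thus it suffices to characterize the vectors $\xi$ assigning a real number to each face with $\xi\cdot\mathrm{MM}_v=0$ for every vertex $v$, i.e. $\xi(F_1)-\xi(F_2)+\xi(F_3)-\xi(F_4)=0$ at each crossing (read with the appropriate coefficient when some of the $F_i$ coincide), and to show these are exactly the functions $\alpha\mathbf{1}+\beta\,w_{F_0}$.

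The core of the argument is a local reading of the alternating relation. Because $L$ has a simple crossing at $v$, the four edges at $v$ split into two \emph{strands}, the two cyclically-opposite pairs, each of which is an embedded oriented arc of $L$ passing straight through $v$; each strand therefore has a left side and a right side, consistently on both of its two edges at $v$. Using that $F_1$ is the face bounded by the two \emph{outgoing} edges (one from each strand), one checks that $\xi(F_1)-\xi(F_2)+\xi(F_3)-\xi(F_4)=0$ asserts exactly that the ``jump'' of $\xi$ across a strand --- the value of $\xi$ on the left minus the value on the right --- agrees on the two edges of that strand at $v$ (the same single equation does this simultaneously for both strands, and it is robust to coincidences among the $F_i$, as in the figure eight). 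Along any single edge of the graph the jump of $\xi$ is tautologically constant; hence the MM relations at all vertices together are equivalent to the statement that the jump of $\xi$ across $L$, followed along $L$ as one traverses it once (going straight at each crossing), never changes. Since $L$ is a single connected loop traversing every edge of its graph, this forces the jump to be a global constant $c$.

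Now $w_{F_0}$ has jump identically $+1$ across $L$ (crossing an arc of an oriented loop from right to left raises the planar winding number by one), while constants have jump identically $0$; in particular $\mathbf{1}$ and $w_{F_0}$ both satisfy every MM relation, giving $\mathrm{span}\{\mathbf{1},w_{F_0}\}\subseteq\bigl(\mathrm{span}\{\mathrm{MM}_v\}\bigr)^{\perp}$. Conversely, for $\xi$ as above the vector $\xi-c\,w_{F_0}$ has jump $0$ across every edge, so it takes equal values on any two faces sharing an edge; since the face-adjacency (dual) graph of $L$ in $S^{2}$ is connected, $\xi-c\,w_{F_0}$ is a constant $\alpha$, whence $\xi=\alpha\mathbf{1}+c\,w_{F_0}$. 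Combining the two inclusions gives $\bigl(\mathrm{span}\{\mathrm{MM}_v\}\bigr)^{\perp}=\mathrm{span}\{\mathbf{1},w_{F_0}\}$, and taking orthogonal complements yields the stated form.

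The one genuinely delicate step is the local bookkeeping in the second paragraph: fixing the sign conventions via the rule that $F_1$ carries the two outgoing edges and then verifying carefully that the alternating sum is precisely the equality of the two strand-jumps, including the degenerate configurations where two or more of the faces around $v$ are the same. Everything else is standard (the jump rule for winding numbers, connectedness of the dual graph, the fact that $L$ uses every edge of a minimal graph exactly once and passes straight through each crossing) or is formal linear algebra.
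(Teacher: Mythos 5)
Your argument is correct, and it is worth noting at the outset that the paper itself supplies no proof of this statement: it is attributed to L\'evy and cited as the $r=1$ case of Lemma 9.4.3 of \cite{LevyMaster}, so what you have written is a genuinely self-contained proof rather than a variant of one in the text. The dualized formulation --- identifying $\bigl(\mathrm{span}\{\mathrm{MM}_{v}\}\bigr)^{\perp}$ with $\mathrm{span}\{\mathbf{1},w_{F_{0}}\}$ and then taking complements --- is the right way to organize things, and the central local computation checks out: placing the two strands at $v$ along the axes with $F_{1}$ in the quadrant between the two outgoing edges, one strand has (left, right) face pairs $(F_{2},F_{3})$ and $(F_{1},F_{4})$ on its two edges and the other has $(F_{3},F_{4})$ and $(F_{2},F_{1})$, so both ``equal jump'' conditions reduce to the single relation $\xi(F_{1})-\xi(F_{2})+\xi(F_{3})-\xi(F_{4})=0$; the identity is purely formal and survives coincidences among the $F_{i}$ (only cyclically opposite faces can coincide anyway, since the faces of a $4$-valent graph in $S^{2}$ admit a checkerboard $2$-coloring, so the two faces adjacent to any edge are distinct). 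The global ingredients are likewise sound: $L$ traverses every edge of its minimal graph exactly once and consecutive edges along $L$ form a strand at a crossing, so constancy of the jump propagates around the whole loop; the jump of $w_{F_{0}}$ is identically $+1$ and that of $\mathbf{1}$ is $0$, which gives the inclusion $\mathrm{span}\{\mathbf{1},w_{F_{0}}\}\subseteq\bigl(\mathrm{span}\{\mathrm{MM}_{v}\}\bigr)^{\perp}$ (including, as you should note, the degenerate crossings, where the coefficients simply add); and connectedness of the dual graph of a connected embedded graph in $S^{2}$ gives the reverse inclusion. The only cosmetic remark I would add is that the convention singling out $F_{1}$ as the face of the two outgoing edges is irrelevant here --- the orthogonality condition depends only on the checkerboard alternation of signs, which is the same up to an overall sign for any of the four starting faces --- so the ``delicate bookkeeping'' you flag is needed only to see that both strands yield the same equation, not to fix an orientation of $\mathrm{MM}_{v}$.
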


This result is the $r=1$ case of Lemma 9.4.3 in \cite{LevyMaster}. Note that
by Proposition \ref{windingIndep.prop}, the winding number vector
$w_{F_{0}^{\prime}}$ associated to some other face $F_{0}^{\prime}$ differs
from $w_{F_{0}}$ by a multiple of the constant vector $\mathbf{1}$. Thus, if
$u$ is orthogonal to $\mathbf{1},$ then $u$ is orthogonal to $w_{F_{0}}%
(\cdot)$ if and only if $u$ is orthogonal to $w_{F_{0}^{\prime}}(\cdot).$
Thus, the condition in the theorem is independent of the choice of $F_{0}.$

\subsection{Shrinking all but two of the faces\label{selecting.sec}}

Let $f$ denote the number of faces of $L.$ We now choose one face arbitrarily
and denote it by $F_{0}$. We will show that there is another face $F_{1}$ such
that we can perform a Makeenko--Migdal variation of the areas in which the
areas of the faces $F_{2},\ldots,F_{f-1}$ shrink simultaneously to zero, while
the area of $F_{0}$ \textit{increases} or remains the same. In the generic
case, the area of $F_{1}$ also remains positive, while in a certain borderline
case, the area of $F_{1}$ tends to zero as well.

The freedom we have to choose one of the \textquotedblleft
unshrunk\textquotedblright\ faces arbitrarily will be essential in
applications to the plane and to arbitrary surfaces. This freedom is a
difference between our approach and the one used in \cite{DN}, where the two
unshrunk faces in Section 4.5 are chosen to have maximal and minimal winding numbers.

\begin{proposition}
\label{shrinkAllButTwo.prop}Assume $L$ has at least three faces. Choose one
face $F_{0}$ arbitrarily and let all winding numbers be computed relative to a
puncture in $F_{0}$, so that $w(F_{0})=0.$ Let $\mathbf{a}=(a_{0},a_{1}%
,\ldots,a_{f-1})$ denote the vector of areas of the faces and let
$\mathbf{w}=(0,w_{1},\ldots,w_{f-1})$ be the vector of winding numbers.
Suppose $\mathbf{a}\cdot\mathbf{w}\neq0$ and adjust the labeling of
$F_{1},\ldots,F_{f-1}$ so that $w_{1}$ is maximal among the winding numbers if
$\mathbf{a}\cdot\mathbf{w}>0$ and $w_{1}$ is minimal if $\mathbf{a}%
\cdot\mathbf{w}<0$. Then there exists $\mathbf{b}$ in the span of the
Makeenko--Migdal vectors such that (1) all the entries of $\mathbf{a}%
+t\mathbf{b}$ are non-negative for $0\leq t\leq1$ and (2) $\mathbf{a}^{\prime
}:=\mathbf{a}+\mathbf{b}$ has the form $\mathbf{a}^{\prime}=(a_{0}^{\prime
},a_{1}^{\prime},0,\ldots,0)$ with $a_{0}^{\prime}\geq a_{0}$ and
$a_{1}^{\prime}>0$.

Meanwhile, if $\mathbf{a}\cdot\mathbf{w}=0$, there exists $\mathbf{b}$ in the
span of the Makeenko--Migdal vectors such that (1) all the entries of
$\mathbf{a}+t\mathbf{b}$ are non-negative for $0\leq t\leq1$ and (2)
$\mathbf{a}^{\prime}:=\mathbf{a}+\mathbf{b}$ has the form $\mathbf{a}^{\prime
}=(a_{0}^{\prime},0,\ldots,0)$ with $a_{0}^{\prime}>0$.
\end{proposition}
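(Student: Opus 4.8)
The plan is to use Theorem \ref{MMspan.thm} to reduce everything to a linear-algebra problem about the vector of areas $\mathbf{a}$ and the winding-number vector $\mathbf{w}$, and then to construct the desired $\mathbf{b}$ as an honest linear combination of Makeenko--Migdal vectors by exhibiting the target vector $\mathbf{a}'-\mathbf{a}=\mathbf{b}$ and checking the two orthogonality conditions. Concretely, in the generic case I want to end up with $\mathbf{a}'=(a_0',a_1',0,\ldots,0)$, so $\mathbf{b}=\mathbf{a}'-\mathbf{a}=(a_0'-a_0,\,a_1'-a_1,\,-a_2,\,\ldots,\,-a_{f-1})$. Theorem \ref{MMspan.thm} tells me $\mathbf{b}$ lies in the span of the $\mathrm{MM}_v$ exactly when $\mathbf{b}\cdot\mathbf{1}=0$ and $\mathbf{b}\cdot\mathbf{w}=0$ (recall $w(F_0)=0$, so the $F_0$-component of $\mathbf{w}$ is zero). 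The first condition forces $(a_0'-a_0)+(a_1'-a_1)=\sum_{i\ge 2}a_i$, i.e.\ total area is preserved, which is automatic once I decide how to split $\sum_{i\ge 2}a_i$ between $a_0'$ and $a_1'$. The second condition reads $(a_1'-a_1)w_1=\sum_{i\ge 2}a_iw_i$, equivalently $a_1'w_1 = \mathbf{a}\cdot\mathbf{w} - a_1 w_1 + a_1 w_1 = \mathbf{a}\cdot\mathbf{w}$; wait, more carefully $\sum_{i\ge 1}a_iw_i = \mathbf{a}\cdot\mathbf{w}$ since the $F_0$ term vanishes, so the condition $a_1'w_1 - a_1w_1 = \mathbf{a}\cdot\mathbf{w} - a_1w_1$ gives simply $a_1'w_1=\mathbf{a}\cdot\mathbf{w}$, hence $a_1'=(\mathbf{a}\cdot\mathbf{w})/w_1$. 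This is where the hypothesis bites: if $\mathbf{a}\cdot\mathbf{w}>0$ and $w_1$ is the \emph{maximal} winding number, then $w_1\ge $ every $w_i$; in particular $\mathbf{a}\cdot\mathbf{w}=\sum a_iw_i \le w_1\sum_{i:\,a_i>0, w_i>0}a_i \le w_1\cdot(\text{total area})$... actually the cleaner bound is $\mathbf{a}\cdot\mathbf{w}\le w_1\cdot a_1$ is false in general, so I need to think about why $a_1'>0$ and why $a_0'\ge a_0$. Since $\mathbf{a}\cdot\mathbf{w}>0$ and $w_1>0$ (it must be, being maximal with a positive weighted sum) we get $a_1'>0$ immediately. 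For $a_0'\ge a_0$: $a_0' = a_0 + \sum_{i\ge2}a_i - (a_1'-a_1) = a_0 + \sum_{i\ge 1}a_i - a_1'$, so I need $a_1'\le\sum_{i\ge1}a_i$, i.e.\ $(\mathbf{a}\cdot\mathbf{w})/w_1\le\sum_{i\ge1}a_i$, i.e.\ $\sum_{i\ge1}a_iw_i\le w_1\sum_{i\ge1}a_i$, which holds precisely because $w_1$ is maximal. Good --- that is the crux of why the maximality (resp.\ minimality) hypothesis is exactly what one needs, and the argument for $\mathbf{a}\cdot\mathbf{w}<0$ is symmetric with $w_1$ minimal (so $w_1<0$).

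Next I would deal with the non-negativity requirement (1), that every entry of $\mathbf{a}+t\mathbf{b}$ stays $\ge 0$ for $t\in[0,1]$. The entries interpolate linearly between $\mathbf{a}$ (at $t=0$) and $\mathbf{a}'$ (at $t=1$), both of which have non-negative entries by construction, and a linear function of $t$ on $[0,1]$ with non-negative endpoint values is non-negative throughout. So this is automatic and needs only a one-line remark; the real content is entirely in producing $\mathbf{a}'$ with the stated sign properties, which I did above.

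For the borderline case $\mathbf{a}\cdot\mathbf{w}=0$, I want $\mathbf{a}'=(a_0',0,\ldots,0)$, so $\mathbf{b}=(a_0'-a_0,-a_1,\ldots,-a_{f-1})$. The condition $\mathbf{b}\cdot\mathbf{1}=0$ gives $a_0'=a_0+\sum_{i\ge1}a_i=\sum_i a_i$, which is positive as long as $L$ has at least one face of positive area (and $a_0'>a_0$ unless all other faces already have zero area); the condition $\mathbf{b}\cdot\mathbf{w}=0$ reads $-\sum_{i\ge1}a_iw_i=-\mathbf{a}\cdot\mathbf{w}=0$, which is exactly the hypothesis. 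So $\mathbf{b}$ lies in the span of the Makeenko--Migdal vectors and $\mathbf{a}'=\mathbf{a}+\mathbf{b}$ has the required form; non-negativity along the segment again follows from linearity and non-negativity at the endpoints.

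The main obstacle, and the only place where real thought is needed rather than bookkeeping, is the verification that the single number $a_1' = (\mathbf{a}\cdot\mathbf{w})/w_1$ simultaneously satisfies $a_1'>0$ and $a_1'\le\sum_{i\ge1}a_i$ (equivalently $a_0'\ge a_0$): the first is clear once one argues $w_1$ has the same sign as $\mathbf{a}\cdot\mathbf{w}$, and the second is exactly the statement that $w_1$ being an extreme (maximal or minimal) winding number makes $\sum a_iw_i$ lie between $w_1\min(\cdot)$ and $w_1\max(\cdot)$ of the appropriate sums of areas. I would want to be a little careful about the degenerate sub-cases --- e.g.\ what if several faces tie for the maximal winding number, or if $w_1=0$ (which cannot happen when $\mathbf{a}\cdot\mathbf{w}\ne0$ and $w_1$ is maximal, since then all $w_i\le 0$ would force $\mathbf{a}\cdot\mathbf{w}\le 0$, contradicting $\mathbf{a}\cdot\mathbf{w}>0$), or faces that already have zero area --- but none of these should cause real trouble. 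Finally I should note that Theorem \ref{MMspan.thm} is stated for a \emph{fixed} reference face $F_0$ and that the winding-number vector relative to $F_0$ has a zero in the $F_0$-slot, which is what makes the two inner-product conditions clean; this is consistent with the proposition's hypothesis that winding numbers are computed relative to a puncture in $F_0$.
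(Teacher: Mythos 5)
Your argument is correct and is essentially identical to the paper's proof: you choose $a_1'=(\mathbf{a}\cdot\mathbf{w})/w_1$ and $a_0'$ so as to preserve the inner products with $\mathbf{w}$ and $\mathbf{1}$, invoke Theorem \ref{MMspan.thm}, use the maximality (resp.\ minimality) of $w_1$ to get $a_1'\le a_1+\cdots+a_{f-1}$ and hence $a_0'\ge a_0$, and note that non-negativity along the segment follows from non-negativity at the two endpoints, with the same treatment of the $\mathbf{a}\cdot\mathbf{w}=0$ case. No gaps.
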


The proposition says, briefly, that we can shrink $a_{2},\ldots,a_{f-1}$ to
zero without decreasing $a_{0}$ and while keeping $a_{1}$ non-negative.

In the case of the trefoil loop, for example, suppose we take $F_{0}$ to be
the \textquotedblleft unbounded\textquotedblright\ face in Figure
\ref{trefoilshrink.fig} and we orient the loop in the counter-clockwise
direction. Then the winding numbers are 1 for the three lobes of the trefoil
and 2 for the central region. Since all winding numbers are positive in this
case, we always have $\mathbf{a}\cdot\mathbf{w}>0,$ in which case we take
$F_{1}$ to the central region. Figures \ref{trefoilshrink.fig} and
\ref{trefoilreduced.fig} then illustrate Proposition
\ref{shrinkAllButTwo.prop}.

\begin{proof}
Assume first that $\mathbf{a}\cdot\mathbf{w}>0$, in which case, the maximal
winding number $w_{1}$ must be positive. In light of Theorem \ref{MMspan.thm},
two vectors $\mathbf{a}$ and $\mathbf{a}^{\prime}$ differ by a vector in the
span of the Makeenko--Migdal vectors if $\mathbf{a}$ and $\mathbf{a}^{\prime}$
have the same inner products with the constant vector $\mathbf{1}$ and the
winding number vector $\mathbf{w}$. To achieve these conditions, we first set
$a_{1}^{\prime}$ equal to $\mathbf{a}\cdot\mathbf{w}/w_{1}$ and $a_{2}%
^{\prime},\ldots,a_{f-1}^{\prime}$ equal to 0. Since $w_{0}=0$, we then have
$\mathbf{a}^{\prime}\cdot\mathbf{w}=\mathbf{a}\cdot\mathbf{w}$, regardless of
the value of $a_{0}^{\prime}$. Next, we choose $a_{0}^{\prime}=-a_{1}^{\prime
}+a_{0}+a_{1}+\cdots+a_{f-1}$ to achieve the condition $\mathbf{a}^{\prime
}\cdot\mathbf{1}=\mathbf{a}\cdot\mathbf{1}$. Now, $\mathbf{a}^{\prime}%
\cdot\mathbf{w}=a_{1}w_{1}=\mathbf{a}\cdot\mathbf{w}$. Then since $w_{1}$ is
maximal, we have
\begin{align*}
a_{1}^{\prime}w_{1}  &  =a_{1}w_{1}+\cdots+a_{f-1}w_{f-1}\\
&  \leq w_{1}(a_{1}+\cdots+a_{f-1})
\end{align*}
Thus, $a_{1}^{\prime}\leq a_{1}+\cdots+a_{f-1}$, which shows that
$a_{0}^{\prime}\geq a_{0}$. (In particular, $a_{0}^{\prime}>0$.) Since both
$\mathbf{a}$ and $\mathbf{a}^{\prime}$ have non-negative entries, so does
every point along the line segment joining them. The analysis of the case
$\mathbf{a}\cdot\mathbf{w}<0$ is similar.

Finally, if $\mathbf{a}\cdot\mathbf{w}=0$, we may set $a_{0}^{\prime}%
=a_{0}+a_{1}+\cdots+a_{f-1}$ and all other entries of $\mathbf{a}^{\prime}$
equal to zero. In that case, $\mathbf{a}^{\prime}\cdot\mathbf{w}%
=\mathbf{a}\cdot\mathbf{w}=0$ and $\mathbf{a}^{\prime}\cdot\mathbf{1}%
=a_{0}^{\prime}=\mathbf{a}\cdot\mathbf{1}$, showing that $\mathbf{a}^{\prime}$
differs from $\mathbf{a}$ by a vector in the span of the Makeenko--Migdal vectors.
\end{proof}

\subsection{Analyzing the limiting case\label{limiting.sec}}

As a consequence of Proposition \ref{shrinkAllButTwo.prop}, we may start with
an arbitrary loop $L$ and perform a linear combination of Makeenko--Migdal
variations at each vertex, obtaining a family $L(t),$ $0\leq t<1,$ of loops
with the same topological type with all but two of the areas shrinking to zero
as $t\rightarrow1.$ We now analyze the behavior of the Wilson loop functional
in the limit $t\rightarrow1.$

We will analyze the limit \textquotedblleft analytically,\textquotedblright%
\ using Sengupta's formula for the finite-$N$ case. (Recall Remark
\ref{varyAreas.remark}.) For this result, the structure group can be an
arbitrary connected compact Lie group $K.$

\begin{theorem}
\label{shrinkingLim.thm}Let $L$ be a loop traced out on a graph in $S^{2}$ and
having only simple crossings. Denote the number of faces of $L$ by $f$ and
label the faces as $F_{0},F_{1},F_{2},\ldots,F_{f-1}.$ Suppose we vary the
areas of the faces as a function of a parameter $t\in\lbrack0,1)$ in such a
way that as $t\rightarrow1,$ the areas of $F_{2},\ldots,F_{f-1}$ tend to zero,
while the areas of $F_{0}$ and $F_{1}$ approach positive real numbers $a$ and
$c$, respectively. Then%
\begin{equation}
\lim_{t\rightarrow1}\mathbb{E}\{\mathrm{tr}(\mathrm{hol}(L))\}=\frac{1}{Z}%
\int_{K}\mathrm{tr}(h^{n})\rho_{a}(h)\rho_{c}(h)~dh, \label{limitingWL}%
\end{equation}
where $n=w_{F_{0}}(F_{1})$ is the winding number of $L$ around $F_{1},$
relative to a puncture in $F_{0},$ and $Z$ is a normalization constant.
Meanwhile, if the area of $F_{1}$ also tends to zero in the $t\rightarrow1$
limit, then
\[
\lim_{t\rightarrow1}\mathbb{E}\{\mathrm{tr}(\mathrm{hol}(L))\}=1.
\]

\end{theorem}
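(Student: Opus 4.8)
The plan is to compute the $t\to 1$ limit directly from Sengupta's formula (\ref{senguptasFormula}), using the fact that the heat kernel $\rho_{s}$ on $K$ converges weakly to the $\delta$-measure at the identity as $s\to 0$. Write $\mathbb{E}\{\mathrm{tr}(\mathrm{hol}(L))\}$ via (\ref{senguptasFormula}) on the graph $\mathbb{G}$ carrying $L$, grouping the heat-kernel product as $\rho_{|F_{0}|}(\mathrm{hol}(F_{0}))\,\rho_{|F_{1}|}(\mathrm{hol}(F_{1}))$ times $\prod_{i\geq 2}\rho_{|F_{i}|}(\mathrm{hol}(F_{i}))$. As $|F_{2}|,\dots,|F_{f-1}|\to 0$, the last product concentrates on the degenerate locus $\mathcal{Z}=\{x\in K^{e}:\mathrm{hol}(F_{i})=e\text{ for }i\geq 2\}$. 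On $\mathcal{Z}$ the holonomy of $L$ has a simple form: puncturing $F_{0}$ identifies $S^{2}\setminus F_{0}$ with a topological disk and exhibits $L$ as an element of the fundamental group of that disk with one puncture $p_{i}$ in each bounded face $F_{i}$; the holonomy homomorphism sends a small loop around $p_{i}$ to a conjugate of $\mathrm{hol}(F_{i})$, so on $\mathcal{Z}$ it kills the classes of $p_{2},\dots,p_{f-1}$ and factors through $\pi_{1}$ of the disk with the single puncture $p_{1}$, which is infinite cyclic. Hence on $\mathcal{Z}$ the holonomy $\mathrm{hol}(L)$ is conjugate to $\mathrm{hol}(F_{1})^{n}$ with $n=w_{F_{0}}(F_{1})$ (the sign of $n$ is fixed by orientation conventions but will turn out to be immaterial), so $\mathrm{tr}(\mathrm{hol}(L))=\mathrm{tr}(\mathrm{hol}(F_{1})^{n})$ there. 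Likewise, since $S^{2}$ is closed the product of all face holonomies (suitably ordered and conjugated) is $e$, so on $\mathcal{Z}$ one has $\mathrm{hol}(F_{0})$ conjugate to $\mathrm{hol}(F_{1})^{-1}$, and therefore $\rho_{a}(\mathrm{hol}(F_{0}))=\rho_{a}(\mathrm{hol}(F_{1}))$ because $\rho_{a}$ is a class function invariant under inversion. Nothing here uses $K=U(N)$, only these properties of $\rho$ and of $\pi_{1}$, so the argument works for general connected compact $K$.

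Granting the weak-convergence step, the limit becomes $\frac{1}{Z}\int_{\mathcal{Z}}\mathrm{tr}(h^{n})\rho_{a}(h)\rho_{c}(h)\,d\mu_{\mathcal{Z}}$ with $h=\mathrm{hol}(F_{1})$ and $\mu_{\mathcal{Z}}$ the conditional (disintegration) measure cut out of Haar measure on $K^{e}$ by the constraints defining $\mathcal{Z}$. It remains to see that setting the areas of $F_{2},\dots,F_{f-1}$ to zero turns $\mathbb{G}$ into a graph carrying the loop $L^{n}$ that winds $n$ times around a single simple closed curve $C$ separating a face of area $a$ from a face of area $c$; for that graph (one vertex, one edge), Sengupta's formula reads $\frac{1}{Z}\int_{K}\mathrm{tr}(u^{n})\rho_{a}(u)\rho_{c}(u^{-1})\,du=\frac{1}{Z}\int_{K}\mathrm{tr}(u^{n})\rho_{a}(u)\rho_{c}(u)\,du$, which is exactly (\ref{limitingWL}), and which is unchanged under $u\mapsto u^{-1}$, so the sign of $n$ drops out. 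In the degenerate case where $|F_{1}|\to 0$ as well, the same analysis with $p_{1}$ also removed makes the holonomy homomorphism factor through the trivial group, so $\mathrm{hol}(L)=e$ and $\mathrm{tr}(\mathrm{hol}(L))=1$ on the corresponding locus; the residual integral then equals $1$, since it is the limiting value of the constant $\mathbb{E}\{1\}\equiv 1$, which follows from the choice of normalization $Z$.

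The substantive point, and the step I expect to be the \emph{main obstacle}, is justifying the interchange of limit and integral in (\ref{senguptasFormula}) when several heat-kernel factors degenerate to $\delta_{e}$ simultaneously. The clean route is to reduce, using the invariance of Sengupta's formula under subdivision of edges and faces (a consequence of the heat-kernel semigroup identity), to a situation in which the faces $F_{2},\dots,F_{f-1}$ can be collapsed one at a time: at each stage pick a face among the remaining shrunk ones that has an edge occurring only once in its boundary and shared with another face, use the corresponding $\delta_{e}$-factor to integrate out that edge variable (which merges the two faces and adds their areas), and check that the result is again a legitimate graph carrying the appropriate loop. One must verify that such an edge always exists after a harmless preliminary subdivision, that the order of collapses is immaterial, and that at the end one is left precisely with $L^{n}$ on the two-face graph with areas $a$ and $c$; boundedness and continuity of the surviving factor $\mathrm{tr}(\mathrm{hol}(L))\rho_{a}(\cdot)\rho_{c}(\cdot)$ then give the stated limit. (Alternatively, one shows directly that the joint holonomy map $x\mapsto(\mathrm{hol}(F_{i}))_{i\geq 2}$ is a submersion, so Haar measure disintegrates with continuously varying fiber measures and $\prod_{i\geq 2}\rho_{|F_{i}|}(\mathrm{hol}(F_{i}))\,dx$ converges weakly to $\mu_{\mathcal{Z}}$; the submersion property is where the combinatorics of $\mathbb{G}$ enters, and again may require a preliminary subdivision.)
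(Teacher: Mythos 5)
Your topological identifications are exactly the ones the paper uses: puncture $F_{0}$, observe that killing the classes of the punctures in $F_{2},\ldots,F_{f-1}$ makes the holonomy of $L$ factor through $\pi_{1}$ of a once-punctured disk, hence become (a conjugate of) $\mathrm{hol}(F_{1})^{n}$ with $n=w_{F_{0}}(F_{1})$, while $\partial F_{0}$ becomes $\mathrm{hol}(F_{1})^{\pm1}$ by the Jordan curve theorem; and the degenerate case and the irrelevance of the sign of $n$ are handled the same way. Where you genuinely diverge is on the step you yourself flag as the main obstacle. You stay in edge variables and must therefore make sense of the simultaneous degeneration of several heat-kernel factors $\rho_{|F_{i}|}(\mathrm{hol}(F_{i}))$ whose arguments are \emph{dependent} functions of $\mathbf{x}\in K^{e}$, which forces you into a disintegration-of-Haar-measure or face-by-face collapse argument that you sketch but do not execute. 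The paper dissolves this difficulty entirely by first changing coordinates to L\'{e}vy's loop (lasso) variables associated to a spanning tree: by Proposition 5.3.3 of \cite{LevyMaster}, the plane Yang--Mills measure pushes forward under $\Gamma:K^{e}\rightarrow K^{f-1}$ to an honest \emph{product of independent heat kernels}, one factor per bounded face, and the sphere measure is that product times the density $\rho_{|F_{0}|}(w_{0}(h_{1},\ldots,h_{f-1}))/Z$. In those coordinates the $t\rightarrow1$ limit is a triviality --- each independent factor $\rho_{|F_{i}|}(h_{i})\,dh_{i}$, $i\geq2$, converges weakly to $\delta_{\mathrm{id}}$ against the bounded continuous integrand, and one simply evaluates $w_{0}$ and $w_{1}$ at $(h_{1},\mathrm{id},\ldots,\mathrm{id})$.

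So the assessment is: right skeleton, but the analytic core of your argument is an acknowledged gap, and your proposed ``submersion/disintegration'' route is essentially a rediscovery of the loop-variable fact without the tool that makes it precise. If you want to complete your version as written, the cleanest fix is exactly to cite the spanning-tree/lasso coordinate change (or Proposition 12.7 of \cite{Ga}); your alternative ``collapse one zero-area face at a time by integrating out an edge with a $\delta_{e}$ factor'' route can also be made to work (it is the standard graph-reduction invariance of the Driver--Sengupta measure), but you would still owe the combinatorial verification that a suitable edge always exists at each stage and that $L$ remains traceable on the reduced graph --- none of which is needed once you pass to loop variables.
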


The integral on the right-hand side of (\ref{limitingWL}) is just the Wilson
loop functional for a loop $L_{n}$ that winds $n$ times around a simple closed
curve enclosing areas $a$ and $c.$ Note that the winding number of $L_{n}$
around $F_{0}$ is the same as the winding number of the original loop $L$
around $F_{0}.$ Note also that by Theorem \ref{MMspan.thm}, the value of
$\mathbf{a}\cdot\mathbf{w}$ for the loop $L_{t}$ is independent of $t$ for
$t\in\lbrack0,1).$ Since $w_{0}=0$ and $a_{2}$ through $a_{f-1}$ are tending
to zero, this means that $a_{1}$ must be tending to the value $\mathbf{a}%
\cdot\mathbf{w}/n,$ where $n=w_{1}$ is the winding number of $L$ around
$F_{1}.$ It then follows that the limiting loop $L_{n}$ has the same value of
$\mathbf{a}\cdot\mathbf{w},$ even though $L_{n}$ has a different topological
type from $L$.

\begin{proof}
Let $\mathbb{G}$ be a minimal oriented graph on which $L$ can be traced. We
think of $\mathbb{G}$ as a graph in the plane, by placing a puncture into
$F_{0}.$ If $e$ denotes the number of edges of $\mathbb{G},$ then we may
consider two different measures on $K^{e}$: the Yang--Mills measure
$\mu_{\mathrm{plane}}^{\mathbb{G}}$ for $\mathbb{G}$ viewed as a graph in the
plane and the Yang--Mills measure $\mu_{\mathrm{sphere}}^{\mathbb{G}}$ for
$\mathbb{G}$ viewed as a graph in the sphere. By comparing Sengupta's formula
\cite[Sect. 5]{Sen93} in the sphere case to Driver's formula \cite[Theorem
6.4]{Dr} in the plane case, we see that%
\[
d\mu_{\mathrm{sphere}}^{\mathbb{G}}(\mathbf{x})=\frac{1}{Z}\rho_{\left\vert
F_{0}\right\vert }(\mathrm{hol}_{F_{0}}(\mathbf{x}))~d\mu_{\mathrm{plane}%
}^{\mathbb{G}}(\mathbf{x}),
\]
where $\mathbf{x}\in K^{e}$ is the collection of edge variables, $\left\vert
F_{0}\right\vert $ is the area of $F_{0}$ as a face in $S^{2},$ and
$\mathrm{hol}_{F_{0}}$ is the product of edge variables around the boundary of
$F_{0}.$ Here $Z$ is a normalization constant.

We may rewrite both of the Yang--Mills measures using \textquotedblleft loop
variables\textquotedblright\ as follows. Let us fix a vertex $v$ and a
spanning tree $T$ for $\mathbb{G}.$ Then in Section 5.3 of \cite{LevyMaster},
L\'{e}vy gives a procedure associating to the faces $F_{1},\ldots,F_{f-1}$
certain loops $L_{1},\ldots,L_{f-1}$ in $\mathbb{G}$ that constitute free
generators for $\pi_{1}(\mathbb{G})$, based at $v.$ (Note that there is no
generator associated to the face $F_{0}.$) Each $L_{i}$ is a word in the edges
of $\mathbb{G}.$ We may then associate to each of the faces $F_{i},$
$i=1,\ldots,f-1,$ of $\mathbb{G}$ a loop \textit{variable}, which is the
product of edge variables in $L_{i}$ (in the reverse order, since parallel
transport reverses order). The map sending the collection of edge variables to
the collection of loop variables defines a map $\Gamma:K^{e}\rightarrow
K^{f-1},$ where $e$ is the number of edges of $\mathbb{G}.$

According to Proposition 5.3.3 of \cite{LevyMaster}, the loop variables are
independent heat-kernel distributed random variables with respect to
$\mu_{\mathrm{plane}}^{\mathbb{G}}.$ (See also Proposition 12.7 of \cite{Ga}.)
That is to say, the push-forward of $\mu_{\mathrm{plane}}^{\mathbb{G}}$ under
$\Gamma$ is just the product of heat kernels, at times equal to the areas of
the faces. Although there is no generator associated to the face $F_{0},$ the
$L_{1},\ldots,L_{f-1}$ generate $\pi_{1}(\mathbb{G}).$ Suppose, therefore,
that $L_{0}$ is a loop that starts at $v,$ travels along a path $P$ to a
vertex in $\partial F_{0},$ then around the boundary of $F_{0},$ and then back
to $v$ along $P^{-1}.$ Then $L_{0}$ is expressible as a word in $L_{1}%
,\ldots,L_{f-1}$ and therefore $\mathrm{hol}_{F_{0}}(\mathbf{x})$ is
expressible as a word in the loop variables: $\mathrm{hol}_{F_{0}}%
(\mathbf{x})=w_{0}(\Gamma(\mathbf{x}))$ for some function $w_{0}$ on
$K^{f-1}.$ (Although $\mathrm{hol}_{F_{0}}(\mathbf{x})$ depends on the choice
of $v$ and $P,$ the invariance of the heat kernel under conjugation guarantees
that $\rho_{\left\vert F_{0}\right\vert }(\mathrm{hol}_{F_{0}}(\mathbf{x}))$
is well defined.) It then follows from the measure-theoretic change of
variables theorem that the push-forward of $\mu_{\mathrm{sphere}}^{\mathbb{G}%
}$ is given by%
\begin{align*}
&  d\Gamma_{\ast}(\mu_{\mathrm{sphere}}^{\mathbb{G}})(h_{1},\ldots,h_{f-1})\\
&  =\frac{1}{Z}\rho_{\left\vert F_{0}\right\vert }(w_{0}(h_{1},\ldots
,h_{f-1}))\left(  \prod_{i=1}^{f-1}\rho_{\left\vert F_{i}\right\vert }%
(h_{i})\right)  dh_{1}~\ldots dh_{f-1}.
\end{align*}
%

\begin{figure}[ptb]%
\centering
\includegraphics[
height=1.9009in,
width=4.2281in
]%
{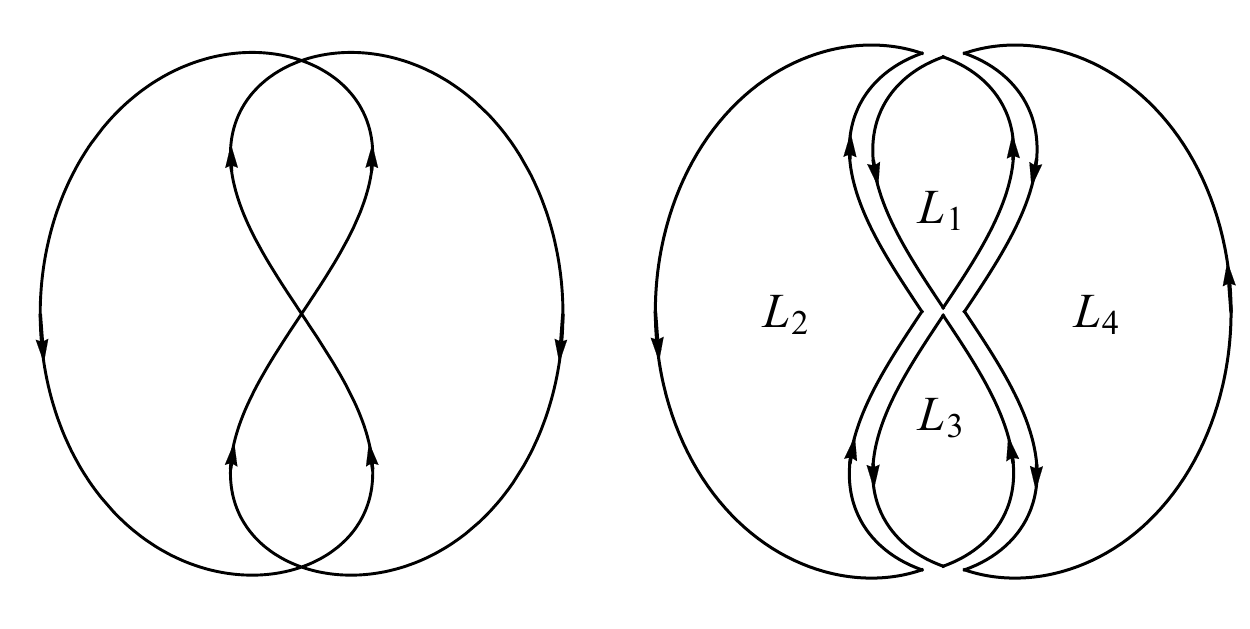}%
\caption{The example loop $L$ (left) decomposes as $L_{1}L_{2}L_{3}L_{1}%
^{-1}L_{4}^{-1}L_{3}^{-1}$}%
\label{decompose.fig}%
\end{figure}

Meanwhile, the loop $L$ (whose Wilson loop functional we are considering) is
also expressible as a word $w_{1}$ in the generators $L_{1},\ldots,L_{f-1}.$
(See Figure \ref{decompose.fig}.) Thus,
\begin{align}
&  \int_{K^{e}}\mathrm{tr}(\mathrm{hol}(L))~d\mu_{\mathrm{sphere}}%
^{\mathbb{G}}=\frac{1}{Z}\int_{K^{f-1}}\mathrm{tr}(w_{1}(h_{1},\ldots
,h_{f-1}))\nonumber\\
&  \times\rho_{\left\vert F_{0}\right\vert }(w_{0}(h_{1},\ldots,h_{f-1}%
))\left(  \prod_{i=1}^{f-1}\rho_{\left\vert F_{i}\right\vert }(h_{i})\right)
dh_{1}~\ldots dh_{f-1}. \label{loopVarExpectation}%
\end{align}
It is now straightforward to take a limit in (\ref{loopVarExpectation}) as
$t\rightarrow1,$ that is, as all areas other than $\left\vert F_{0}\right\vert
$ and $\left\vert F_{1}\right\vert $ tend to zero. In this limit, each heat
kernel associated to $h_{i},$ $i\geq2,$ becomes a $\delta$-measure, so we
simply evaluate each such $h_{i}$ to the identity element of $K,$ giving%
\begin{align}
&  \lim_{t\rightarrow1}\int_{K^{e}}\mathrm{tr}(\mathrm{hol}(L))~d\mu
_{\mathrm{sphere}}^{\mathbb{G}}\nonumber\\
&  =\frac{1}{Z}\int_{K}\mathrm{tr}(w_{1}(h_{1},\mathrm{id},\ldots
,\mathrm{id}))\rho_{a}(w_{0}(h_{1},\mathrm{id},\ldots,\mathrm{id}))\rho
_{c}(h_{1})~dh_{1}. \label{limSigmaOne}%
\end{align}
In the sphere case, the normalization factor is given by $Z=\rho
_{A}(\mathrm{id}),$ where $A$ is the total area of the sphere. Although $Z$
may depend on $t$ (since we do not currently assume that the area of the
sphere is fixed), it has a limit as $t\rightarrow1.$

If the limiting value $a$ of $\left\vert F_{1}\right\vert $ is also zero, then
$\rho_{\left\vert F_{1}\right\vert }$ becomes also becomes a $\delta
$-function. Since the normalized trace of the identity matrix equals 1, the
right-hand side of (\ref{limSigmaOne}) becomes $\rho_{c}(\mathrm{id})/Z.$ But
the total area of the sphere in this limit is just $c,$ so $Z=\rho
_{c}(\mathrm{id})$ and $\rho_{c}(\mathrm{id})/Z=1.$

Finally, if the limiting value~$a$ of $\left\vert F_{1}\right\vert $ is
nonzero, we must understand the effect on $w_{0}$ and $w_{1}$ of evaluating
$h_{i},~i\geq2,$ to the identity. Recall that the words $w_{0}$ and $w_{1}$
arise from representing the boundary of $F_{0}$ and the loop $L$ as words in
$L_{1},\ldots,L_{f-1},$ which are free generators for $\pi_{1}(\mathbb{G})$.
Now, $\pi_{1}(\mathbb{G})$ is naturally isomorphic to $\pi_{1}(\mathbb{R}%
^{2}\setminus\{x_{1},\ldots,x_{f-1}\}),$ where $x_{i}$ is an arbitrarily
chosen element of $F_{i}.$ There is then a homomorphism from $\pi
_{1}(\mathbb{G})$ to $\pi_{1}(\mathbb{R}^{2}\setminus\{x_{1}\})$ induced by
the inclusion of $\mathbb{R}^{2}\setminus\{x_{1},\ldots,x_{f-1}\}$ into
$\mathbb{R}^{2}\setminus\{x_{1}\}.$ Since $\pi_{1}(\mathbb{R}^{2}%
\setminus\{x_{1}\})$ is just a free group on the single generator $L_{1},$
this homomorphism is computed by mapping each of the generators $L_{2}%
,\ldots,L_{f-1}$ to the identity, leaving only powers of $L_{1}.$ Thus, if we
write, say, $\partial F_{0}$ as a word in the generators $L_{1},\ldots
,L_{f-1}$ and apply the just-mentioned homomorphism, the result will be
$L_{1}^{n_{0}},$ where $n_{0}$ is the winding number of $\partial F_{0}$
around $x_{1}$ (i.e., around $F_{1}$). By the Jordan curve theorem, this
winding number is $1,$ assuming that we traverse $\partial F_{0}$ in the
counter-clockwise direction. (In Figure \ref{exampleouter.fig}, for example,
the outer boundary of the loop in Figure \ref{decompose.fig} decomposes as
$L_{2}L_{3}L_{4}L_{1}$; if we set any three of the four generators to the
identity, the remaining generator will occur to the power 1.) Expressing this
result in terms of the loop \textit{variables}, rather than the loops
themselves, we conclude that $w_{0}(h_{1},\mathrm{id},\ldots,\mathrm{id})$ is
$h_{1}$. Similarly, $w_{1}(h_{1},\mathrm{id},\ldots,\mathrm{id})=h_{1}^{n},$
where $n$ is the winding number of $L$ around $F_{1}.$
\end{proof}

%

\begin{figure}[ptb]%
\centering
\includegraphics[
height=1.9839in,
width=1.9873in
]%
{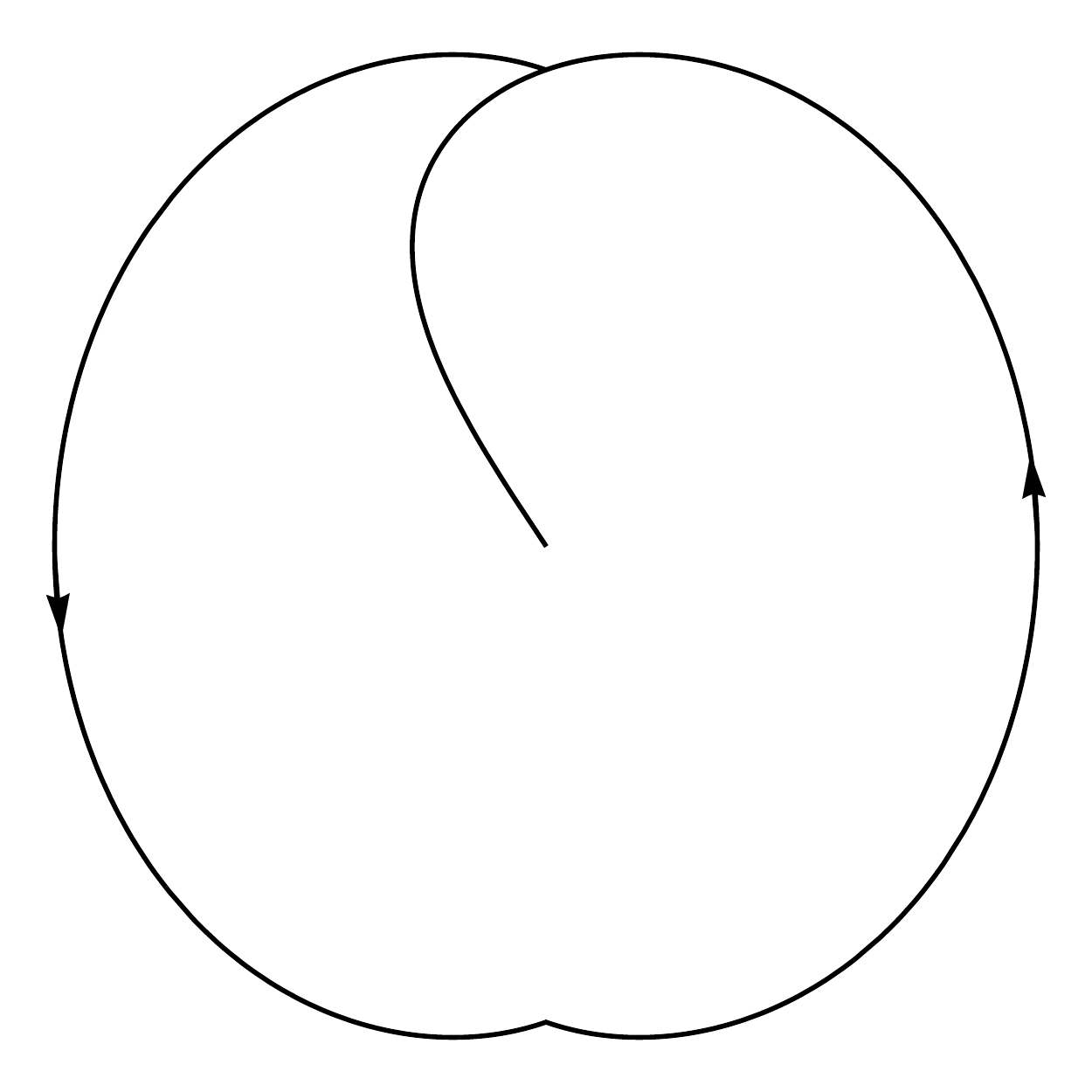}%
\caption{The outer boundary of the loop in Figure \ref{decompose.fig} is
computed as $L_{2}L_{3}L_{4}L_{1}$}%
\label{exampleouter.fig}%
\end{figure}

\subsection{The induction\label{induction.sec}}

In this section, we prove Theorem \ref{main.thm} by induction on the number of
crossings. Our strategy is to deform an arbitrary loop $L$ with $k$ crossings
into a loop $L_{n}$ that winds $n$ times around a simple closed curve. The
variation of the Wilson loop functional along this deformation will be a
linear combination of products of Wilson loop functionals for curves with
fewer crossings, plus a covariance term. We begin by recording a result for a
loop that winds $n$ times around a simple closed curve.

\begin{theorem}
\label{nfold.thm}Let $L_{n}(a,c)$ denote a loop that winds $n$ times around a
simple closed curve enclosing areas $a$ and $c.$ Then for all $a$ and $c,$ the
limit%
\[
W_{n}(a,c):=\mathbb{E}\left\{  \mathrm{tr}(\mathrm{hol}(L_{n}(a,c)))\right\}
\]
exists and depends continuously on $a$ and $c.$ Furthermore, the associated
variance tends to zero:%
\[
\lim_{N\rightarrow\infty}\mathrm{Var}\left\{  \mathrm{tr}(\mathrm{hol}%
(L_{n}(a,c)))\right\}  =0.
\]

\end{theorem}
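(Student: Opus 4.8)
The plan is to prove Theorem~\ref{nfold.thm} by induction on $|n|$, using the Makeenko--Migdal machinery at the single self-crossing(s) of $L_n$ to reduce the winding number. For $n=0$, $L_n$ is (homotopically) a simple closed curve traversed with winding number zero; shrinking the relevant face to zero via Theorem~\ref{shrinkingLim.thm} (or directly from Sengupta's formula, since $\rho_t\to\delta$) gives $W_0\equiv 1$ with zero variance, and for $n=\pm1$ the loop is a simple closed curve so the result is exactly Theorem~\ref{dn.thm} (Notation~\ref{w1.notation}). For $|n|\ge 2$, the loop $L_n(a,c)$ winding $n$ times around a simple closed curve has $n-1$ self-crossings arranged in a ``spiral,'' and I would pick one crossing $v$ and apply Proposition~\ref{variation.prop}: cutting $L_n$ at $v$ produces loops $L_1',L_2'$ each of which winds a smaller number of times around a simple closed curve, hence is of the form $L_m(\cdot,\cdot)$ with $|m|<|n|$. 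This is the key geometric observation to check carefully---that the two ``lobes'' created by cutting a spiral at an interior crossing are themselves lower-order spirals---and it is where I would draw the picture analogous to Figures~\ref{l1l2.fig} and \ref{trefoilreduced.fig}.

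Concretely, I would use Proposition~\ref{shrinkAllButTwo.prop} applied to $L_n$: choosing one unshrunk face arbitrarily, one performs a combination of Makeenko--Migdal variations $\mathbf a + t\mathbf b$ shrinking all but two faces. But since $L_n$ already \emph{has} only two ``large'' faces (areas $a$ and $c$) plus some thin intermediate regions in the standard spiral picture, it is cleaner to argue directly: parametrize a path $L_n(t)$ in area-space from a configuration where the intermediate faces have positive area down to the degenerate spiral, integrate the Makeenko--Migdal equation (\ref{variationWilson}) along this path, and at the degenerate end invoke Theorem~\ref{shrinkingLim.thm} to identify the limit as an integral $\frac1Z\int_K \mathrm{tr}(h^m)\rho_a(h)\rho_c(h)\,dh$ for various $m$. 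The integrand on the right of (\ref{variationWilson}) at finite $N$ is $\mathbb{E}\{\mathrm{tr}(\mathrm{hol}(L_1'))\}\mathbb{E}\{\mathrm{tr}(\mathrm{hol}(L_2'))\} + \mathrm{Cov}\{\cdots\}$; by the inductive hypothesis the two expectations converge and (by Proposition~\ref{varianceProd.prop} together with vanishing variance of $L_1',L_2'$) the covariance term vanishes as $N\to\infty$. Dominated convergence (using $|\mathrm{tr}(U)|\le 1$) then lets us pass to the limit under the integral, giving existence of $W_n(a,c)$ and a formula for it in terms of lower $W_m$'s and $W_1$; continuity in $(a,c)$ follows from continuity of the lower-order limits and uniform integrability of the bounded integrands.

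For the variance, I would run the parallel argument using the second identity (\ref{variationVariance}) of Proposition~\ref{variation.prop}: along the same path $L_n(t)$, the derivative of $\mathrm{Var}\{\mathrm{tr}(\mathrm{hol}(L_n(t)))\}$ is controlled by (i) a covariance term $2\operatorname{Re}\mathrm{Cov}\{\mathrm{tr}(\mathrm{hol}(L_1'))\mathrm{tr}(\mathrm{hol}(L_2')),\overline{\mathrm{tr}(\mathrm{hol}(L_n(t)))}\}$, which is bounded by $\sigma$ of the product times $\sigma$ of $\mathrm{tr}(\mathrm{hol}(L_n(t)))$, and (ii) the $\frac1{N^2}$ terms which are manifestly $O(N^{-2})$. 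Using (\ref{covIneq}), Proposition~\ref{varianceProd.prop}, the inductive vanishing of $\mathrm{Var}\{\mathrm{tr}(\mathrm{hol}(L_i'))\}$, and the crude a priori bound $\sigma_{\mathrm{tr}(\mathrm{hol}(L_n(t)))}\le 2$, one shows the derivative of the variance tends to zero uniformly in $t$; combined with the fact that at the degenerate endpoint the variance tends to zero (Theorem~\ref{shrinkingLim.thm} identifies the endpoint as a lower-order loop, or as a simple closed curve when a face also shrinks), integrating gives $\mathrm{Var}\{\mathrm{tr}(\mathrm{hol}(L_n(a,c)))\}\to 0$. The main obstacle I anticipate is the combinatorial/topological bookkeeping: verifying precisely that cutting the $n$-fold spiral at an interior crossing yields lobes that are genuinely $m$-fold spirals with $|m|<|n|$ (and tracking which faces have which areas), so that the induction on $|n|$ actually closes. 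Everything analytic---dominated convergence, the covariance bounds, passing limits under integrals---is routine once that geometric reduction is pinned down.
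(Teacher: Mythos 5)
Your overall strategy---induction on the winding number, a Makeenko--Migdal variation at the crossings of the $n$-fold ``spiral,'' the observation that cutting the spiral at its $k$th crossing yields a $k$-fold and an $(n-k)$-fold spiral, and the standard analytic package (dominated convergence, (\ref{covIneq}), Proposition \ref{varianceProd.prop})---matches the paper's proof in Section \ref{nfold.sec}. But the deformation path you propose does not close the induction. You suggest moving ``from a configuration where the intermediate faces have positive area down to the degenerate spiral.'' Shrinking the intermediate annular faces to zero does not change the winding number: the degenerate endpoint of that path is $L_n(a,c)$ itself, and the other endpoint is a non-degenerate spiral that still winds $n$ times, so integrating (\ref{variationWilson}) along this path expresses the unknown $W_n$ in terms of another unknown of the same order. (Equivalently, when you ``invoke Theorem \ref{shrinkingLim.thm} to identify the limit as $\frac{1}{Z}\int_K\mathrm{tr}(h^m)\rho_a(h)\rho_c(h)\,dh$,'' the relevant exponent at that endpoint is $m=n$, which is exactly the quantity you are trying to compute.) The variation that actually reduces the order is the one used in the proof of Theorem \ref{WnPlus1.thm}: weight the MM variation at the $k$th crossing by $k$, so that the net effect is to drain the \emph{innermost disk} (and, more slowly, the outer face) into the first annulus while leaving the remaining annuli at area zero; the path terminates when either the innermost disk or the outer face reaches area zero, and only then is the endpoint a spiral of strictly smaller winding number ($L_n$ or $L_1$), covered by the induction hypothesis. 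The same circularity affects your variance argument, which anchors the integrated (\ref{variationVariance}) at ``the degenerate endpoint.''

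A second gap: the two-parameter family $L_m(a,c)$ is not closed under the recursion. Cutting the spiral at an interior crossing produces an inner lobe that winds $m-1$ times around one curve and then once more around a smaller sub-region---the three-parameter loops $L_m(a,b,c)$ of the paper's Definition---and these acquire $b\neq 0$ along the deformation even if you start from $b=0$. So the induction hypothesis must be formulated for the family $L_m(a,b,c)$ (existence of the limit, continuity, and vanishing variance), as in Theorem \ref{WnPlus1.thm}; with only $L_m(a,c)$ in the hypothesis the inductive step cannot be completed. There is also a case split you will need ($c\ge a/n$ versus $c<a/n$, i.e., whether the inner disk or the outer face vanishes first), but that is bookkeeping once the correct path is in place. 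The rest of your analytic reasoning---bounding the covariance terms via (\ref{covIneq}) and Proposition \ref{varianceProd.prop}, dominated convergence using $\left\vert\mathrm{tr}(U)\right\vert\le 1$, and the treatment of the $N^{-2}$ terms---is exactly what the paper does.
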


The proof of this result is given in Section \ref{nfold.sec}. Assuming Theorem
\ref{nfold.thm} for the moment, we are now ready for the proof of our main result.

\begin{proof}
[Proof of Theorem \ref{main.thm}]Let $k$ denote the number of crossings of
$L.$ We will proceed by induction on $k.$ When $k=0,$ the result is precisely
Theorem \ref{dn.thm}. Assume, then, that (\ref{main1}) and (\ref{main2}) and
the continuity condition hold for all loops with $l<k$ crossings and consider
a loop $L$ with $k$ crossings. Using Proposition \ref{shrinkAllButTwo.prop},
we may make a combination of Makeenko--Migdal variations at the vertices of
$L,$ giving loops $L(t)$ in which $L(0)=L$ and so that all but two of the
faces shrink to zero as $t\rightarrow1.$ By Theorem \ref{shrinkingLim.thm},
the limit as $t\rightarrow1$ of the Wilson loop functional of $L(t)$ is the
Wilson loop functional of a loop $L^{n}(a,c)$ that winds $n$ times around a
simple closed curve enclosing areas $a$ and $c,$ where $a$ and $c$ are the
areas of the two remaining faces. We may differentiate the Wilson loop
functional of $L(t)$ using the chain rule and the first part of Proposition
\ref{variation.prop}. Integrating the derivative then gives
\begin{align}
\mathbb{E}\left\{  \mathrm{tr}(\mathrm{hol}(L))\right\}   &  =\mathbb{E}%
\left\{  \mathrm{tr}(\mathrm{hol}(L_{n}(a,c)))\right\} \nonumber\\
&  -\sum_{j}c_{j}\int_{0}^{1}\mathbb{E}\left\{  \mathrm{tr}(\mathrm{hol}%
(L_{1,j}(t)))\right\}  \mathbb{E}\left\{  \mathrm{tr}(\mathrm{hol}%
(L_{2,j}(t)))\right\}  ~dt\nonumber\\
&  -\sum_{j}c_{j}\mathrm{Cov}\{\mathrm{tr}(\mathrm{hol}(L_{1,j}%
(t))),\mathrm{tr}(\mathrm{hol}(L_{2,j}(t)))\}~dt, \label{inductionIntegral}%
\end{align}
where the constants $c_{j}$ are the ones expressing the vector $\mathbf{b}$ in
Proposition \ref{shrinkAllButTwo.prop} as a linear combination of the
Makeenko--Migdal vectors $\mathrm{MM}_{v_{j}}.$ Here $L_{1,j}(t)$ and
$L_{2,j}(t)$ are the loops obtained by applying the Makeenko--Migdal equation
at the vertex $v_{j}$ to the loop $L(t).$ In particular, since neither of
these loops has a crossing at $v_{j},$ we see that $L_{1,j}(t)$ and
$L_{2,j}(t)$ have at most $k-1$ crossings.

By our induction hypothesis, the limit%
\[
W(L_{i,j}(t)):=\lim_{N\rightarrow\infty}\mathbb{E}\left\{  \mathrm{tr}%
(\mathrm{hol}(L_{i,j}(t)))\right\}
\]
exists for each $t\in\lbrack0,1),$ each $i\in\{1,2\},$ and each $j.$ Our
induction hypothesis also tells us that the variance of $\mathrm{tr}%
(\mathrm{hol}(L_{i,j}(t)))$ goes to zero as $N\rightarrow\infty$; the
inequality (\ref{covIneq}) then tells us that the covariances on the last line
of (\ref{inductionIntegral}) also tend to zero. Now, $\left\vert
\mathrm{tr}(\mathrm{hol}(U))\right\vert \leq1$ for all $U\in U(N),$ from which
it follows that $\left\vert \mathrm{Var}(\mathrm{tr}(\mathrm{hol}%
(U)))\right\vert \leq1$. Thus, we may apply dominated convergence to all
integrals in (\ref{inductionIntegral}), along with Theorem \ref{nfold.thm}, to
obtain%
\begin{align}
\lim_{N\rightarrow\infty}\mathbb{E}\left\{  \mathrm{tr}(\mathrm{hol}%
(L))\right\}   &  =W(L_{n}(a,c))\nonumber\\
&  -\sum_{j}c_{j}\int_{0}^{1}W(L_{1,j}(t))W(L_{2,j}(t))~dt,
\label{inductionLim}%
\end{align}
establishing the existence of the limit in (\ref{main1}) of Theorem
\ref{main.thm}.

We now establish the claimed continuity of $W(L)$ with respect to the areas of
the faces. For each topological type of loop $L,$ we fix an arbitrary face to
play the role of $F_{0}$ in the proof of Proposition
\ref{shrinkAllButTwo.prop}. We then fix two other faces with maximal and
minimal winding numbers, respectively, relative to a puncture in $F_{0}$. The
first key observation is that with these choices made, the vector
$\mathbf{a}^{\prime}$ in the proof of the proposition can be chosen to depend
continuously on the areas of the faces. If we compute $\mathbf{a}^{\prime}$ by
the formula in the proof, the only case in which continuity is not obvious is
in the case in which (in the notation of the proof) $\mathbf{a}\cdot
\mathbf{w}=0.$ But a moment's thought will confirm that as $\mathbf{a}%
\cdot\mathbf{w}$ tends to zero, $\mathbf{a}^{\prime}$ tends to the vector
$(a_{0},0,\ldots,0),$ where $a_{0}$ is the area of $S^{2},$ which is just the
value of $\mathbf{a}^{\prime}$ when $\mathbf{a}\cdot\mathbf{w}=0.$

Now, when we deform our original loop $L$ into a loop of the form
$L_{n}(a,c),$ the values of $a$ and $c$ depend continuously on the areas of
the faces of $L$; indeed, $c$ and $a$ are the first two entries of the vector
$\mathbf{a}^{\prime}.$ The continuity of $\mathbf{a}^{\prime}$ also gives the
continuity of $\mathbf{b}=\mathbf{a}^{\prime}-\mathbf{a}$ and thus of the
coefficients $c_{j}$ in (\ref{inductionLim}), which are the coefficients of
the expansion of $\mathbf{b}$ in terms of the Makeenko--Migdal vectors. Thus,
the continuity of the second term on the right-hand side of
(\ref{inductionLim}) follows from our induction hypothesis and dominated convergence.

To establish the second claim (\ref{main2}) of Theorem \ref{main.thm}, we use
the second part of Proposition \ref{variation.prop}. We then need to bound the
covariance term appearing in (\ref{variationVariance}). We first use
(\ref{covIneq}) and then use Proposition \ref{varianceProd.prop} to bound the
variance of the product of $\mathrm{tr}(\mathrm{hol}(L_{1}))$ and
$\mathrm{tr}(\mathrm{hol}(L_{2})).$ The argument is then similar to the proof
of (\ref{main1}).

Finally, we establish the large-$N$ Makeenko--Migdal equation (\ref{main3})
for the limiting Wilson loop functionals. If we vary the areas in a
checkerboard pattern along a path $L(t)$ as in Figure \ref{lt.fig}, we have%
\begin{align*}
\mathbb{E}\left\{  \mathrm{tr}(\mathrm{hol}(L(t)))\right\}   &  =\mathbb{E}%
\left\{  \mathrm{tr}(\mathrm{hol}(L(t_{0})))\right\} \\
&  +\int_{t_{0}}^{t}\mathbb{E}\left\{  \mathrm{tr}(\mathrm{hol}(L_{1}%
(s)))\right\}  \mathbb{E}\left\{  \mathrm{tr}(\mathrm{hol}(L_{2}(s)))\right\}
~ds\\
&  +\int_{t_{0}}^{t}\mathrm{Cov}\{\mathrm{tr}(\mathrm{hol}(L_{1}%
(s))),\mathrm{tr}(\mathrm{hol}(L_{2}(s)))\}~ds.
\end{align*}
Using the first two points (\ref{main1})\ and (\ref{main2}) in Theorem
\ref{main.thm}, we can let $N$ tend to infinity to obtain%
\begin{equation}
W(L(t))=W(L(t_{0}))+\int_{t_{0}}^{t}W(L_{1}(s))W(L_{2}(s))~ds.
\label{generalMM}%
\end{equation}
Since we have shown that $W(L)$ depends continuously on the areas of $L,$ we
see that $W(L_{1}(s))$ and $W(L_{2}(s))$ depend continuously on $s.$ Thus, we
can apply the fundamental theorem of calculus to differentiate
(\ref{generalMM}) with respect to $t$ to obtain the large-$N$ Makeenko--Migdal
equation (\ref{main3}).
\end{proof}

\subsection{The $n$-fold circle\label{nfold.sec}}

In this section, we analyze the Wilson loop functional for a loop that winds
$n$ times around a simple closed curve, establishing Theorem \ref{nfold.thm}.
Similar results were obtained by Dahlqvist and Norris in \cite[Section
4.5]{DN}.

Recall from (\ref{simpleMeasure}) that the distribution of the parallel
transport around a simple closed curve is the distribution of a Brownian
bridge in $U(N).$ Assuming that the eigenvalue distribution of a Brownian
bridge in $U(N)$ has a deterministic large-$N$ limit, the calculations here
provide a recursive method for computing the higher moments of the limiting
distribution, in terms of the first moment. (But the formula for the first
moment is complicated, especially in the large-lifetime phase, $a+b>\pi^{2}.$)
A similar, but not identical, recursion for the moments of the free unitary
Brownian motion---i.e., the large-$N$ limit of a Brownian \textit{motion} in
$U(N),$ rather than a Brownian bridge---was established by Biane. See the
bottom of p. 16 of \cite{Bi1} and p. 266 of \cite{Bi2}.

We approximate a loop that winds $n$ times around a simple closed curve by a
loop with simple crossings, specifically, by a curve of the \textquotedblleft
loop within a loop within a loop\textquotedblright\ form, as in Figure
\ref{5fold.fig}. If there are a total of $n$ loops, then there are $n-1$
crossings and a total of $n+1$ faces. The faces divide into the innermost
circle, the outer region, and $n-1$ annular regions. Shifting the puncture
from the innermost to outermost regions gives another curve of the same sort,
with the order of the annular regions reversed.

Now, nothing prevents us from letting the areas of one or more of the annular
regions equal zero. Specifically, we may understand the limit as some of the
areas tend to zero by the same method as in the proof of Theorem
\ref{shrinkingLim.thm}. The only difference is that in the expression
(\ref{loopVarExpectation}), we set an arbitrary subset of the loop variables
equal to the identity. We may similarly make a Makeenko--Migdal variation at
one or more of the vertices and then take a limit on both sides of the
resulting equation as some of the areas tend to zero.

If we set the areas of the $n-1$ annular regions to zero, the curve will
simply wind $n$ times around the outermost circle. More generally, we will
show that it is possible to shrink the central region and the outer region,
while increasing the innermost annular region and keeping the areas of the
outermost $n-2$ annular regions equal to zero, as in Figure \ref{5fold.fig}.
This observation motivates consideration of the following class of loops.
\begin{figure}[ptb]%
\centering
\includegraphics[
height=2.3514in,
width=2.9213in
]%
{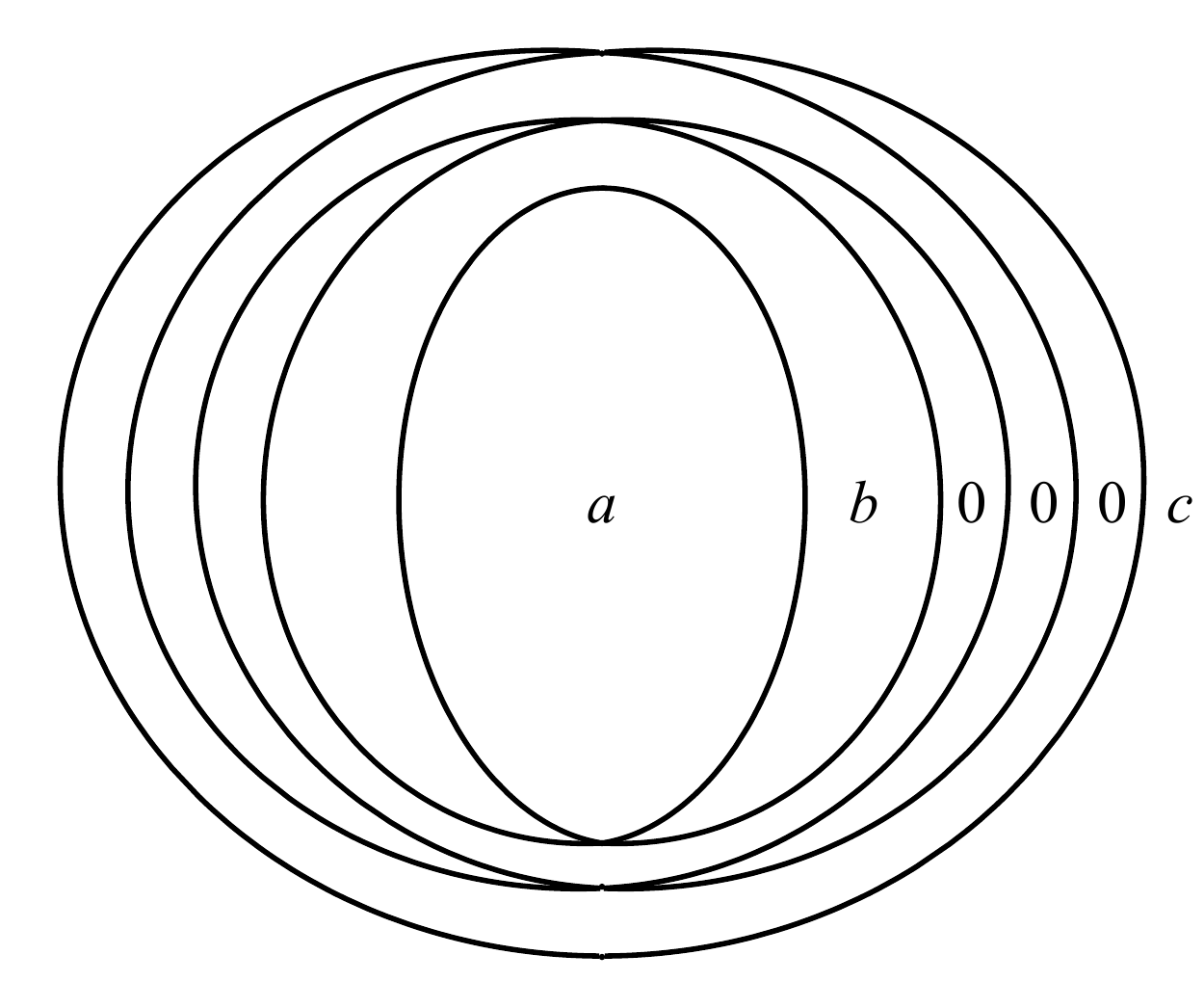}%
\caption{The loop $L_{5}(a,b,c)$}%
\label{5fold.fig}%
\end{figure}

\begin{definition}
Let $L_{n}(a,c)$ denote the loop that winds $n$ times around a simple closed
curve enclosing areas $a$ and $c.$ Now let $L_{1}(a+b,c)$ be a simple closed
curve having two faces $F_{1}$ and $F_{2}$ with areas $a+b$ and $c,$
respectively. For $n\geq2,$ let $L_{n}(a,b,c)$ denote the loop that winds
$n-1$ times around $L_{1}(a+b,c)$ and then winds once around a region of area
$a$ inside $F_{1}.$
\end{definition}

When $n=1,$ we interpret $L_{1}(a,b,c)$ as being $L_{1}(a,b+c)$ (the loop that
winds zero times around around $L_{1}(a+b,c)$ and then once around a region of
area $a$).

We may consider two different limiting cases of $L_{n}(a,b,c),$ the case where
$a$ tends to zero and the case where $c$ tends to zero. If $a$ tends to zero,
we are left with $L_{n-1}(b,c),$ while if $c$ tends to zero, we are left with
$L_{1}(a,b).$ (To see this, it may be useful to shift the puncture from the
$c$ face to the $b$ face, as on the right-hand side of Figure \ref{double.fig}%
.) We now give an inductive procedure for computing the limiting Wilson loop
functionals for loops of the form $L_{n}(a,c)$ and $L_{n}(a,b,c).$

\begin{theorem}
\label{WnPlus1.thm}For all $n\geq1$ and all non-negative real numbers $a,$
$b,$ and $c,$ the limits
\begin{equation}
W_{n}(a,c):=\lim_{N\rightarrow\infty}\mathbb{E}\left\{  \mathrm{tr}%
(\mathrm{hol}(L_{n}(a,c)))\right\}  \label{wnLim}%
\end{equation}
and%
\begin{equation}
W_{n}(a,b,c):=\lim_{N\rightarrow\infty}\mathbb{E}\left\{  \mathrm{tr}%
(\mathrm{hol}(L_{n}(a,b,c)))\right\}  \label{WnLim}%
\end{equation}
exist, and the associated variances tend to zero. Furthermore, for all
$n\geq1,$ we have the following inductive formulas. If $c\geq a/n,$ we have%
\begin{align}
&  W_{n+1}(a,b,c)=W_{n}\left(  \frac{n+1}{n}a+b,c-\frac{a}{n}\right)
\nonumber\\
&  -\sum_{k=1}^{n}k\int_{0}^{a/n}W_{k}(a+b+t,c-t)W_{n+1-k}%
(a-nt,b+(n+1)t,c-t)~dt. \label{WnPlus1}%
\end{align}
If, on the other hand, $c<a/n,$ we have%
\begin{align}
&  W_{n+1}(a,b,c)=W_{1}\left(  a-nc,b+(n+1)c\right) \nonumber\\
&  -\sum_{k=1}^{n}k\int_{0}^{c}W_{k}(a+b+t,c-t)W_{n+1-k}%
(a-nt,b+(n+1)t,c-t)~dt. \label{WnPlus12}%
\end{align}

\end{theorem}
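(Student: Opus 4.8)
The plan is to prove both the existence/vanishing-variance claims and the inductive formulas \eqref{WnPlus1}--\eqref{WnPlus12} simultaneously, by a double induction: an outer induction on $n$, and within each step the analysis of the approximating loop $L_{n+1}(a,b,c)$ via a Makeenko--Migdal variation followed by an ``area-shrinking'' limit of exactly the kind carried out for the figure eight in Section \ref{figure8.sec}. First I would observe that $L_{n+1}(a,b,c)$ is a loop with simple crossings having $n$ crossings, namely the ``loop within a loop $\cdots$'' picture of Figure \ref{5fold.fig} with the outermost $n-1$ annular regions set to zero area and one interior region of area $a$ carved out of $F_1$. I would then perform a combined Makeenko--Migdal variation at the (single relevant) crossing that separates the area-$a$ region from the rest, varying the areas in the checkerboard pattern so that $a \mapsto a-nt$, while $b$ absorbs the increase and $c$ decreases appropriately — concretely tracking the vertex where $L$ splits as $L_1 L_2$ with $L_1 = L_k(a+b+t,c-t)$ (the $k$-fold subloop) and $L_2 = L_{n+1-k}(a-nt, b+(n+1)t, c-t)$, and noting that the chain rule produces the weighted sum $\sum_{k=1}^n k(\cdots)$ because the crossing is encountered with multiplicity reflecting the winding.

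The key steps, in order: (1) Identify $L_{n+1}(a,b,c)$ as an $n$-crossing loop and set up the checkerboard variation $L(t)$ as in Figure \ref{lt.fig}, so that Proposition \ref{variation.prop} applies and gives $\frac{d}{dt}\mathbb{E}\{\mathrm{tr}(\mathrm{hol}(L(t)))\}$ as a sum of products $\mathbb{E}\{\mathrm{tr}(\mathrm{hol}(L_{1,k}(t)))\}\,\mathbb{E}\{\mathrm{tr}(\mathrm{hol}(L_{2,k}(t)))\}$ plus covariance terms, with the combinatorial weight $k$. (2) Integrate from $t=0$ to the critical time $t^{*}$ at which the area-$a$ region has shrunk to zero: this is $t^{*}=a/n$ when $c \geq a/n$ (so the $c$-region survives, and by Theorem \ref{shrinkingLim.thm} — or rather the computation behind it — the limiting loop is $L_n(\tfrac{n+1}{n}a+b,\,c-\tfrac{a}{n})$), and $t^{*}=c$ when $c<a/n$ (so the $c$-region shrinks first, leaving $L_1(a-nc,\,b+(n+1)c)$). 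The ``$\delta$-function'' argument from the proof of Theorem \ref{shrinkingLim.thm}, specifically the remark that one simply sets the relevant loop variables to the identity in an expression like \eqref{loopVarExpectation}, justifies the identification of these boundary terms and the validity of passing to the limit as the area goes to zero — the same $\rho_a \to \delta$ reasoning used for \eqref{aLim}. (3) Apply the outer induction hypothesis: every loop $L_{1,k}(t)$ and $L_{2,k}(t)$ appearing on the right is of the form $L_j(\cdot,\cdot)$ or $L_j(\cdot,\cdot,\cdot)$ with $j \leq n$, hence has a deterministic large-$N$ limit with vanishing variance; by \eqref{covIneq} the covariance integrands vanish, by Proposition \ref{varianceProd.prop} the product-variance is controlled, and since $|\mathrm{tr}(U)|\le 1$ dominated convergence lets us pass $N \to \infty$ inside all the integrals, yielding \eqref{WnPlus1} and \eqref{WnPlus12}. (4) Run the existence/continuity/variance conclusions through the same integral identity, exactly as in the proof of Theorem \ref{main.thm}: the right-hand side is built from finitely many continuous functions of the areas (the induction hypothesis supplies continuity of each $W_j$), so $W_{n+1}(a,b,c)$ exists and is continuous; and the second part of Proposition \ref{variation.prop} together with Proposition \ref{varianceProd.prop} gives $\mathrm{Var}\to 0$. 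The base case $n=1$ is Theorem \ref{dn.thm} (together with the figure-eight computation of Section \ref{figure8.sec}, which is exactly $W_1(a,b,c)=W_L(a,b,c)$).

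The main obstacle I expect is bookkeeping of the combinatorics at the crossing: verifying that the Makeenko--Migdal split of $L_{n+1}(a,b,c)$ at the appropriate vertex produces precisely $L_k(a+b+t,c-t)$ paired with $L_{n+1-k}(a-nt,b+(n+1)t,c-t)$ for $k=1,\dots,n$ with weight $k$, and that the area-parameter substitutions (the $\tfrac{n+1}{n}a+b$ and $c-\tfrac{a}{n}$ in the boundary term, the $-nt$, $+(n+1)t$, $-t$ in the integrand) are internally consistent with the winding structure and with the invariant $\mathbf{a}\cdot\mathbf{w}$ being preserved along the variation (as noted after Theorem \ref{shrinkingLim.thm}). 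The analytic content — passing limits, dominated convergence, $\rho_a\to\delta$ — is routine given Theorems \ref{dn.thm}, \ref{shrinkingLim.thm} and Propositions \ref{variation.prop}, \ref{varianceProd.prop}; the care is entirely in getting the indices and area-shifts right, which is cleanest to do by drawing $L_{n+1}(a,b,c)$ with the puncture in various faces (as suggested by the hint about Figure \ref{double.fig}) and reading off how the loop factors at each pass through the chosen vertex.
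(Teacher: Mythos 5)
Your overall strategy --- induction on $n$, a Makeenko--Migdal variation that shrinks the innermost area (or the outer area, whichever reaches zero first), identification of the boundary term by the $\rho_a\to\delta$ argument behind Theorem \ref{shrinkingLim.thm}, and passage to the large-$N$ limit via the covariance estimates, Proposition \ref{varianceProd.prop}, and dominated convergence --- is exactly the paper's, and steps (2)--(4) of your outline are sound. The gap is in step (1), precisely at the point you yourself flag as the ``main obstacle.'' You propose to perform the variation ``at the (single relevant) crossing'' and to obtain the weighted sum $\sum_{k=1}^{n}k(\cdots)$ because ``the crossing is encountered with multiplicity reflecting the winding.'' That is not how the Makeenko--Migdal equation works: at a single simple crossing the loop passes through exactly twice, and the equation produces exactly \emph{one} product $\mathbb{E}\{\mathrm{tr}(\mathrm{hol}(L_1))\}\,\mathbb{E}\{\mathrm{tr}(\mathrm{hol}(L_2))\}$ plus one covariance --- not a sum over $k$. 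Carried out literally, your plan yields a single term rather than the $n$ terms of \eqref{WnPlus1}.

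The correct mechanism, and the one the paper uses, is to keep the $n-1$ outer annular regions at area $\varepsilon>0$ so that $L_{n+1}(a,b,c)$ has $n$ genuinely \emph{distinct} simple crossings, order them from outermost to innermost, and perform a simultaneous Makeenko--Migdal variation at all of them, with the variation at the $k$th crossing scaled by the coefficient $k$. These coefficients are forced: they are exactly what is needed so that the annular areas stay fixed at $\varepsilon$ while the net effect on the remaining faces is $a\mapsto a-nt$, $b\mapsto b+(n+1)t$, $c\mapsto c-t$. The chain rule then gives a sum over the $n$ crossings, the $k$th contributing its coefficient $k$ times the product for the split at that vertex, namely $L_k(a+b+t,c-t)$ paired with $L_{n+1-k}(a-nt,b+(n+1)t,c-t)$; one then lets $\varepsilon\to0$ on both sides of the resulting identity. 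So the index $k$ in \eqref{WnPlus1} enumerates \emph{which of the $n$ distinct vertices} the loop is cut at, and the weight $k$ is the coefficient of that vertex in the linear combination of Makeenko--Migdal vectors, not a multiplicity attached to one vertex. (A minor additional slip: the figure-eight computation of Section \ref{figure8.sec} is $W_2(a,b,c)$, i.e., the $n=1$ instance of \eqref{WnPlus1}; the base case of the induction is simply Theorem \ref{dn.thm}.)
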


Our main interest is in the quantity $W_{n}(a,c),$ which is the same as
$W_{n}(a,0,c).$ Note, however, that even if we put $b=0$ on the left-hand side
of (\ref{WnPlus1}) or (\ref{WnPlus12}), the right-hand side of the equation
will still involve $W_{n+1-k}(a^{\prime},b^{\prime},c^{\prime})$ with nonzero
values of $b^{\prime}.$ On the other hand, the inductive procedure ultimately
expresses $W_{n}(a,b,c)$---and thus, in particular, $W_{n}(a,c)=W_{n}%
(a,0,c)$---in terms of $W_{1}(\cdot,\cdot).$

Suppose, for example, that we wish to compute $W_{3}(a,c)=W_{3}(a,0,c)$ in
terms of $W_{1}(\cdot,\cdot).$ Since $W_{3}(a,c)$ is symmetric in $a$ and $c,$
it is harmless to assume that $c\geq a,$ so that (\ref{WnPlus1}) applies. We
first compute $W_{2}(a,b,c)$ in terms of $W_{1}(\cdot,\cdot)$, by applying
(\ref{WnPlus1}) with $n=1$ (since $c\geq a$) giving%
\begin{align}
W_{2}(a,b,c)  &  =W_{1}(2a+b,c-a)\nonumber\\
&  -\int_{0}^{a}W_{1}(a+b+t,c-t)W_{1}(a-t,b+c+t)~dt,\quad c\geq a.
\label{w2induct}%
\end{align}
(This formula is just what we obtained in (\ref{figure8large}) in Section
\ref{figure8.sec}.) We then apply (\ref{WnPlus1}) with $n=2$ and $b=0,$ giving%
\begin{align}
W_{3}(a,c)  &  =W_{2}(3a/2,c-a/2)\nonumber\\
&  -\int_{0}^{a/2}W_{1}(a+t,c-t)W_{2}(a-2t,3t,c-t)~dt\nonumber\\
&  -2\int_{0}^{a/2}W_{2}(a+t,c-t)W_{1}(a-2t,c+2t)~dt. \label{w3induct}%
\end{align}
Last, we plug into (\ref{w3induct}) the values of $W_{2}(a-2t,3t,c-t)$ and
$W_{2}(a+t,c-t)=W_{2}(a+t,0,c-t)$ computed in (\ref{w2induct}).

In the case of $W_{2}(a-2t,3t,c-t),$ the assumption that $c\geq a$ guarantees
that $c-t\geq a-2t,$ so that we may directly apply (\ref{w2induct}). In the
case of $W_{2}(a+t,c-t)$, there may be some values of $t$ for which we need to
use the symmetry of $W_{2}(a,c)$ with respect to $a$ and $c$ before applying
(\ref{w2induct}). If, however, $c\geq2a,$ we may directly apply
(\ref{w2induct}) to the computation of $W_{2}(a+t,c-t)$ for all $0\leq t\leq
a/2.$

Some explicit computations of these expressions in the plane case (i.e., the
$c=\infty$ case) are given in Section \ref{planeExamples.sec}.

\begin{proof}
[Proof of Theorem \ref{WnPlus1.thm}]The argument is similar to the proof of
Theorem \ref{main.thm}, except with a different variation of the areas. We
assume, inductively, that the limits in (\ref{wnLim}) and (\ref{WnLim}) exist
for all $k\leq n$. We also assume that the corresponding variances go to zero.
(When $n=1,$ this claim is just Theorem \ref{dn.thm}.) We then prove these
results for $n+1,$ while also proving the inductive formulas (\ref{WnPlus1})
and (\ref{WnPlus12}).

We consider the loop $L_{n}(a,b,c).$ We realize this loop by starting with an
$n$-fold \textquotedblleft loop within a loop within a loop,\textquotedblright%
\ with the areas of the $n-2$ outermost annular regions set equal to
$\varepsilon$, and then letting $\varepsilon$ tend to zero. We then order the
$n-1$ crossings from outermost to innermost. We then make a Makeenko--Migdal
variation at the $k$th crossing, scaled by a factor of $k.$ (See Figure
\ref{5fold2.fig}.) The resulting change in the areas is%
\begin{align*}
a  &  \mapsto a-(n-1)t\\
b  &  \mapsto b+nt\\
c  &  \mapsto c-t,
\end{align*}
while the areas of the outer $n-2$ annular regions remain unchanged.
\begin{figure}[ptb]%
\centering
\includegraphics[
height=2.4111in,
width=2.981in
]%
{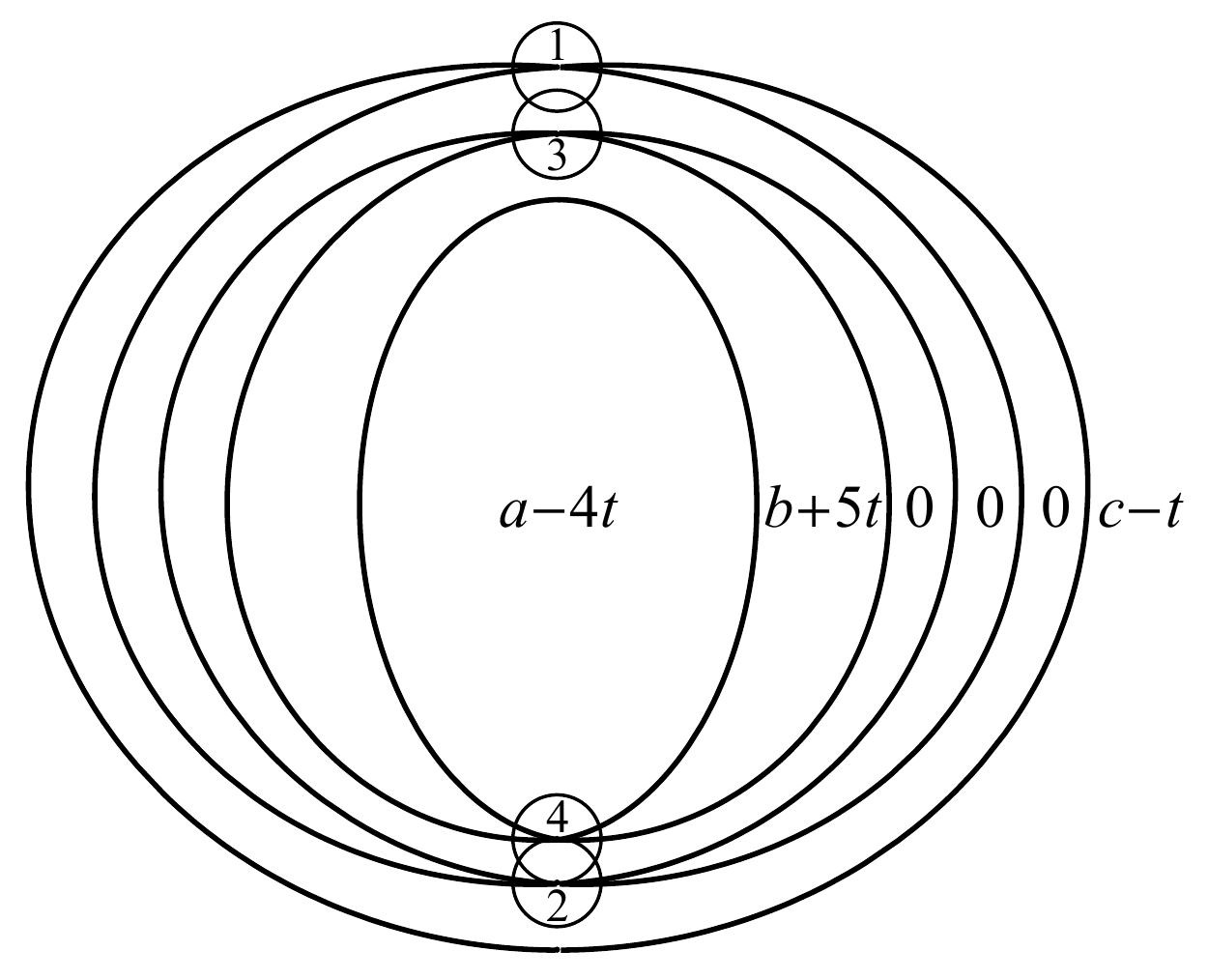}%
\caption{The variation in the areas of $L_{5}(a,b,c)$}%
\label{5fold2.fig}%
\end{figure}

If $c\geq a/n,$ the area of the outermost region will remain non-negative as
the area of the innermost region shrinks to zero. Thus, we may apply the
Makeenko--Migdal equation (\ref{variationWilson}) with the $n-2$ annular areas
equal to $\varepsilon$ and then let $\varepsilon$ tend to zero, giving a
finite-$N$ version of (\ref{WnPlus1}):%
\begin{align}
&  W_{n+1}^{N}(a,b,c)\nonumber\\
&  =W_{n}^{N}\left(  \frac{n+1}{n}a+b,c-\frac{a}{n}\right) \nonumber\\
&  -\sum_{k=1}^{n}k\int_{0}^{a/n}W_{k}^{N}(a+b+t,c-t)W_{n+1-k}^{N}%
(a-nt,b+(n+1)t,c-t)~dt\nonumber\\
&  +\mathrm{Cov}. \label{WnInduct}%
\end{align}
Here $W_{n}^{N}(a,b,c)=\mathbb{E}\left\{  \mathrm{tr}(\mathrm{hol}%
(L_{n}(a,b,c))\right\}  $ and $W_{n}^{N}(a,b)=\mathbb{E}\left\{
\mathrm{tr}(\mathrm{hol}(L_{n}(a,b)))\right\}  $ and \textquotedblleft%
$\mathrm{Cov}$\textquotedblright\ is a covariance term.

Then, as in the proof of the first two points of Theorem \ref{main.thm}, our
induction hypothesis, together with (\ref{covIneq}) and Proposition
\ref{varianceProd.prop}, allows us to take the $N\rightarrow\infty$ limit with
the covariance term going to zero. Thus, the large-$N$ limit of $W_{n+1}%
^{N}(a,b,c)$ exists for $c\geq a/n$ and satisfies (\ref{WnPlus1}). The
vanishing of the variance then follows by the same argument as in the proof of
Theorem \ref{main.thm}.

An entirely similar argument applies when $c<a/n,$ using a finite-$N$ version
of (\ref{WnPlus12}), establishing existence of the limit and vanishing of the
variance for $W_{n+1}(a,b,c)$ for all non-negative values of $a,$ $b,$ and
$c.$ The limit in (\ref{wnLim}) and vanishing of the variance then follow by
setting $b=0.$
\end{proof}

\section{The plane case, revisited\label{plane.sec}}

\subsection{The general result\label{planeGeneral.sec}}

In \cite{LevyMaster}, Thierry L\'{e}vy establishes the large-$N$ limit of
Wilson loop functionals for Yang--Mills theory in the plane, together with the
vanishing of the associated variances. (See Theorem 9.3.1 in \cite{LevyMaster}%
.)\ The limiting expectation values for loops with simple crossings are then
characterized in \cite{LevyMaster} by the large-$N$ Makeenko--Migdal equation
and another condition, labeled as Axiom $\Phi_{4}$ in Section 0 of
\cite{LevyMaster} and termed the \textit{unbounded face condition} in Theorem
2.3 of \cite{DHK2}.

The methods of the present paper give a new proof of some of these results,
making it possible to treat the cases of the plane and the sphere in a uniform
way. Specifically, the proof of Theorem \ref{main.thm} applies with minor
changes in the plane case. Furthermore, the analog of Theorem \ref{dn.thm} is
easily established by direct computation in the plane case, as we will
demonstrate shortly. We thus obtain a characterization of the master field on
the plane slightly different from the one in \cite{LevyMaster}; both
characterizations use the large-$N$ Makeenko--Migdal equation, but ours
replaces the unbounded face condition with the (simple!) formula for the
Wilson loop functional of a simple closed curve in the plane.

\begin{theorem}
\label{planeMain.thm}Theorem \ref{main.thm} holds also in the plane case.
\end{theorem}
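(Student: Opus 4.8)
The plan is to follow the proof of Theorem~\ref{main.thm} verbatim, replacing each ingredient that was specific to $S^{2}$ with its planar analog. The overall architecture---induction on the number of crossings $k$, with the reduction of an arbitrary loop to an $n$-fold circle via Proposition~\ref{shrinkAllButTwo.prop} and Theorem~\ref{shrinkingLim.thm}, followed by the inductive reduction of $L_{n}$ to $L_{n-1}$ as in Theorem~\ref{WnPlus1.thm}---carries over unchanged, since the Makeenko--Migdal equation (\ref{MMUN}), the abstract Makeenko--Migdal equation, and hence Proposition~\ref{variation.prop} all hold for the planar Yang--Mills measure (this is exactly L\'{e}vy's setting in \cite{LevyMaster}, and the variance computation in Proposition~\ref{variation.prop} is a purely algebraic consequence of the two Makeenko--Migdal identities). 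Proposition~\ref{varianceProd.prop}, Theorem~\ref{MMspan.thm}, and Proposition~\ref{windingIndep.prop} are likewise stated in a form that applies to any loop with simple crossings, independent of the ambient surface.

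Concretely, I would first note that the base case $k=0$---existence of the limit, vanishing of the variance, and continuity for a simple closed curve in the plane---is an elementary computation: a simple closed curve in $\mathbb{R}^{2}$ enclosing area $a$ has holonomy distributed as the heat kernel on $U(N)$ at time $a$, and its limiting trace is $e^{-a/2}$ with variance of order $1/N^{2}$ (this is the planar analog of Theorem~\ref{dn.thm}, and I would flag that it replaces the much deeper input of \cite{DN} in the sphere case; cf.\ Section~\ref{planeExamples.sec}). Next, the analog of Theorem~\ref{shrinkingLim.thm} holds in the plane with the face $F_{0}$ taken to be the unbounded face: shrinking all but two faces to zero produces a loop winding $n$ times around a simple closed curve, and the proof via loop variables goes through identically, with Driver's formula \cite{Dr} in place of Sengupta's, and with the homomorphism $\pi_{1}(\mathbb{G})\to\pi_{1}(\mathbb{R}^{2}\setminus\{x_{1}\})$ argument unchanged. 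Then the analog of Theorem~\ref{WnPlus1.thm}---the inductive computation of $W_{n}$ in the plane---follows by the same checkerboard Makeenko--Migdal variation.

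One genuine difference to address: in the sphere case $S^{2}$ is compact, so all faces have finite area and the ``choose $F_{0}$ arbitrarily'' freedom in Proposition~\ref{shrinkAllButTwo.prop} is used for surface applications; in the plane the unbounded face has infinite area. I would handle this by always selecting $F_{0}$ to be the unbounded face, so that in the reduction we shrink all \emph{bounded} faces except one, and the winding number vector $\mathbf{w}$ (computed relative to the unbounded face) together with the constant vector still spans the orthogonal complement of the Makeenko--Migdal span by Theorem~\ref{MMspan.thm}; the bookkeeping in Proposition~\ref{shrinkAllButTwo.prop} only ever constrains the \emph{finite} areas, so nothing breaks. The induction step then proceeds exactly as in the proof of Theorem~\ref{main.thm}: differentiate along the reduction path using Proposition~\ref{variation.prop}, integrate to express $\mathbb{E}\{\mathrm{tr}(\mathrm{hol}(L))\}$ as $\mathbb{E}\{\mathrm{tr}(\mathrm{hol}(L_{n}(a,c)))\}$ minus integrals of products and covariances of Wilson loop functionals of lower-crossing loops, apply the induction hypothesis plus (\ref{covIneq}), Proposition~\ref{varianceProd.prop}, and dominated convergence (using $|\mathrm{tr}(U)|\le 1$) to pass to the $N\to\infty$ limit, and finally recover the large-$N$ Makeenko--Migdal equation (\ref{main3}) from the fundamental theorem of calculus.

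The main obstacle is not conceptual but bookkeeping: one must check that the planar Yang--Mills measure genuinely satisfies the \emph{abstract} Makeenko--Migdal equation (\ref{abstractMM}) in exactly the form used in Proposition~\ref{variation.prop}, including the case of non-distinct edges---this is L\'{e}vy's Proposition~9.1.3 in \cite{LevyMaster} and its extension in \cite{DHK2}, so it is available but should be cited precisely---and that Theorem~\ref{shrinkingLim.thm}'s loop-variable argument remains valid when $F_{0}$ is the unbounded face and the graph is viewed directly in $\mathbb{R}^{2}$ with no sphere-to-plane comparison step (in fact this makes the argument \emph{simpler}, since the factor $\rho_{|F_{0}|}(\mathrm{hol}_{F_{0}})$ relating $\mu_{\mathrm{sphere}}^{\mathbb{G}}$ to $\mu_{\mathrm{plane}}^{\mathbb{G}}$ is absent). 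I would therefore write the proof as: ``The proof of Theorem~\ref{main.thm} applies mutatis mutandis, with the following substitutions,'' then list the four substitutions (base case, choice of $F_{0}$, Driver's formula for Theorem~\ref{shrinkingLim.thm}, planar version of Theorem~\ref{WnPlus1.thm}), and verify each is legitimate.
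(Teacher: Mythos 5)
Your proposal is correct and follows essentially the same route as the paper: choose $F_{0}$ to be the unbounded face, observe that the base case (the simple closed curve) is an elementary heat-kernel computation replacing Theorem \ref{dn.thm}, note that Theorem \ref{shrinkingLim.thm} goes through with $\rho_{\left\vert F_{0}\right\vert}$ effectively replaced by the constant function $\mathbf{1}$ (i.e., Driver's formula in place of Sengupta's), and run the same induction. The only detail the paper makes explicit that you leave implicit is that in the $n$-fold circle reduction the case distinction $c\geq a/n$ versus $c<a/n$ disappears because $c=\infty$, but your treatment of the unbounded face covers this.
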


\begin{proof}
We follow the same argument as in the case of the sphere, with a few minor
modifications. First, in Section \ref{selecting.sec}, we should choose $F_{0}$
to the unbounded face, so that the areas $a_{1},\ldots,a_{f-1}$ are finite
numbers. Theorem \ref{shrinkingLim.thm} still holds, with the same proof, in
the plane case, provided we interpret $\rho_{\left\vert F_{0}\right\vert }$ as
being the constant function $\mathbf{1}.$ (The normalization factor $Z$ is
then also equal to 1.) All other proofs go through without change---except
that in Section \ref{nfold.sec}, the condition $c\geq a/n$ always holds
because $c=\infty$.
\end{proof}

We now supply the proof of the plane case of Theorem \ref{dn.thm}. Of course,
the claim follows from results in \cite{LevyMaster} (see Theorem 6.3.1), but
we can give a elementary proof as follows.

\begin{proof}
[Proof of Theorem \ref{dn.thm} for $\mathbb{R}^{2}$]The function
$\mathrm{tr}(U)$ is an eigenfunction for the Laplacian on $U(N)$---with
respect to the scaled Hilbert--Schmidt metric in (\ref{innerProduct})---with
eigenvalue $-1.$ (See, e.g., Remark 3.4 in \cite{DHK1}.) It follows that for a
simple closed curve $C$ enclosing area $a,$ we have
\[
\mathbb{E}\left\{  \mathrm{tr}(\mathrm{hol}(C))\right\}  =\left.
e^{a\Delta/2}(\mathrm{tr}(U))\right\vert _{U=I}=e^{-a/2}.
\]
It is also possible to compute the (complex) variance of $\mathrm{tr}%
(\mathrm{hol}(C))$ explicitly. Using results from Section 3 of \cite{DHK1}, we
may easily compute that%
\[
\Delta\lbrack\mathrm{tr}(U)\overline{\mathrm{tr}(U)}]=-2\mathrm{tr}%
(U)\overline{\mathrm{tr}(U)}-\frac{2}{N^{2}}.
\]
(We have used that $\overline{\mathrm{tr}(U)}=\mathrm{tr}(U^{\ast
})=\mathrm{tr}(U^{-1}).$) Thus, the space of functions spanned by
$\mathrm{tr}(U)\overline{\mathrm{tr}(U)}$ and 1 is invariant under $\Delta,$
and the action of $\Delta$ on this space is represented by the matrix%
\[
\left(
\begin{array}
[c]{cc}%
-2 & 0\\
-2/N^{2} & 0
\end{array}
\right)  .
\]

We may exponentiate $a/2$ times this matrix using elementary formulas for the
exponential of $2\times2$ matrices (e.g., Exercise 5 or Exercises 6 and 7 in
\cite[Chapter 2]{HallBook}), with the result that%
\[
e^{a\Delta/2}\left[  \mathrm{tr}(U)\overline{\mathrm{tr}(U)}\right]
=e^{-a}\mathrm{tr}(U)\overline{\mathrm{tr}(U)}+\frac{e^{-a}-1}{N^{2}},
\]
so that
\begin{align*}
\mathbb{E}\left\{  \left\vert \mathrm{tr}(U)\right\vert ^{2}\right\}   &
=\left.  e^{a\Delta/2}\left[  \mathrm{tr}(U)\overline{\mathrm{tr}(U)}\right]
\right\vert _{U=I}\\
&  =e^{-a}+\frac{e^{-a}-1}{N^{2}}.
\end{align*}
Thus, the complex variance of $\mathrm{tr}(\mathrm{hol}(C))$ is%
\[
\mathbb{E}\left\{  \left\vert \mathrm{tr}(\mathrm{hol}(C))\right\vert
^{2}\right\}  -\left\vert \mathbb{E}\left\{  \mathrm{tr}(\mathrm{hol}%
(C))\right\}  \right\vert ^{2}=e^{-a}+\frac{e^{-a}-1}{N^{2}}-e^{-a},
\]
which goes to zero as $N$ tends to infinity.

More generally, Theorem 1.20 in \cite{DHK1} identifies the leading-order term
in the large-$N$ asymptotics of the $U(N)$ Laplacian, acting on
\textquotedblleft trace polynomials\textquotedblright\ (that is, sums of
products of traces of powers of $U$). This leading term satisfies a
\textit{first-order} product rule. Thus, the limiting heat operator is
multiplicative, from which it follows that the variance of any trace
polynomial vanishes as $N\rightarrow\infty.$ (See the last displayed formula
on p. 2620 of \cite{DHK1}.)
\end{proof}

\subsection{Example computations in the plane case\label{planeExamples.sec}}

We now illustrate the computation of large-$N$ Wilson loop functionals in the
plane case. We first compute the large-$N$ Wilson loop functionals
$W_{n}(a,b,c)$ from Section \ref{nfold.sec} in the plane case, which
corresponds to $c=\infty,$ for $n\leq3.$ The condition $c\geq a/n$ is always
satisfied in that case, and, as noted in the previous subsection, the Wilson
loop functional for a simple closed curve is $W_{1}(a,\infty)=e^{-a/2}.$

\textbf{The two-fold circle}. When $c=\infty,$ (\ref{WnPlus1}) takes the
following form:%
\begin{align*}
W_{2}(a,b,\infty)  &  =W_{1}(2a+b,\infty)-\int_{0}^{a}W_{1}(a+b+t,\infty
)W_{1}(a-t,\infty)~dt\\
&  =e^{-a-b/2}-\int_{0}^{a}e^{-(a+b+t)/2}e^{-(a-t)/2}~dt\\
&  =e^{-a-b/2}-e^{-a-b/2}\int_{0}^{a}~dt\\
&  =e^{-a-b/2}(1-a),
\end{align*}
which agrees with L\'{e}vy's formula for the \textquotedblleft loop within a
loop\textquotedblright\ in the appendix of\ \cite{LevyMaster}. When $b=0,$ we
get $W_{2}(a,\infty)=e^{-a}(1-a),$ which is the second moment of Biane's
distribution for the free unitary Brownian motion. (The moments may be found
in Lemma 3 of \cite{Bi1} or the Remark on p. 267 of \cite{Bi2}.)

\textbf{The three-fold circle}. When $n=3$ and $c=\infty,$ (\ref{w3induct})
becomes%
\begin{align*}
W_{3}(a,b,\infty)  &  =e^{-(b+3a/2)}(1-b-3a/2)\\
&  -\int_{0}^{a/2}e^{-(a+b+t)/2}e^{-(a-2t)-(b+3t)/2}(1-a+2t)~dt\\
&  -2\int_{0}^{a}e^{-(a+b+t)}(1-a-b-t)e^{-(a-2t)/2}~dt.
\end{align*}
All the $t$-terms in the exponents cancel, leaving us with $e^{-(b+3a/2)}$
times polynomials. Direct computation of the integral then gives%
\[
e^{-(b+3a/2)}\left(  1-3a+\frac{3}{2}a^{2}-b(1-a)\right)  ,
\]
which agrees with the $s=0$ case of the last of L\'{e}vy's examples of loops
with two crossings in the appendix of \cite{LevyMaster}. When $b=0,$ we get
\thinspace$e^{-3a/2}(1-3a+3a^{2}/2),$ which is the third moment of Biane's distribution.

\textbf{The trefoil.} We now analyze the large-$N$ Wilson loop functional for
the trefoil loop in the plane case, using the strategy outlined in Section
\ref{strategy.sec}. In this example, the loops $L_{1,j}$ and $L_{2,j}$
occurring on the right-hand side of the Makeenko--Migdal equation are simple
closed curves for all $j=1,2,3$, and the product of the Wilson loop
functionals simplifies as
\[
W(L_{1,j}(t))W(L_{2,j}(t))=e^{-a-(b+c+d)/2},
\]
where the dependence on $t$ and $j$ drops out. Thus, when we integrate from
$t=0$ to $t=1$, we obtain simply the constant value of the integrand. Now, the
signs in the Makeenko--Migdal variations in Figure \ref{trefoilshrink.fig} are
reversed from usual labeling. After adjusting for this and noting that the
coefficients of the variations in Figure \ref{trefoilshrink.fig} add to
$(b+c+d)/2$, we obtain
\[
W(L)=W_{2}(a+(b+c+d)/2,\infty)+\left(  \frac{b+c+d}{2}\right)
e^{-a-(b+c+d)/2}.
\]
Using the value for $W_{2}(a+(b+c+d)/2,\infty)$ computed above and simplifying
gives
\[
W(L)=e^{-a-(b+c+d)/2}(1-a),
\]
which agrees with the value of the master field for the trefoil in the
appendix of \cite{LevyMaster}.

\section{General compact surfaces\label{surfaces.sec}}

\subsection{Introduction\label{surfacesIntro.sec}}

We now consider the Yang--Mills measure on a compact surface $\Sigma,$
possibly with boundary. (See \cite{Sen97b} and also \cite{LevSurfaces}.) If
the boundary of $\Sigma$ is nonempty, we may optionally impose constraints on
the holonomy around the boundary components. The Makeenko--Migdal equation in
this setting was established rigorously in \cite{DGHK}.

Let us say that a loop in $\Sigma$ is \textbf{topologically trivial} if it is
contained in an (open) topological disk $U\subset\Sigma.$ All results in this
section pertain only to topologically trivial loops. For a topologically
trivial simple closed curve in $\Sigma,$ one may reasonably expect that the
analog of Theorem \ref{dn.thm} will hold, but the author is not aware of any
results in this direction.

Suppose, however, that we simply assume that the analog of Theorem
\ref{dn.thm} holds for topologically trivial simple closed curves in $\Sigma.$
We will then establish the analog of Theorem \ref{main.thm}---not for all
topologically trivial loops, but only for those satisfying a \textquotedblleft
smallness\textquotedblright\ condition. The reason for the smallness
assumption is as follows. Suppose $L$ is a loop in a topological disk $U$ and
let $F_{0}$ denote the face of $L$ that contains the complement of $U$ in
$\Sigma.$ Then in the first stage of deformation, following the procedure in
Section \ref{selecting.sec}, we perform a Makeenko--Migdal variation of $L$ to
a loop that winds $n$ times around a simple closed curve. This first variation can be done in such a way that \textit{the area of
}$F_{0}$\textit{ does not decrease}, so that the deformed loop can be chosen
to remain in $U$. In the second stage of deformation, we try to reduce---as in
Section \ref{nfold.sec}---from a curve that winds $n$ times around a simple
closed curve to a curve that winds only once around a simple closed curve. In
this second variation, the area of $F_{0}$ will unavoidably decrease and may
become zero, unless the original loop is \textquotedblleft
small.\textquotedblright\ If the area of $F_{0}$ becomes zero, the deformation
process becomes undefined, because the limit of the Wilson loop as $\left\vert
F_{0}\right\vert $ tends to zero is not easily evaluated (except when $F_{0}$
is a topological disk, i.e., when $\Sigma$ is a sphere).

On the other hand, suppose we assume that a natural strengthening of Theorem
\ref{dn.thm} holds for topologically trivial loops in $\Sigma$. Specifically,
suppose we simply \textit{assume} that Theorem \ref{dn.thm} holds not only for
topologically trivial simple closed curves in $\Sigma$, but also for curves
that wind $n$ times around a topologically trivial simple closed curve. Then
the second stage of deformation in the previous paragraph is not needed. Under
this assumption, therefore, we will be able to prove Theorem \ref{main.thm}
for \textit{all} topologically trivial loops in $\Sigma.$

Let us consider what the previous discussion means for the study of the master
field on a general compact surface $\Sigma.$ The analysis will have two
stages, as in the $S^{2}$ case: A direct calculation for the case of simple
closed curves, and a reduction of general curves to the simple closed case
using the Makeenko--Migdal equation. If we hope to use the just-discussed
results for general surface, \textit{more must be put into the direct
calculation stage}.

For a simple closed curve $C$ in $\Sigma,$ Sengupta's formula gives a
probability measure $\mu_{C}$ on $U(N),$ which describes the distribution of
the holonomy of a random connection around $C.$ If we ultimately wish to
analyze \textit{all} topologically trivial loops with simple crossing, then in
the direct calculation stage of analysis, we will have to look not just at the
first moment of $\mu_{C},$
\begin{equation}
\int_{U(N)}\mathrm{tr}(U)~d\mu_{C}(U), \label{firstMoment}%
\end{equation}
but also at the higher moments,%
\begin{equation}
\int_{U(N)}\mathrm{tr}(U^{n})~d\mu_{C}(U), \label{higherMoment}%
\end{equation}
which are just the Wilson loop functionals for curves that wind $n$ times
around $C.$ If one could establish directly (i.e., without using the
Makeenko--Migdal equation) the existence of the large-$N$ limit and the
vanishing of the variance for the quantities in (\ref{firstMoment}) and
(\ref{higherMoment}), then the results of this section would apply to give the
existence of the limit and vanishing of the variance for all topologically
trivial loops with simple crossings.

\subsection{Statements}

We begin by stating the analog of Theorem \ref{dn.thm} for topologically
trivial loops in $\Sigma$ as a conjecture.

\begin{conjecture}
\label{surfaces.conjecture}Consider a surface $\Sigma$ of area $\mathrm{area}%
(\Sigma).$ Let $U\subset\Sigma$ be a topological disk and consider a simple
closed curve $C$ in $U$. In Sengupta's formula for the Yang--Mills measure on
$\Sigma$, let us assign area $a$ to the interior of $C$ and area
\[
c:=\mathrm{area}(\Sigma)-a
\]
to the exterior of $C,$ for any $a<\mathrm{area}(\Sigma).$ Then the limit%
\[
\lim_{N\rightarrow\infty}\mathbb{E}\left\{  \mathrm{tr}(\mathrm{hol}%
(C))\right\}
\]
exists and depends continuously on $a$ and $c,$ and
\[
\lim_{N\rightarrow\infty}\mathrm{Var}\left\{  \mathrm{tr}(\mathrm{hol}%
(C))\right\}  =0.
\]

\end{conjecture}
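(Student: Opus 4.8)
The plan is to reduce the statement to an explicit large-$N$ analysis of a ratio of integrals over $U(N)$ coming from Sengupta's formula (\ref{senguptasFormula}), and then to run a limit-shape (``most probable representation'') argument of the kind used by Dahlqvist and Norris in the sphere case, now adapted to genus $g\geq 1$. The simple closed curve $C$ splits $\Sigma$ into a topological disk of area $a$ and a genus-$g$ surface with one boundary circle and area $c=\mathrm{area}(\Sigma)-a$. The heat-kernel action then gives
\[
\mathbb{E}\{\mathrm{tr}(\mathrm{hol}(C))\}=\frac{1}{Z}\int_{U(N)}\mathrm{tr}(U)\,\rho_a(U)\,Q_{g,c}(U)\,dU,\qquad Q_{g,c}(U)=\sum_{\lambda}(\dim\lambda)^{1-2g}\chi_\lambda(U)\,e^{-c\kappa_\lambda},
\]
where $\rho_a$ is the heat kernel on $U(N)$, the sum runs over irreducible representations $\lambda$ of $U(N)$ with scaled Casimir $\kappa_\lambda$, and $Q_{g,c}$ is the Yang--Mills kernel of the exterior piece (note $Q_{0,c}=\rho_c$, recovering (\ref{wilsonSimple})); gluing the disk to the exterior shows $Z=\sum_\lambda(\dim\lambda)^{2-2g}e^{-\mathrm{area}(\Sigma)\kappa_\lambda}$ is the partition function of $\Sigma$. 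Expanding $\rho_a$ in characters and using $\mathrm{tr}(U)=\tfrac1N\chi_\square(U)$ together with the Pieri rule and character orthogonality over $U(N)$, both numerator and denominator collapse to weighted sums over the Young-diagram-type data labelling the $\lambda$'s.

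First I would carry out this character reduction with care about the $U(N)$ signature bookkeeping (negative highest-weight entries, duals, and so on), writing $\mathbb{E}\{\mathrm{tr}(\mathrm{hol}(C))\}$ as a genuine ratio $\mathcal N_N/\mathcal D_N$. Second, I would rescale the representation data by $1/N$ and set up the associated variational problem: $\mathcal D_N$ should obey a large-deviation principle whose rate functional combines the Casimir term $\mathrm{area}(\Sigma)\,\kappa_\lambda$ with the dimension term $(2-2g)\log\dim\lambda$, and one must identify the minimizing limit shape $\mu^\ast=\mu^\ast(a,c)$ and show it varies continuously with $(a,c)$. This is the analogue of the Douglas--Kazakov/Boulatov analysis and of \cite[Section 3]{DN}, except that the dimension exponent is $2-2g$ rather than $2$; for $g\geq1$ this term is non-positive, which changes the competition that produces the phase transition and may move or remove it. Once $\mu^\ast$ is understood, the first moment should be an explicit continuous functional of it (a Stieltjes-transform-type quantity, perturbed by the single box insertion), giving the existence and continuity of $\lim_N\mathbb{E}\{\mathrm{tr}(\mathrm{hol}(C))\}$.

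For the variance I would expand $\mathbb{E}\{|\mathrm{tr}(\mathrm{hol}(C))|^2\}$ in the same way; here $|\mathrm{tr}(U)|^2=\tfrac1{N^2}\chi_\square(U)\overline{\chi_\square(U)}$ decomposes via the Littlewood--Richardson rule into characters obtained by adding and then removing a box, so the second moment is again a ratio of sums over $\lambda$. The claim $\mathrm{Var}\to0$ amounts to showing that this ratio converges to the square of the first-moment limit, which should follow from concentration onto the single shape $\mu^\ast$: adding and removing one box is an $O(1/N)$ perturbation of the data and does not move the saddle.

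The hard part will be the rigorous limit-shape analysis for $g\geq1$. On the sphere this already required substantial work; \cite{DN} leans on the Brownian-bridge / nonintersecting-paths picture (\ref{simpleMeasure}) and on the prior partition-function analyses of \cite{BS} and \cite{LM}, and no comparably clean probabilistic description is available for a handle, so the large-deviation principle for $\mathcal D_N$ and the convergence of $\mathcal N_N/\mathcal D_N$ would have to be established more or less from scratch. One must also control the regime where $a$ is near a possible transition point or near $\tfrac12\,\mathrm{area}(\Sigma)$ --- the analogue of the range left uncovered by \cite{LW1} on $S^2$. A secondary subtlety is that for $g\geq1$ the kernel $Q_{g,c}$ need not be pointwise non-negative on $U(N)$, so some positivity arguments available in the sphere case must be replaced, in particular when showing that $Z=\mathcal D_N$ does not vanish and extracting its exponential rate.
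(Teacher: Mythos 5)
This statement is Conjecture \ref{surfaces.conjecture}: the paper deliberately leaves it \emph{unproven}, and in Section \ref{surfacesIntro.sec} the author states explicitly that for a topologically trivial simple closed curve in a general surface $\Sigma$ ``the author is not aware of any results in this direction.'' There is therefore no proof in the paper to compare against; the conjecture is an input to Theorems \ref{surfacesMain.thm} and \ref{surfacesMain2.thm}, not a result of the paper.

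Your proposal is a research program rather than a proof, and the gap is the entire analytic core. Your setup is correct: Sengupta's formula does give
$\mathbb{E}\{\mathrm{tr}(\mathrm{hol}(C))\}=\frac{1}{Z}\int_{U(N)}\mathrm{tr}(U)\rho_a(U)Q_{g,c}(U)\,dU$
with $Q_{g,c}(U)=\sum_\lambda(\dim\lambda)^{1-2g}\chi_\lambda(U)e^{-c\kappa_\lambda}$ and $Z=\sum_\lambda(\dim\lambda)^{2-2g}e^{-\mathrm{area}(\Sigma)\kappa_\lambda}$, and the Pieri-rule reduction of numerator and denominator to sums over signatures is routine. But every step after that --- the large-deviation principle for the denominator with the non-positive dimension exponent $2-2g$, the identification and continuity in $(a,c)$ of a minimizing limit shape $\mu^\ast$, the concentration statement needed for the variance, the control of the ratio $\mathcal N_N/\mathcal D_N$ when $Q_{g,c}$ is not pointwise non-negative, and the behavior near any phase transition --- is exactly the open problem, and you acknowledge as much (``would have to be established more or less from scratch''). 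For $g=0$ these steps already required the full machinery of Boutet de Monvel--Shcherbina, L\'{e}vy--Ma\"{\i}da, and Dahlqvist--Norris; for $g\geq1$ the sign of the dimension term reverses the competition in the variational problem, so even the existence and uniqueness of the minimizer is not a formal consequence of the sphere case. A correct submission here would either prove these analytic statements or state plainly that the result is conjectural; as written, the proposal identifies a plausible route but proves nothing.
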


We now introduce a notion of \textquotedblleft smallness\textquotedblright%
\ for a general loop $L$ in $U$ with simple crossings. We may cut a loop $L$
in $U$ at a crossing, obtaining two loops $L_{1}$ and $L_{2}$. We may then cut
either $L_{1}$ or $L_{2}$ at one of its crossings, and so on. We refer to any
loop that can be obtained by a finite sequence of such cuts as a
\textbf{subloop} of $L.$ In particular, $L$ is a subloop of itself
corresponding to making zero cuts. We label the faces of $L$ as $F_{0}%
,\ldots,F_{f-1}$ where $F_{0}$ is the face containing the complement $U^{c}$
of $U.$

For a loop $L$ in $U$ and a point $x$ that is in $U$ but not in $L,$ we define
the winding number of $L$ around $x$ by considering the homotopy class of $L$
in $U\setminus\{x\},$ with respect to a fixed orientation on $U.$ Next, we
define, for each face $F_{j}$ of $L,$%
\begin{equation}
\left\vert w\right\vert _{\mathrm{\max}}(F_{j})=\max_{L^{\prime},j}\left\vert
w(L^{\prime},F_{j})\right\vert , \label{wMaxDef}%
\end{equation}
where the maximum ranges over all subloops $L^{\prime}$ of $L$ and all
$j=1,\ldots,f-1,$ and where $w(L^{\prime},F)$ is the winding number of
$L^{\prime}$ around $F.$ Finally, if $a_{j}$ is the area of $F_{j},$ we define%
\begin{equation}
A=a_{1}+a_{2}+\cdots+a_{f-1}=\mathrm{area}(\Sigma)-\mathrm{area}(F_{0}).
\label{Adef}%
\end{equation}

\begin{remark}
One may bound $\left\vert w\right\vert _{\max}$ as follows. Suppose for some
$k,$ one can travel from any face of $L$ to $F_{0}$ while crossing $L$ at most
$k$ times. Then the same is true of any subloop $L^{\prime}$ of $L.$ Since the
winding number of $F_{0}$ is zero and $w(L^{\prime},F)$ changes by one each
time we cross $L^{\prime},$ we conclude that $\left\vert w\right\vert _{\max}$
is at most $k.$
\end{remark}

\begin{theorem}
\label{surfacesMain.thm}Assume Conjecture \ref{surfaces.conjecture}. Let $L$
be a loop traced out on a graph in a topological disk $U\subset\Sigma$ with
simple crossings. Then Theorem \ref{main.thm} holds for $L$, provided $L$
satisfies the \textquotedblleft smallness\textquotedblright\ assumption
\begin{equation}
A\left\vert w\right\vert _{\mathrm{\max}}<\mathrm{area}(\Sigma),
\label{smallness}%
\end{equation}
where $\left\vert w\right\vert _{\max}$ and $A$ are defined in (\ref{wMaxDef})
and (\ref{Adef}), respectively.
\end{theorem}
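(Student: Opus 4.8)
The plan is to re-run the induction used to prove Theorem~\ref{main.thm}, replacing Theorem~\ref{dn.thm} by Conjecture~\ref{surfaces.conjecture} as the base case and the $S^{2}$-specific inputs (Theorems~\ref{shrinkingLim.thm}, \ref{nfold.thm}, \ref{WnPlus1.thm}) by their obvious surface analogues, and inserting the smallness hypothesis \eqref{smallness} at the single point where it is actually needed: to guarantee that, throughout the entire reduction, the area of the face $F_{0}$ containing $U^{c}$ never has to be sent to zero. Throughout, $F_{0}$ always denotes the face of the loop (or subloop) under consideration that contains $\Sigma\setminus U$, and all winding numbers are taken relative to a puncture in $F_{0}$, so $w(F_{0})=0$ and $\mathbf{a}\cdot\mathbf{w}=\sum_{j\geq1}a_{j}w_{j}$. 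Three facts are used repeatedly: (i) $\mathbf{a}\cdot\mathbf{w}$ is invariant under Makeenko--Migdal variations (Theorem~\ref{MMspan.thm}); (ii) $|\mathbf{a}\cdot\mathbf{w}|\leq A\,|w|_{\max}$, directly from \eqref{wMaxDef}--\eqref{Adef}; and (iii) both $|w|_{\max}$ and $A$ can only decrease if we pass to a subloop or enlarge the area of $F_{0}$, so the hypothesis \eqref{smallness} is inherited by every loop produced below (and, by the conjugation symmetry $\overline{\mathrm{tr}(U)}=\mathrm{tr}(U^{-1})$, it is harmless to assume the winding numbers that arise are non-negative).

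\emph{First stage.} Induct on the number of crossings $k$; the case $k=0$ is Conjecture~\ref{surfaces.conjecture} (which also supplies continuity). Given $L$ with $k$ crossings, apply Proposition~\ref{shrinkAllButTwo.prop} with this $F_{0}$. Since it gives $a_{0}'\geq a_{0}$, the resulting family $L(t)$ keeps the area of $F_{0}$ non-decreasing, i.e.\ the deformation can be taken to stay in $U$, and its $t\to1$ limit is a loop $L_{n}(a,c)$ winding $n=w_{F_{0}}(F_{1})$ times around a simple closed curve inside $U$, with $na=\mathbf{a}\cdot\mathbf{w}$; here one invokes the surface version of Theorem~\ref{shrinkingLim.thm}, whose proof is unchanged once one observes that the role of the heat-kernel factor $\rho_{|F_{0}|}(\mathrm{hol}_{F_{0}})$ is now played by the (still conjugation-invariant) partition function of the bordered surface obtained by cutting a disk out of $\Sigma$, which reassembles with the remaining loop-variable integral to give exactly the Wilson functional of that $n$-fold circle. (If $\mathbf{a}\cdot\mathbf{w}=0$ the limit is the trivial loop, with Wilson functional $1$.) As in the proof of Theorem~\ref{main.thm}, integrating the Makeenko--Migdal equation of Proposition~\ref{variation.prop} along $L(t)$ writes $\mathbb{E}\{\mathrm{tr}(\mathrm{hol}(L))\}$ in terms of $\mathbb{E}\{\mathrm{tr}(\mathrm{hol}(L_{n}(a,c)))\}$ plus integrals of products and covariances of Wilson functionals of subloops $L_{i,j}(t)$ with at most $k-1$ crossings; these are still topologically trivial and, by (iii), still satisfy \eqref{smallness}, so the induction hypothesis applies, and the $N\to\infty$ limit is taken exactly as in Theorem~\ref{main.thm} using \eqref{covIneq}, Proposition~\ref{varianceProd.prop} and dominated convergence. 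The same bookkeeping yields the vanishing of the variance, continuity in the areas, and the large-$N$ Makeenko--Migdal equation \eqref{main3}.

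\emph{Second stage (the $n$-fold circle), the crux.} It remains to prove the surface analogues of Theorems~\ref{nfold.thm} and \ref{WnPlus1.thm} for the loops $L_{n}(a,b,c)$ lying in $U$, whose outer face $F_{0}$ (area $c$) contains $\Sigma\setminus U$, under the standing condition $\tau:=\mathbf{a}\cdot\mathbf{w}=na+(n-1)b<\mathrm{area}(\Sigma)$; we induct on $n$, the case $n=1$ being Conjecture~\ref{surfaces.conjecture}. The $S^{2}$ proof of Theorem~\ref{WnPlus1.thm} carries over verbatim provided, along the variation used there to reduce $L_{n+1}$ to $L_{n}$, (a) the outer area stays positive and (b) every Wilson functional on the right of \eqref{WnPlus1} is of a loop satisfying the standing condition. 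For (a): with $c=\mathrm{area}(\Sigma)-a-b$ and $\tau=(n+1)a+nb$ for $L_{n+1}(a,b,c)$, the ``good branch'' hypothesis $c\geq a/n$ of Theorem~\ref{WnPlus1.thm} is equivalent to $\tau\leq n\,\mathrm{area}(\Sigma)$, which holds since $\tau<\mathrm{area}(\Sigma)\leq n\,\mathrm{area}(\Sigma)$; hence one never enters \eqref{WnPlus12}, the branch that would force $|F_{0}|\to0$, and the outer area, decreasing monotonically from $c$ to $c-a/n=\mathrm{area}(\Sigma)-\tau/n>0$, stays positive throughout. For (b): a direct computation shows each term on the right of \eqref{WnPlus1} has $\mathbf{a}\cdot\mathbf{w}\leq\tau$ and winding number $\leq n<n+1$ — the ``main term'' $W_{n}(\tfrac{n+1}{n}a+b,\,c-\tfrac{a}{n})$ having $\mathbf{a}\cdot\mathbf{w}=\tau$ exactly — so all of them fall under the induction on $n$. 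Since the loop $L_{n}(a,c)$ produced by the first stage has $\tau=\mathbf{a}\cdot\mathbf{w}(L)\leq A\,|w|_{\max}<\mathrm{area}(\Sigma)$ by \eqref{smallness}, the argument is complete.

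The one genuinely delicate point — and the reason for the hypothesis \eqref{smallness} — is that on a general surface one cannot evaluate $\lim_{|F_{0}|\to0}\mathbb{E}\{\mathrm{tr}(\mathrm{hol}(\,\cdot\,))\}$: on $S^{2}$ this was painless because $F_{0}$ is a disk, so the relevant partition function degenerates to a delta measure, whereas for general $\Sigma$ the limit is intractable (and may even diverge). The content of \eqref{smallness} is exactly that, measured through the deformation-invariant $\mathbf{a}\cdot\mathbf{w}$, the area of $F_{0}$ stays bounded away from $0$ at every stage of both reductions; everything else — closure of the relevant class of loops under the operations used, and preservation of topological triviality and of \eqref{smallness} — is routine and is the content of fact (iii) above.
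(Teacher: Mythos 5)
Your proposal is correct and follows essentially the same route as the paper: a first-stage reduction to an $n$-fold circle via Proposition \ref{shrinkAllButTwo.prop} with $F_0$ chosen to contain $U^c$ (so its area never decreases), followed by an induction on $n$ for $L_n(a,b,c)$ under the deformation-invariant condition $\mathbf{a}\cdot\mathbf{w}<\mathrm{area}(\Sigma)$, with the smallness hypothesis entering exactly through $|n|a=|\mathbf{a}\cdot\mathbf{w}|\leq A\,|w|_{\max}$ and its inheritance by subloops. Your reformulation of the branch condition $c\geq a/n$ as $\tau\leq n\,\mathrm{area}(\Sigma)$ is just a cleaner packaging of the paper's verification that $c(t)$ stays positive, so no substantive difference.
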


Note that since $L$ is contained in a disk, $L$ is certainly homotopically
trivial in $\Sigma.$ Thus, Theorem \ref{surfacesMain.thm} does not tell us
anything about homotopically nontrivial loops. Since $U$ is a topological
disk, Theorem \ref{MMspan.thm} applies in this context, provided we compute
the winding numbers of the faces of $L$ by regarding $L$ as a loop in $U.$

We next state a natural extension of Conjecture \ref{surfaces.conjecture}.

\begin{conjecture}
\label{surfaces2.conjecture}Continuing with the notation of Conjecture
\ref{surfaces.conjecture}, let $C^{(n)}$ denote the loop obtained by traveling
$n$ times around $C.$ Then for all $n\in\mathbb{Z},$ the limit%
\[
\lim_{N\rightarrow\infty}\mathbb{E}\left\{  \mathrm{tr}(\mathrm{hol}%
(C^{(n)}))\right\}
\]
exists and depends continuously on $a$ and $c,$ and
\[
\lim_{N\rightarrow\infty}\mathrm{Var}\left\{  \mathrm{tr}(\mathrm{hol}%
(C^{(n)}))\right\}  =0.
\]

\end{conjecture}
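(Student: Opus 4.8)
The plan is to read the left-hand side as a spectral quantity. Writing $\mu_{C}$ for the distribution of $\mathrm{hol}(C)$ furnished by Sengupta's formula, we have $\mathbb{E}\{\mathrm{tr}(\mathrm{hol}(C^{(n)}))\}=\int_{U(N)}\mathrm{tr}(U^{n})\,d\mu_{C}(U)$, so Conjecture \ref{surfaces2.conjecture} is exactly the assertion that the empirical spectral distribution of a random $U\sim\mu_{C}$ concentrates onto a deterministic measure $\nu_{a,c}$ on the unit circle: convergence of all moments together with vanishing fluctuations. I emphasize at the outset that this does \emph{not} follow formally from the $n=1$ case in Conjecture \ref{surfaces.conjecture}; concentration of the first moment carries no information about the covariances $\mathrm{Cov}\{\mathrm{tr}(U^{j}),\mathrm{tr}(U^{k})\}$, and it is precisely their vanishing that the conjecture demands. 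For this reason I would not expect a purely formal bootstrap from Conjecture \ref{surfaces.conjecture} to succeed.

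The first thing I would try nonetheless is the Makeenko--Migdal reduction of Theorem \ref{WnPlus1.thm}, taking Conjecture \ref{surfaces.conjecture} as the base case and attempting to reduce $C^{(n)}$ to a simple closed curve by successive unwinding. As the discussion in Section \ref{surfacesIntro.sec} makes clear, this is exactly the maneuver that is blocked: each unwinding step of the form (\ref{WnPlus1}) forces the area $c=\mathrm{area}(F_{0})$ of the face $F_{0}$ containing the complement of the disk to decrease, and on a surface other than $S^{2}$ the limit $\mathrm{area}(F_{0})\to 0$ is not controllable because $F_{0}$ is not a disk. A smallness hypothesis keeps $\mathrm{area}(F_{0})$ bounded away from zero throughout the reduction and yields Theorem \ref{surfacesMain.thm}, but the conjecture asks for \emph{all} $n$ and all $a<\mathrm{area}(\Sigma)$, so the degeneration must be confronted rather than circumvented. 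I therefore do not expect the Makeenko--Migdal route alone to close the gap.

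Instead I would attack $\mu_{C}$ directly, as Dahlqvist and Norris do for $n=1$. Comparing Sengupta's formula on the two sides of $C$ gives $d\mu_{C}(U)\propto \rho_{a}(U)\,Z_{\Sigma',c}(U)\,dU$, where $\Sigma'$ is the complementary surface-with-boundary and $Z_{\Sigma',c}$ is its partition function with prescribed boundary holonomy. Expanding in characters (with $C_{2}$ the Casimir for the scaled inner product),
\[
Z_{\Sigma',c}(U)=\sum_{\lambda}(\dim\lambda)^{1-2g}\,e^{-c\,C_{2}(\lambda)/2}\,\chi_{\lambda}(U),\qquad \rho_{a}(U)=\sum_{\mu}(\dim\mu)\,e^{-a\,C_{2}(\mu)/2}\,\chi_{\mu}(U),
\]
where $g$ is the genus of $\Sigma$. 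The plan is to run a ``most probable representation'' large-deviation analysis of this weighted sum, in the spirit of Douglas--Kazakov \cite{DoK}, L\'{e}vy--Ma\"{\i}da \cite{LM}, and \cite{DN}, identifying the limiting shape of the Young diagram that dominates. Such an analysis, \emph{when it succeeds}, pins down the entire limiting spectral measure $\nu_{a,c}$ and with it the vanishing of all fluctuations; it would therefore deliver Conjecture \ref{surfaces2.conjecture}, and simultaneously Conjecture \ref{surfaces.conjecture}, in one stroke, with continuity in $a$ and $c$ coming from continuity of the variational problem in its parameters. The induction in $n$ (or the equivalent first-order flow in $a$) is anchored by a uniform, surface-independent initial condition: as $a\to 0$ the interior disk collapses, $\mathrm{hol}(C)\to I$, and every moment tends to $1$.

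The hard part is the variational problem for $g\ge 1$. In the sphere case the dimension factor enters as $(\dim\lambda)^{+1}$ and \emph{rewards} large representations, producing the known Douglas--Kazakov transition at $\pi^{2}$; for $g\ge 1$ the factor $(\dim\lambda)^{1-2g}$ has a nonpositive exponent and instead \emph{penalizes} high-dimensional representations, so the rate functional, the existence and uniqueness of its minimizer, and the location and nature of any phase transition all change and must be established afresh. The decisive technical point is to show that the weighting by the genuinely $N$-dependent factor $Z_{\Sigma',c}$ does not reintroduce macroscopic spectral fluctuations---equivalently, in the language of Section \ref{varianceEst.sec}, that the covariances of higher traces under $\mu_{C}$ vanish as $N\to\infty$. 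I expect this concentration estimate for $g\ge 1$, together with the attendant control of the complementary partition function uniformly in the boundary holonomy, to be the main obstacle, and to require an input of essentially the same depth as Theorem \ref{dn.thm} itself rather than a soft consequence of it.
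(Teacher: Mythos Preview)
The statement you are addressing is a \emph{conjecture} in the paper, not a theorem; the paper offers no proof of it and explicitly leaves it open. There is therefore no ``paper's own proof'' to compare against. The paper's only engagement with Conjecture~\ref{surfaces2.conjecture} is to assume it as a hypothesis for Theorem~\ref{surfacesMain2.thm}, and to explain in Section~\ref{surfacesIntro.sec} why the Makeenko--Migdal reduction from $C^{(n)}$ to $C^{(1)}$ is obstructed on a general surface (the area of the non-disk face $F_{0}$ is forced to decrease and the limit $|F_{0}|\to 0$ is uncontrolled).

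Your proposal is not a proof but a research outline, and you say as much. On that level your diagnosis is accurate and agrees with the paper: you correctly note that the $n=1$ case does not formally imply the higher moments, that the unwinding argument of Section~\ref{nfold.sec} is blocked by the $|F_{0}|\to 0$ degeneration away from the sphere, and that a direct spectral analysis of $\mu_{C}$ via the character expansion---with the complementary partition function $Z_{\Sigma',c}$ replacing the second heat kernel---is the natural line of attack. Your observation that for genus $g\ge 1$ the factor $(\dim\lambda)^{1-2g}$ reverses the sign of the entropy term, so the Douglas--Kazakov variational analysis must be redone from scratch, goes somewhat beyond what the paper says and is a reasonable heuristic. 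But none of this constitutes a proof: the concentration estimate you flag as ``the main obstacle'' is precisely the content of the conjecture, and you have not supplied it. In short, your write-up is a sound assessment of why the conjecture is open and how one might approach it, but it is not---and does not claim to be---a proof, and the paper has none either.
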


Assuming Conjecture \ref{surfaces2.conjecture} holds, we can prove Theorem
\ref{main.thm} for topologically trivial loops in $\Sigma,$ \textit{without}
imposing a smallness assumption.

\begin{theorem}
\label{surfacesMain2.thm}Assume Conjecture \ref{surfaces2.conjecture}. Let $L$
be a loop with only simple crossing, traced out on a graph in a topological
disk $U$ in $\Sigma$. Then Theorem \ref{main.thm} holds for $L,$ without any
smallness assumption on $L.$
\end{theorem}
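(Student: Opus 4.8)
The plan is to repeat, essentially verbatim, the induction on the number of crossings used to prove Theorem~\ref{main.thm}, with one change: the input for loops that wind $n$ times around a simple closed curve, which in the sphere case came from Theorem~\ref{nfold.thm} (via a secondary induction reducing $C^{(n)}$ to $C^{(n-1)}$) and which in Theorem~\ref{surfacesMain.thm} forced the smallness assumption, is now supplied directly by Conjecture~\ref{surfaces2.conjecture}. The crucial structural fact that makes the induction close is that every loop appearing in the argument — all subloops of $L$, and all the deformed loops $L(t)$ — stays inside the topological disk $U$, so ``topologically trivial'' is preserved and the induction hypothesis applies at each step. The base case $k=0$, a topologically trivial simple closed curve, is the $n=1$ instance of Conjecture~\ref{surfaces2.conjecture} (equivalently Conjecture~\ref{surfaces.conjecture}): existence of the limit, continuity in the areas, and vanishing of the variance.

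For the inductive step, given $L\subset U$ with $k$ crossings, I would take $F_0$ to be the face of $L$ that contains $\Sigma\setminus U$ and apply Proposition~\ref{shrinkAllButTwo.prop}, with all winding numbers computed relative to a puncture in $F_0$ (legitimate since $U$ is a disk, so Theorem~\ref{MMspan.thm} applies). This produces a family $L(t)$, $0\le t<1$, of loops of the same topological type with $L(0)=L$, in which the areas of all faces other than $F_0$ and one other face $F_1$ shrink to zero as $t\to1$, while $\mathrm{area}(F_0)$ does \emph{not} decrease; hence $L(t)$ remains in $U$ and $\mathrm{area}(F_0)$ stays bounded away from zero. Using the general-surface version of Theorem~\ref{shrinkingLim.thm} — which holds with the same proof, the topology of $\Sigma\setminus U$ and any boundary constraints being absorbed into a single conjugation-invariant weight $\psi(\mathrm{hol}_{F_0})$ in the pushforward of the Yang--Mills measure under the loop-variable map, with the Jordan curve theorem forcing $\partial F_0$ to wind once around $F_1$ — the $t\to1$ limit of $\mathbb{E}\{\mathrm{tr}(\mathrm{hol}(L(t)))\}$ equals $\mathbb{E}\{\mathrm{tr}(\mathrm{hol}(C^{(n)}))\}$, where $C$ is a simple closed curve in $U$ with interior area $a:=a_1'$ and exterior area $\mathrm{area}(\Sigma)-a$, and $n=w_{F_0}(F_1)$; in the borderline case $\mathbf{a}\cdot\mathbf{w}=0$ the limit is simply $1$. (The weight $\psi$ matches for $L$ and for the limiting $C^{(n)}$ because Makeenko--Migdal variations preserve total area, so $\mathrm{area}(F_0)$ for the limiting curve is again $\mathrm{area}(\Sigma)-a$.)

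From here the argument is the one in the proof of Theorem~\ref{main.thm}. Differentiating $\mathbb{E}\{\mathrm{tr}(\mathrm{hol}(L(t)))\}$ in $t$ by the chain rule and the general-surface analog of Proposition~\ref{variation.prop} (valid by \cite{DGHK}, which covers arbitrary compact surfaces and boundary constraints), and integrating from $0$ to $1$, gives
\[
\mathbb{E}\{\mathrm{tr}(\mathrm{hol}(L))\}=\mathbb{E}\{\mathrm{tr}(\mathrm{hol}(C^{(n)}))\}-\sum_j c_j\int_0^1\Big(\mathbb{E}\{\mathrm{tr}(\mathrm{hol}(L_{1,j}(t)))\}\,\mathbb{E}\{\mathrm{tr}(\mathrm{hol}(L_{2,j}(t)))\}+\mathrm{Cov}\{\cdots\}\Big)\,dt,
\]
where $L_{1,j}(t),L_{2,j}(t)$ are subloops of $L(t)$ with at most $k-1$ crossings, hence topologically trivial. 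By the induction hypothesis their Wilson loop functionals converge and their variances vanish, so the covariances vanish by \eqref{covIneq}; since $|\mathrm{tr}(\mathrm{hol}(\cdot))|\le1$, dominated convergence together with Conjecture~\ref{surfaces2.conjecture} for the boundary term yields the existence of $W(L)=\lim_N\mathbb{E}\{\mathrm{tr}(\mathrm{hol}(L))\}$. Continuity in the areas follows exactly as in the proof of Theorem~\ref{main.thm}: fixing $F_0$ and two faces of extremal winding number makes $\mathbf{a}'$, hence $a$ and the coefficients $c_j$, continuous in the areas, the only borderline case $\mathbf{a}\cdot\mathbf{w}=0$ being handled by noting that then $\mathbf{a}'\to(\mathrm{area}(\Sigma),0,\dots,0)$ and the limiting functional is the constant $1$. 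The vanishing of $\mathrm{Var}\{\mathrm{tr}(\mathrm{hol}(L))\}$ is obtained by integrating the variance identity (the surface analog of \eqref{variationVariance}), bounding the covariance term with \eqref{covIneq} and Proposition~\ref{varianceProd.prop}, observing the $1/N^2$ terms are manifestly negligible, and using Conjecture~\ref{surfaces2.conjecture} for the $t\to1$ boundary term. Finally the large-$N$ Makeenko--Migdal equation \eqref{main3} follows, exactly as on the sphere, by integrating the finite-$N$ equation along a checkerboard variation, passing to the limit using the now-established vanishing of variances, and differentiating the resulting integral identity via the continuity of $W$.

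I expect the main obstacle to be the general-surface version of Theorem~\ref{shrinkingLim.thm}: verifying that for a graph contained in a disk $U$, the Yang--Mills measure on $\Sigma$, pushed forward to the loop variables, has exactly the form $\frac1Z\,\psi(\mathrm{hol}_{F_0})\prod_i\rho_{|F_i|}(h_i)\,dh$ with a single class-function weight $\psi$ depending only on $\mathrm{area}(F_0)$, the topology of $\Sigma\setminus U$, and the boundary data, and that this same $\psi$ governs the Wilson loop functional of $C^{(n)}$. This is a gluing/locality property of the heat-kernel lattice gauge theory; once it is established, everything else is a routine repetition of the $S^2$ arguments with Theorem~\ref{nfold.thm} replaced by Conjecture~\ref{surfaces2.conjecture}.
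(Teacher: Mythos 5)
Your proposal is correct and follows essentially the same route as the paper: choose $F_0$ to be the face containing $\Sigma\setminus U$ so that Proposition~\ref{shrinkAllButTwo.prop} keeps the deformation inside $U$, identify the $t\to1$ limit with $\mathbb{E}\{\mathrm{tr}(\mathrm{hol}(C^{(n)}))\}$ via the general-surface analog of Theorem~\ref{shrinkingLim.thm} (the paper's Lemma~\ref{surfaceLimit.lem}, whose proof is exactly the ``weight $\rho_{|F_0|}(\mathrm{hol}_{F_0'})$ times the planar measure'' decomposition you describe), and then run the induction of Section~\ref{induction.sec} with Conjecture~\ref{surfaces2.conjecture} replacing Theorem~\ref{nfold.thm}. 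The technical obstacle you flag at the end is precisely the content the paper isolates and proves as Lemma~\ref{surfaceLimit.lem}.
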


\subsection{Proofs}

Let $U$ be a topological disk in $\Sigma$ and let $L$ be a loop traced out on
a graph in $U$ and having only simple crossings. We follow the deformation
process in Section \ref{selecting.sec}, with the face $F_{0}$ whose area does
not decrease taken to be the face of $L$ containing $\Sigma\setminus U.$ The
obvious analog of Theorem \ref{shrinkingLim.thm} continues to hold (see Lemma
\ref{surfaceLimit.lem} below), so that at the end of the deformation process,
we obtain a loop that winds $n$ times around a simple closed curve in $U.$ We
may apply the same deformation process to all the subloops generated by the
Makeenko--Migdal equation, and again the area of $F_{0}$ will not decrease.
Thus, if we assume the stronger assertion in Conjecture
\ref{surfaces2.conjecture}, we may prove Theorem \ref{surfacesMain2.thm}
precisely as in Section \ref{induction.sec}, but using Conjecture
\ref{surfaces2.conjecture} in place of Theorem \ref{nfold.thm}.

If, on the other hand, we wish to assume only the weaker assertion in
Conjecture \ref{surfaces.conjecture}, we must proceed to the second stage of
analysis, deforming the loop that winds $n$ times around a simple closed curve
to one that winds once around a simple closed curve, as in Section
\ref{nfold.sec}. In this stage of analysis, the area of $F_{0}$ will
unavoidably decrease, unless $\left\vert n\right\vert =1$. If the area of
$F_{0}$ becomes zero during the deformation, the whole process fails, because
there is no easy way to compute the limit of a Wilson loop functional as the
area of $F_{0}$ goes to zero. (See (\ref{SenguptaInSurface}), in which the
limit as $\left\vert F_{0}\right\vert $ goes to zero is not easily computed.
The only case where this limit is easy to evaluate is when $F_{0}$ is a
topological disk, that is, when $\Sigma$ is a sphere.)

In the case of a curve winding $n$ times around a simple closed curve, there
is a simple condition (Theorem \ref{surfacesNfold.thm}) on the $n$ and the
enclosed area guaranteeing that the area of $F_{0}$ will remain positive. We
must then show that the smallness condition in Theorem \ref{surfacesMain.thm}
guarantees that each \textquotedblleft$n$-fold circle\textquotedblright%
\ generated by $L$ will satisfy the hypotheses of Theorem
\ref{surfacesNfold.thm}.

\begin{theorem}
\label{surfacesNfold.thm}Let the notation be as in Conjecture
\ref{surfaces.conjecture}. For each nonzero integer $n,$ let $L_{n}(a,c)$
denote the loop that winds $n$ times around $C.$ Assuming Conjecture
\ref{surfaces.conjecture}, Theorem \ref{main.thm} holds for $L_{n}(a,c),$
provided that
\begin{equation}
\left\vert n\right\vert a<\mathrm{area}(\Sigma). \label{naArea}%
\end{equation}

\end{theorem}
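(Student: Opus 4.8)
The plan is to rerun the inductive argument of Section~\ref{nfold.sec} (the proof of Theorem~\ref{WnPlus1.thm}) essentially verbatim, with Conjecture~\ref{surfaces.conjecture} taking the place of Theorem~\ref{dn.thm}, and to add one genuinely new ingredient: bookkeeping that keeps the area of the face $F_{0}$ of $L_{n}(a,c)$ containing $\Sigma\setminus U$ strictly positive throughout the deformation. First I would dispose of the easy cases. For $n=0$ the holonomy is the identity; for $|n|=1$ the curve is simple closed, so the statement is exactly Conjecture~\ref{surfaces.conjecture} together with the (vacuous) Makeenko--Migdal equation for a crossing-free loop. Since $\mathrm{hol}(L_{-n})=\mathrm{hol}(L_{n})^{-1}$ and Sengupta's measure for a simple closed curve is invariant under inversion, one has $W_{-n}=W_{n}$ (and likewise for the variance), so it suffices to treat $n\ge2$. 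For such $n$ I would, exactly as in Section~\ref{nfold.sec}, approximate $L_{n}(a,c)$ by a ``loop within a loop within a loop'' with $n-1$ simple crossings and $n-2$ intermediate annular faces of area $\varepsilon$, and let $\varepsilon\to0$ using the continuity of the Wilson loop functional on surfaces from \cite{LevSurfaces}; here $F_{0}$ is the outermost face, of area $c=\mathrm{area}(\Sigma)-a$.

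The reduction step then mirrors the proof of Theorem~\ref{WnPlus1.thm} word for word. One introduces the auxiliary loops $L_{m}(a',b',c')$, performs a Makeenko--Migdal variation scaled by $m$ at the $m$-th crossing, applies the chain rule and the first part of Proposition~\ref{variation.prop}, and obtains a finite-$N$ recursion of the form (\ref{WnInduct}), now with the Yang--Mills measure on $\Sigma$ given by Sengupta's formula (with any imposed boundary-holonomy constraints carried along). The surface analog of Theorem~\ref{shrinkingLim.thm}---Lemma~\ref{surfaceLimit.lem}---handles the $\varepsilon\to0$ limit by setting the loop variables of the vanishing annuli equal to the identity in the loop-variable form of Sengupta's formula; this step requires that $F_{0}$ is \emph{not} among the shrinking faces, which is precisely why we need its area bounded away from zero. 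Passing $N\to\infty$ is then routine: the induction hypothesis supplies the lower-level limits and the vanishing of their variances, (\ref{covIneq}) and Proposition~\ref{varianceProd.prop} send the covariance terms to zero, and dominated convergence applies because $\left\vert\mathrm{tr}(\cdot)\right\vert\le1$ (so also $\left\vert\mathrm{Var}\right\vert\le1$). This yields the surface versions of the recursions (\ref{WnPlus1}) and (\ref{WnPlus12}) for $W_{n+1}$, and---by the same reasoning as in the proof of Theorem~\ref{main.thm}---the vanishing of the variance and the large-$N$ Makeenko--Migdal equation.

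The crux, and the step I expect to be the main obstacle, is the area bookkeeping that makes hypothesis (\ref{naArea}) do its work. Two things must be checked along the \emph{entire} cascade of reductions, including the auxiliary loops $L_{k}(a',b',c')$ (with $b'\ne0$) that the recursion spawns: (i) we always land in the branch ``$c\ge a/m$'' of the Theorem~\ref{WnPlus1.thm}-analog, so that the outer-area-preserving reduction (\ref{WnPlus1}) applies rather than (\ref{WnPlus12}), which would drive $\left\vert F_{0}\right\vert$ to zero; and (ii) $\left\vert F_{0}\right\vert$ stays strictly positive, so that we never confront the limit $\left\vert F_{0}\right\vert\to0$, which (unlike on $S^{2}$) is not a $\delta$-limit on a general surface---cf.\ (\ref{SenguptaInSurface}). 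The right invariant to propagate is $\mathbf{a}\cdot\mathbf{w}$: by Theorem~\ref{MMspan.thm} it is unchanged by the Makeenko--Migdal variation at each stage (so the endpoint loop carries the same value as its parent), a direct computation shows that every cut-loop $L_{1,j}$, $L_{2,j}$ produced in the recursion has \emph{strictly smaller} $\mathbf{a}\cdot\mathbf{w}$ than its parent, and for $L_{n}(a,0,c)$ itself $\mathbf{a}\cdot\mathbf{w}=na$. Hence every loop $L_{m}(a',b',c')$ appearing in the argument satisfies $\mathbf{a}\cdot\mathbf{w}=ma'+(m-1)b'\le na$; since for $m\ge2$ one has $a'+b'\le ma'+(m-1)b'$, the total area of the non-$F_{0}$ faces is at most $\mathbf{a}\cdot\mathbf{w}\le na<\mathrm{area}(\Sigma)$, which gives $\left\vert F_{0}\right\vert\ge\mathrm{area}(\Sigma)-na>0$ by (\ref{naArea}), and the same bound yields $c'\ge a'/(m-1)$ at every stage. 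So the correct inductive statement is: Theorem~\ref{main.thm} holds for $L_{m}(a',b',c')$ whenever $ma'+(m-1)b'<\mathrm{area}(\Sigma)$, and the recursion only ever invokes lower-level loops satisfying this bound. Verifying the strict inequality ``cut-loop $\mathbf{a}\cdot\mathbf{w}<$ parent $\mathbf{a}\cdot\mathbf{w}$'' uniformly over the integration parameter $t$, and the elementary estimate $a'+b'\le\mathbf{a}\cdot\mathbf{w}$ for each loop type, is the only real computation; everything else is a transcription of Section~\ref{nfold.sec}.
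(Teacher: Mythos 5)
Your proposal is correct and follows essentially the same route as the paper: induction on $n$ for the generalized loops $L_{m}(a',b',c')$ under the invariant hypothesis $ma'+(m-1)b'<\mathrm{area}(\Sigma)$, preservation of $\mathbf{a}\cdot\mathbf{w}$ under the Makeenko--Migdal variation via Theorem \ref{MMspan.thm}, monotonicity of $\mathbf{a}\cdot\mathbf{w}$ for the cut loops, and the resulting positivity of $\left\vert F_{0}\right\vert$ (equivalently $c(t)>(n-1)a(t)+(n-2)b(t)\ge0$) throughout the deformation. The only cosmetic differences are your explicit treatment of $n<0$ via $W_{-n}=W_{n}$ and your claim of a strict (rather than non-strict) decrease of $\mathbf{a}\cdot\mathbf{w}$ for the cut loops, neither of which affects the argument.
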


\begin{proof}
It is harmless to assume $n>0.$ Following the proof of Theorem \ref{nfold.thm}
in the sphere case, we deform $L_{n}(a,c)$ into a loop of the form
$L_{n}(a,b,c).$ Let $\mathbf{a}=(a,b,c)$ denote the vector of areas of the
faces of $L_{n}(a,b,c)$ and let $\mathbf{w}=(w_{1},w_{2},w_{3})$ be the
associated vector of winding numbers, viewing $L_{n}(a,b,c)$ as a loop in the
disk $U,$ with $w_{3}=0.$ We will actually prove Theorem
\ref{surfacesNfold.thm} for the loops $L_{n}(a,b,c)$ by induction on $n>0,$
under the assumption that
\begin{equation}
\mathbf{a}\cdot\mathbf{w}<\mathrm{area}(\Sigma), \label{nabArea}%
\end{equation}
which reduces to (\ref{naArea}) when $n>0$ and $b=0.$ When $n=1$, the loop
$L_{1}(a,b,c)$ is (by definition) the same as $L_{1}(a,b+c),$ so that the
desired result is just Conjecture \ref{surfaces.conjecture}. For $n\geq2,$ we
may follow the same sort of inductive argument as in the sphere case, provided
that we never shrink the area of the \textquotedblleft$c$\textquotedblright%
\ region to zero.

Take $n\geq2$ and assume that Theorem \ref{surfacesNfold.thm} holds for loops
of the form $L_{k}(a,b,c)$ satisfying (\ref{nabArea}), with $k<n.$ Consider a
loop $L_{n}(a,b,c)$ satisfying (\ref{nabArea}) and deform it into the loop
$L_{n}(a(t),b(t),c(t))$ with $0\leq t\leq a/(n-1),$ as in Section
\ref{nfold.sec}. As we vary the values of $a,$ $b,$ and $c,$ the values of
$\mathrm{area}(\Sigma)$ and $\mathbf{a}\cdot\mathbf{w}$ remains constant---as
can be seen explicitly or as a consequence of Theorem \ref{MMspan.thm}. Thus,
by (\ref{nabArea}), we have%
\begin{equation}
\mathbf{a}\cdot\mathbf{w}=na(t)+(n-1)b(t)<a(t)+b(t)+c(t). \label{awn}%
\end{equation}
Since $n\geq2,$ (\ref{awn}) tells us that $c(t)>(n-1)a(t)+(n-2)b(t)>0.$

Thus, $c(t)$ remains positive as $t$ approaches $a/(n-1),$ when we obtain a
loop of the form $L_{n-1}(a^{\prime},c^{\prime}),$ with
\[
a^{\prime}=\left.  b\left(  t\right)  \right\vert _{t=a/(n-1)}=b+\frac{n}%
{n-1}a.
\]
We can then see explicitly that the value of $\mathbf{a}\cdot\mathbf{w}$ for
$L_{n-1}(a^{\prime},c^{\prime})$ is the same as for $L_{n}(a,b,c),$ namely
$na+(n-1)b.$ Thus, by induction, Theorem \ref{surfacesMain.thm} holds for
$L_{n-1}(a^{\prime},c^{\prime}).$ Furthermore, the loops $L_{1,j}(t)$ and
$L_{2,j}(t)$ obtained from the Makeenko--Migdal equation will be
$L_{k}(a(t),b(t),c(t))$ or $L_{n-k}(a(t)+b(t),c(t)),$ with $1\leq k<n.$ It is
then easy to see that the value of $\mathbf{a}\cdot\mathbf{w}$ for these loops
is no bigger than for $L_{n}(a(t),b(t),c(t)),$ which is the same as for
$L_{n}(a,b,c).$ Thus, $L_{1,j}(t)$ and $L_{2,j}(t)$ satisfy (\ref{nabArea})
and by induction, Theorem \ref{surfacesMain.thm} holds for these loops as well.

From this point, the argument is the same as in the sphere case. In
particular, since we have ensured that $c(t)$ remains positive as we deform
the areas of $L_{n}(a,b,c)$, we may apply (\ref{WnInduct}) to compute the
Wilson loop functional for $L_{n}(a,b,c).$
\end{proof}

We now prove Theorem \ref{surfacesMain.thm}. Following the logic in Section
\ref{general.sec}, we first deform $L$ into a loop of the form $L_{n}(a,c).$
Since the area of $F_{0}$ increases during this process, there is no
obstruction to carrying out this first step in the analysis of $L.$
Nevertheless, we require a smallness assumption on $L$ that will ensure that
the limiting loop $L_{n}(a,c)$ will satisfy the hypothesis of Theorem
\ref{surfacesNfold.thm}. This smallness assumption must also be inherited by
the loops $L_{1,j}(t)$ and $L_{2,j}(t)$ occurring on the right-hand side of
the Makeenko--Migdal equation, so that these loops can be analyzed by
induction on the number of crossings.

\begin{proof}
[Proof of Theorem \ref{surfacesMain.thm}]We proceed by induction on the number
$k$ of crossings. If $k=0,$ the result is Conjecture \ref{surfaces.conjecture}%
. Assume, then, that the result holds for loops with fewer than $k$ crossings
and consider a loop $L$ with $k$ crossings. As in Lemma \ref{surfaceLimit.lem}%
, we deform $L$ into a loop $L_{n}(a,c).$ By the Makeenko--Migdal equation,
the variation of the Wilson loop functional will involve loops of the form
$L_{1,j}(t)$ and $L_{2,j}(t),$ all of which have fewer than $k$ crossings. We
need to verify (1) that $L_{n}(a,c)$ satisfies $\left\vert n\right\vert
a<\mathrm{area}(\Sigma)$ and (2) that $L_{1,j}(t)$ and $L_{2,j}(t)$ both
satisfy the smallness assumption (\ref{smallness}). If so, we may apply
Theorem \ref{surfacesNfold.thm} to the loops $L_{n}(a,c)$ and our induction
hypothesis to $L_{1,j}(t)$ and $L_{2,j}(t)$ and the argument is then the same
as in the sphere case.

For Point (1), we note that the value of $n$ is (Lemma \ref{surfaceLimit.lem})
the winding number of $L$ around $F_{1},$ while the value of $a$ is the
limiting value of $a_{1}(t)$ (the area of $F_{1}$) as $t$ approaches 1. Now,
on the one hand, the value of $\mathbf{a}\cdot\mathbf{w}$ for $L(t)$ is
independent of $t,$ by Theorem \ref{MMspan.thm}. On the other hand,
$\mathbf{a}\cdot\mathbf{w}$ approaches the value $na$ as $t$ approaches 1,
since $a_{1}(t)\rightarrow a$ and $a_{j}(t)\rightarrow0$ for $j\geq2.$ Thus,
$na=\mathbf{a}\cdot\mathbf{w}$. But from the definitions (\ref{Adef}) and
(\ref{wMaxDef}) of $A$ and $\left\vert w\right\vert _{\mathrm{\max}},$ we have
$\left\vert \mathbf{a}\cdot\mathbf{w}\right\vert \leq A\left\vert w\right\vert
_{\max}$ and thus
\[
\left\vert n\right\vert a=\left\vert \mathbf{a}\cdot\mathbf{w}\right\vert \leq
A\left\vert w\right\vert _{\max}<\mathrm{area}(\Sigma).
\]

For Point (2), we note that for the loops $L(t),$ the area of $F_{0}$ is
always increasing (Proposition \ref{shrinkAllButTwo.prop}), meaning that
$A(t)$ is decreasing. Thus, for all $i$ and $j,$ we have
\[
A_{L_{i,j}(t)}\leq A_{L(t)}\leq A.
\]
Furthermore, since every subloop of $L_{i,j}$ is also a subloop of $L,$ we see
that $|w_{L_{i,j}}|_{\max}\leq|w_{L}|_{\max}.$ Thus, we have%
\[
A_{L_{i,j}(t)}|w_{L_{i,j}}|_{\max}\leq A|w_{L}|_{\max}<\mathrm{area}(\Sigma).
\]
Having verified these two points, the proof now proceeds as in the sphere case.
\end{proof}

It remains only to verify that the analog of Theorem \ref{shrinkingLim.thm}
holds for loops in $U\subset\Sigma.$

\begin{lemma}
\label{surfaceLimit.lem}Consider the Yang--Mills measure on $\Sigma$ for an
arbitrary connected compact Lie group $K.$ Let $L$ be a loop traced out on a
graph in $U\subset\Sigma$ and having only simple crossings. Denote the number
of faces of $L$ by $f$ and label the faces as $F_{0},F_{1},F_{2}%
,\ldots,F_{f-1},$ where $F_{0}$ is the face containing $U^{c}.$ Suppose we
vary the areas of the faces as a function of a parameter $t\in\lbrack0,1)$ in
such a way that as $t\rightarrow1,$ the areas of $F_{2},\ldots,F_{f-1}$ tend
to zero, while the areas of $F_{0}$ and $F_{1}$ approach non-negative real
numbers $c$ and $a,$ respectively, with $a>0.$ Then%
\[
\lim_{t\rightarrow1}\mathbb{E}\{\mathrm{tr}(\mathrm{hol}(L))\}=\mathbb{E}%
\left\{  \mathrm{tr}(\mathrm{hol}(L_{n}(a,c)))\right\}  ,
\]
where $n$ is the winding number of $L$ around $F_{1}$.
\end{lemma}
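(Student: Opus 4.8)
The plan is to repeat the proof of Theorem \ref{shrinkingLim.thm} almost verbatim, the only new point being that the heat-kernel factor $\rho_{|F_0|}$ attached to the outer face is now replaced by the ``surface'' factor appearing in Sengupta's formula on $\Sigma$, equation (\ref{SenguptaInSurface}).

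First I would fix a minimal oriented graph $\mathbb{G}$ on which $L$ can be traced; since $L\subset U$ and $U$ is a topological disk, we may take $\mathbb{G}\subset U$ and regard $\mathbb{G}$ as a graph in $\mathbb{R}^2$. Its faces are the bounded faces $F_1,\dots,F_{f-1}$ together with one unbounded face, which together with $U^c$ constitutes the face $F_0$ of $L$ in $\Sigma$. By (\ref{SenguptaInSurface}),
\[
\mathbb{E}\{\mathrm{tr}(\mathrm{hol}(L))\}=\frac{1}{Z}\int_{K^e}\mathrm{tr}(\mathrm{hol}(L))\,\Phi\big(\mathrm{hol}_{F_0}(\mathbf{x})\big)\prod_{i=1}^{f-1}\rho_{|F_i|}\big(\mathrm{hol}(F_i)\big)\,d\mathbf{x},
\]
where $\mathrm{hol}_{F_0}(\mathbf{x})$ is the product of edge variables around the boundary of the outer face and $\Phi$ is the factor of (\ref{SenguptaInSurface}) attached to $F_0$ --- a continuous class function on $K$ depending continuously on $|F_0|$, equal to $\rho_{|F_0|}$ when $\Sigma=S^2$, in which case the displayed identity is exactly the comparison of Sengupta's and Driver's formulas used in Theorem \ref{shrinkingLim.thm}.

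Next I would pass to L\'{e}vy's loop variables as in that proof: fixing a vertex and a spanning tree of $\mathbb{G}$, one obtains free generators $L_1,\dots,L_{f-1}$ of $\pi_1(\mathbb{G})$ attached to $F_1,\dots,F_{f-1}$, with corresponding loop variables $h_1,\dots,h_{f-1}$ that are independent and heat-kernel distributed at the times $|F_i|$ under the planar measure, and with $\mathrm{hol}(L)=w_1(h_1,\dots,h_{f-1})$ and $\mathrm{hol}_{F_0}=w_0(h_1,\dots,h_{f-1})$. Letting $t\to1$, the areas $|F_2|,\dots,|F_{f-1}|$ tend to $0$, so the corresponding heat kernels concentrate at the identity and we may set $h_i=\mathrm{id}$ for $i\ge2$; the fundamental-group computation of Theorem \ref{shrinkingLim.thm}, which uses only that $U$ is a disk, then gives $w_1(h_1,\mathrm{id},\dots,\mathrm{id})=h_1^{n}$ with $n$ the winding number of $L$ around $F_1$, and $w_0(h_1,\mathrm{id},\dots,\mathrm{id})=h_1$, since the outer boundary of the connected planar graph $\mathbb{G}$ winds exactly once around each bounded face. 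Hence
\[
\lim_{t\to1}\mathbb{E}\{\mathrm{tr}(\mathrm{hol}(L))\}=\frac{1}{Z}\int_K\mathrm{tr}(h^n)\,\Phi(h)\,\rho_a(h)\,dh,
\]
with $\Phi$ and $Z$ taken at the limiting area $c$ of $F_0$ (and total area $a+c$); these limits exist just as in Theorem \ref{shrinkingLim.thm}. Applying (\ref{SenguptaInSurface}) directly to $L_n(a,c)$ --- whose only bounded face is the disk of area $a$, whose outer face has area $c$, and for which $\mathrm{hol}(L_n(a,c))=h^n$ and $\mathrm{hol}_{F_0}=h$ --- produces the same integral, so the limit equals $\mathbb{E}\{\mathrm{tr}(\mathrm{hol}(L_n(a,c)))\}$.

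The one step requiring care beyond Theorem \ref{shrinkingLim.thm} is the verification that $\Phi$ depends on the edge variables only through the single conjugacy class $\mathrm{hol}_{F_0}$ and varies continuously with the face areas, so that the $t\to1$ limit can be taken inside the integral and matches the factor for $L_n(a,c)$; this is precisely what (\ref{SenguptaInSurface}) provides, together with the standard fact that the Yang--Mills partition function of a surface with boundary is a class function of its boundary holonomies. This is also what forces the restriction to loops inside a topological disk: were $F_0$ not planar in $\Sigma$, no such single factor would be available, and --- as remarked after the statement --- the limit of $\Phi$ as $|F_0|\to0$ is tractable only when $\Sigma=S^2$.
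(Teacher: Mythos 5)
Your proposal is correct and follows essentially the same route as the paper: both reduce the lemma to the loop-variable computation of Theorem \ref{shrinkingLim.thm}, with the only change being the factor attached to $F_{0}$ coming from (\ref{SenguptaInSurface}), and both conclude by recognizing the limit as Sengupta's formula for $L_{n}(a,c)$. The sole cosmetic difference is that you integrate out the auxiliary edge variables $\mathbf{y}$ into a single class function $\Phi$, whereas the paper carries the $\mathbf{y}$-integral along explicitly.
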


\begin{proof}
Let $\mathbb{G}$ be a minimal graph (necessarily connected) in which $L$ can
be traced. Before we can apply Sengupta's formula, we must embed $\mathbb{G}$
into an admissible graph $\mathbb{G}^{\prime}$, that is, one that contains the
boundary of $\Sigma$ and each of whose faces is a topological disk. Actually,
by the Jordan curve theorem, all the faces of $\mathbb{G}$ other than $F_{0}$
will automatically be disks. It is then possible to construct $\mathbb{G}%
^{\prime}$ by adding new edges entirely in the closure of $F_{0}.$ (See
Section 1.2 of \cite{LevSurfaces}.) Thus, the faces of $\mathbb{G}^{\prime}$
may be chosen to be of the form $F_{0}^{\prime},F_{1},\ldots,F_{f-1},$ where
$F_{0}^{\prime}$ is a subset of $F_{0}$ having the same area as $F_{0}.$ Let
us divide the edge variables for $\mathbb{G}^{\prime}$ into the edge variables
$\mathbf{x}$ corresponding to the original graph $\mathbb{G}$ and the
remaining edge variables $\mathbf{y}.$ Then integration with respect to the
Yang--Mills measure for the graph $\mathbb{G}^{\prime}$ in $\Sigma$ may be
written as%
\begin{equation}
\int_{K^{e^{\prime}}}f(\mathbf{x},\mathbf{y})~d\mu_{\Sigma}^{\mathbb{G}%
^{\prime}}=\frac{1}{Z}\int_{K^{e^{\prime}-e}}\rho_{\left\vert F_{0}\right\vert
}(\mathrm{hol}_{F_{0}^{\prime}}(\mathbf{x,y}))\int_{K^{e}}f(\mathbf{x}%
,\mathbf{y})~d\mu_{\mathrm{plane}}^{\mathbb{G}}(\mathbf{x})~d\mathbf{y},
\label{SenguptaInSurface}%
\end{equation}
where $e$ and $e^{\prime}$ are the number of edges of $\mathbb{G}$ and
$\mathbb{G}^{\prime}$, respectively. Furthermore, we may write%
\[
\mathrm{hol}_{F_{0}^{\prime}}(\mathbf{x,y})=\mathrm{hol}_{F_{0}}%
(\mathbf{x})g(\mathbf{y})
\]
for some word $g(\mathbf{y})$ in the $\mathbf{y}$ variables. Here, for
notational simplicity, we consider the unconditional Yang--Mills measure, but
a similar argument applies if there are constraints on the holonomies around
the boundary components of $\Sigma.$

In the case that $f$ is the trace of the holonomy of $L,$ we may imitate the
proof of Theorem \ref{shrinkingLim.thm} to obtain
\begin{align*}
&  \mathbb{E}\left\{  \mathrm{tr}(\mathrm{hol}(L))\right\}  =\frac{1}{Z}%
\int_{K^{e^{\prime}-e}}\int_{K^{f-1}}\mathrm{tr}(w_{1}(h_{1},\ldots
,h_{f-1}))\\
&  \times\rho_{\left\vert F_{0}\right\vert }(w_{0}(h_{1},\ldots,h_{f-1}%
)g(\mathbf{y}))\left(  \prod_{i=1}^{f-1}\rho_{\left\vert F_{i}\right\vert
}(h_{i})\right)  dh_{1}~\ldots dh_{f-1}~d\mathbf{y}.
\end{align*}
As in that proof, if we let $t\rightarrow1,$ we obtain%
\begin{equation}
\lim_{t\rightarrow1}\mathbb{E}\left\{  \mathrm{tr}(\mathrm{hol}(L))\right\}
=\frac{1}{Z}\int_{K^{e^{\prime}-e}}\int_{K^{f-1}}\mathrm{tr}(h_{1}^{n}%
)\rho_{c}(h_{1}g(\mathbf{y}))\rho_{a}(h_{1})~dh_{1}~d\mathbf{y},
\label{surfLimNfold}%
\end{equation}
where $n$ is the winding number of $L$ around $F_{1}.$ But the right-hand side
of (\ref{surfLimNfold}) is just Sengupta's formula for the Wilson loop
functional for the loop that winds $n$ times around the boundary of $F_{0}$
(i.e., the outer boundary of $\mathbb{G}$), enclosing areas $a$ and $c.$
\end{proof}

\section{Acknowledgments}

The author thanks Bruce Driver, David Galvin, Todd Kemp, Thierry L\'{e}vy,
Karl Liechty, Andy Putman, Ambar Sengupta, and Steve Zelditch for valuable
discussions. The author is also grateful to Franck Gabriel for a detailed
reading of the entire manuscript, along with many corrections and helpful
suggestions. Finally, the author thanks the two referees, whose careful
reading and thoughtful comments have improved the manuscript immensely.

\end{document}